\documentclass[12pt]{article}
\usepackage{bbm}
\usepackage{mathrsfs}
\usepackage{amsfonts}
\usepackage{graphicx}
\usepackage{amsmath}
\usepackage{amssymb}
\usepackage{dsfont}
\usepackage{float}
\usepackage{natbib}
\usepackage{subcaption}
\usepackage{rotating,color}
\usepackage{xcolor}
\usepackage{booktabs}
\usepackage[colorlinks=true,linkcolor=blue,citecolor=blue,urlcolor=blue]{hyperref}
\usepackage{multirow}
\usepackage{enumitem}
\usepackage{comment}
\usepackage{amsthm}
\usepackage{mathtools}
\usepackage{rotating}
\usepackage{placeins}
\usepackage{bm}

\usepackage{lscape}

\usepackage{amssymb}
\usepackage{threeparttable}

\newtheorem{remark}{Remark}[section]
\newtheorem{theorem}{Theorem}[section]
\newtheorem{assumption}{Assumption}[section]
\newtheorem{lemma}[theorem]{Lemma}
\newtheorem{proposition}[theorem]{Proposition}

\def\bm{\boldsymbol}

\title{Change-Point Detection for Heterogeneous High-Dimensional Functional Time Series}

\author{
Xufei Tang
\and Dan Zhuang\thanks{Corresponding author.}
\and Houlin Zhou\thanks{Co-corresponding author.}
}

\date{\today}

\begin{document}

\maketitle

\begin{abstract}

High-dimensional functional panels consist of temporally ordered curves observed across many subjects and naturally exhibit heterogeneous structural changes. Under sparse subject-level break signals or opposite-signed shifts, traditional mean-aggregated CUSUM procedures may suffer noticeable power loss due to signal attenuation or cancellation induced by cross-sectional averaging. 
We propose a novel Energy--PE statistic, which combines subject-wise squared CUSUM energy aggregation with a generalized power-enhancement component. 
The energy aggregation preserves subject-level evidence under sign-heterogeneous changes, while the power-enhancement component improves sensitivity to sparse weak break signals. Under regularity conditions, we establish the asymptotic behavior of the proposed statistic. We further incorporate a latent group structure and an information-criterion-based clustering algorithm to estimate the unknown group number and membership for heterogeneous break points. Numerical studies and an intraday stock application demonstrate that Energy--PE controls size, improves power under sparse and sign-heterogeneous alternatives, and yields interpretable post-test summaries.

{\it Keywords}: Change-point; Energy aggregation; Power-enhanced testing; Heterogeneous structural breaks; Post-test localization
\end{abstract}

\section{Introduction}
\label{sec:introduction}

Functional time series arise when observations are recorded as curves over a continuum, as in electricity demand profiles, renewable energy generation curves, intraday financial trajectories and environmental monitoring records \citep{Kokoszka12,Yu12,Xu20,Qu21}. In many current applications, curves are observed simultaneously for many subjects, assets or locations, leading to high-dimensional functional panels whose cross-sectional dimension can be comparable to the time span. This setting has motivated recent advances in high-dimensional functional inference, including inference for functional time series and classification of high-dimensional functional data \citep{LiYang23,Xue24}. A central inferential question is whether the mean functions of such curves contain structural breaks over time, since these breaks often correspond to regime shifts, abnormal events or changes in operating patterns.

Methodological work on functional change-point detection has largely focused on testing whether a functional sequence contains a mean break. One line of research reduces each curve to a finite-dimensional representation, often through functional principal components or related projections, and then applies multivariate CUSUM or score-based tests. Representative work includes tests for mean changes \citep{Berkes09}, stability tests for functional autoregressive processes \citep{Hor10}, procedures for dependent functional observations \citep{Aston12}, self-normalized tests for temporally dependent curves \citep{Zhang11} and stationarity tests for functional time series \citep{Hor14}. A complementary line avoids preliminary dimension reduction and constructs test statistics directly from the functional observations. Examples include sequential block bootstrap methods in Hilbert spaces \citep{Sharipov16}, tests for spatiotemporal mean changes \citep{Gromenko17}, full-functional CUSUM procedures \citep{Aue18} and stability tests for functional event observations \citep{Hor22}. These methods have clarified how to detect structural breaks in functional data, but they are mostly developed for a single or low-dimensional collection of functional series.

Beyond global evidence for break existence, a complete change-point analysis also requires localization and structural interpretation. For a single functional sequence, this includes estimating the break location and quantifying its convergence behaviour \citep{Aue09}. For multiple breaks, segmentation-based methods have been developed through dynamic segmentation \citep{Chiou19}, Bayesian wavelet modelling \citep{Li21}, binary segmentation \citep{Rice22}, greedy segmentation \citep{Chen21} and scalable distribution-free procedures \citep{Harris21}; general reviews of offline change-point detection are provided by \cite{Truong20} and \cite{Cho21}. In panel or multi-subject functional data, post-detection inference can be more demanding because different subjects may contain different break locations or no break at all. Related work has considered asynchronous change-point estimation for spatially correlated functional time series \citep{Wang22}, high-dimensional functional CUSUM tests with power enhancement and latent break groups \citep{Li23}, and structural-break detection in high-dimensional functional time-series factor models \citep{XuSuLiuYou26}.

Recent top-journal developments also point to a broader movement toward scalable, high-dimensional and inference-aware change-point methodology. For functional time series, empirical energy-distance methods now allow change-point testing for dependent functional observations and distributional changes beyond mean shifts \citep{BonieceHorvathTrapani25}. In high-dimensional and econometric time series, recent work has developed dimension-agnostic change-point tests that accommodate complex cross-sectional and temporal dependence \citep{GaoWangShao25}, inference for high-dimensional linear-regression changes without exact sparsity \citep{ChoKleyLi25}, and estimating-function approaches for breaks in weak location time-series models \citep{FrancqTrapaniZakoian26}. For offline multiple change-point analysis, narrowest significance pursuit gives finite-sample regions of significant change in linear models \citep{Fryzlewicz24}. In parallel, high-dimensional functional-data methods have advanced rapidly through sparse covariance-function estimation \citep{FangGuoQiao24}, graphical principal component analysis for multivariate functional time series \citep{TanLiangGuanHuang24}, and high-dimensional functional time-series modelling and prediction \citep{ChangFangQiaoYao24}. These contributions are closely aligned with the present goal of retaining functional information while handling high dimensionality, dependence and heterogeneous structural changes.

Despite these developments, several features of high-dimensional functional panels remain difficult to handle within a single framework. Functional curves may be cross-sectionally dependent, subject-specific mean and jump functions may be heterogeneous, and break locations need not be aligned. The signal pattern may also vary across alternatives: some changes are dense but weak, others are sparse but strong, and subject-level jumps may have opposite directions. In such settings, mean-aggregated CUSUM statistics can lose power when opposite-signed changes cancel before detection, whereas purely subject-level screening can overlook broad but weak accumulated evidence. These considerations call for an inferential framework that harnesses subject-level functional information, adapts to sparse strong signals, and connects global testing with interpretable post-test organization of heterogeneous break locations.

Motivated by these considerations, we develop an Energy--PE framework for change-point inference in high-dimensional functional panels with heterogeneous mean changes. Building on the power-enhancement principle of \cite{Fan15} and the high-dimensional functional CUSUM framework of \cite{Li23}, the proposed statistic preserves subject-wise functional CUSUM processes, aggregates their squared \(L^2\) energies, and then adds a generalized enhancement component based on subject-level energy scores. This ordering preserves accumulated functional evidence under heterogeneous jump directions while retaining sensitivity to sparse strong changes. After global rejection, the same subject-level energy scores are used to screen changed subjects, estimate individual break points, recover latent common-break groups and refine group-level locations through pooled estimation.

Overall, this paper develops a unified Energy--PE framework for global testing, subject screening, and break-point organization in high-dimensional functional panels with heterogeneous mean changes. Its main contributions are summarized as follows.
\begin{enumerate}[label=(\arabic*),leftmargin=2.2em,itemsep=0.35em]
\item Motivated by the vulnerability of mean-aggregated CUSUM statistics to heterogeneous and sign-canceling changes, we introduce a novel Energy--PE statistic that redefines how high-dimensional functional break evidence is aggregated. The key obstacle is that cross-sectional averaging can erase subject-level break information before the test is formed, while purely sparse screening can miss broad but weak accumulated evidence. The proposed statistic overcomes this obstacle by first forming subject-wise functional CUSUM processes and pooling their squared \(L^2\) energies, thereby avoiding sign cancellation and accumulating heterogeneous functional evidence. It then adds a generalized smooth enhancement function \(g\) of the subject-level energy scores to amplify sparse pronounced signals. This two-component construction yields a test that is sensitive to dense, sparse and sign-heterogeneous alternatives while preserving functional information before cross-sectional pooling.

\item We provide a unified theoretical foundation for the proposed procedure, covering global testing, subject screening, break localization, latent group recovery and pooled group-level refinement. For the global statistic, the theory establishes its asymptotic behaviour under regularity conditions, shows that the generalized enhancement function \(g\) can enlarge sensitivity to sparse alternatives without changing the leading null behaviour, and characterizes the power contribution of the energy component under accumulated-energy alternatives. For post-test inference, the results give conditions under which changed subjects, heterogeneous break locations and latent common-break groups can be recovered consistently. This theory therefore links detection and structural estimation in a single high-dimensional functional framework rather than treating them as separate tasks.

\item We demonstrate the practical advantages of the method through extensive simulations and an empirical financial application. The simulation study in Section~\ref{sec:simulation} compares Energy--PE with benchmark procedures under different functional data-generating mechanisms, sample sizes, sparsity regimes, signal strengths and jump-direction patterns, showing stable size performance and clear power gains in the challenging sparse and sign-heterogeneous settings. It also evaluates post-test subject identification, break-location estimation, the effect of different choices of \(g\), and latent common-break recovery. The stock-data analysis in Section~\ref{sec:empirical} shows that the proposed workflow can detect a sparse subset of structurally changed functional time series and summarize their heterogeneous break dates through interpretable latent groups.
\end{enumerate}

The remainder of this paper is organized as follows. Section~\ref{sec:model} introduces the high-dimensional functional panel model and the proposed Energy--PE statistic, together with a comparison to the mean-aggregated PE--CUSUM benchmark. Section~\ref{sec:theory} establishes the theoretical properties of the proposed procedure and develops the post-test estimation steps for identifying changed subjects, recovering latent break groups, and refining pooled break locations. Section~\ref{sec:simulation} presents simulation evidence, Section~\ref{sec:empirical} provides an empirical application, and Section~\ref{sec:conclusion} concludes. Technical details and proofs are collected in Appendices~\ref{app:A}--\ref{app:additional-experiment}.

We use the following notation throughout. For a square-integrable function \(f\) on the compact domain \(C\), \(\|f\|_{L^2(C)}^2=\int_C f^2(u)\,du\). The indicator function is denoted by \(I(\cdot)\). For real numbers \(a\) and \(b\), \(a\vee b=\max(a,b)\), and \(\lfloor a\rfloor\) denotes the integer part of \(a\). The symbols \(O_P(\cdot)\), \(o_P(\cdot)\), and ``w.p.a.1'' have their usual asymptotic meanings. Unless otherwise stated, limits are taken jointly as \(N,T\to\infty\).

\section{Model and Test Statistic}
\label{sec:model}

This section introduces the model and constructs the proposed Energy--PE statistic. We first formulate a high-dimensional functional panel model that allows subject-specific mean functions, jump functions, and break locations to vary across subjects. We then build the statistic in three steps. The subject-wise Energy--CUSUM component aggregates squared functional CUSUM energies and is therefore robust to sign heterogeneity; the sparse-signal enhancement component amplifies unusually large subject-level energies; and the final Energy--PE statistic combines these two sources of evidence in an additive form. The section closes with a comparison to mean-aggregated PE--CUSUM, highlighting why the proposed ordering of squaring and cross-sectional aggregation avoids sign cancellation.

Suppose that we observe a panel of functional time series
${\mathbf X}_t=(X_{1t},\cdots,X_{Nt})^{\intercal}$,
$t=1,\cdots,T$, where
$X_{it}=\left(X_{it}(u): u\in C\right)$ and $C$ is a compact functional domain.
For the $i$th subject, the observations are generated from the mean-shift model
\begin{equation}\label{eq2.1}
X_{it}=\mu_i+\delta_i I\left(t>\tau_i\right)+\varepsilon_{it}.
\end{equation}
Here $\mu_i=(\mu_i(u): u\in C)$ is the pre-break mean function, $\delta_i=(\delta_i(u): u\in C)$ is the jump function, $\tau_i$ is the break point, and $\varepsilon_{it}=(\varepsilon_{it}(u): u\in C)$ is a stationary functional error process over time. The mean functions, jump functions, and break locations are all allowed to vary across subjects, which accommodates both cross-sectional heterogeneity in signal shape and heterogeneity in break timing. The global testing problem is
\begin{equation}\label{eq2.2}
H_0:\ \delta_i=0,\ i=1,\cdots,N,\ \ \ \ {\rm versus}\ \ \ \ H_A:\ \delta_i\neq 0\ {\rm for\ some}\ i.
\end{equation}
Under $H_0$, model~\eqref{eq2.1} reduces to the stationary model $X_{it}=\mu_i+\varepsilon_{it}$. The definitions below use the same subject-wise CUSUM energy scores for three purposes: constructing the global test statistic, screening changed subjects after rejection, and estimating individual break locations.

\subsection{Subject-Wise Energy--CUSUM Component}
\label{subsec:energy-cusum}

For each subject $i$, define the functional CUSUM process
\begin{equation}
C_{i,T}(x,u)
=
\frac{1}{\sqrt{T}}
\left[
\sum_{t=1}^{\lfloor Tx\rfloor} X_{it}(u)
-
\frac{\lfloor Tx\rfloor}{T}\sum_{t=1}^{T} X_{it}(u)
\right],
\qquad 0\le x\le 1.
\label{eq:cusum_process}
\end{equation}
The \(L^2\) energy of \(C_{i,T}\) measures the evidence for a mean change in subject \(i\) at the candidate fraction \(x\). We aggregate these energies across subjects by
\begin{equation}
E_T(x)
=
\frac{1}{N}\sum_{i=1}^{N}\int_C C_{i,T}^2(x,u)\,du,
\qquad 0\le x\le 1,
\label{eq:energy_curve}
\end{equation}
and define the Energy--CUSUM component
\begin{equation}
Z_T^{\mathrm{Energy}}
=
\sup_{0\le x\le 1} E_T(x).
\label{eq:energy_stat}
\end{equation}

The order of operations in \eqref{eq:energy_curve} is essential: each subject-specific CUSUM process is squared before the cross-sectional average is taken. Consequently, changes with opposite directions reinforce each other through their squared energies rather than canceling through a linear average. This feature makes \(Z_T^{\mathrm{Energy}}\) well suited to dense and moderately dense alternatives, including symmetric sign-canceling configurations.

\subsection{Sparse-Signal Enhancement}
\label{subsec:sparse-enhancement}

The energy aggregation above is most effective when signals accumulate across many subjects. To retain sensitivity when only a small number of subjects have pronounced changes, we also use the subject-level maximal energy
\begin{equation}
Y_{iT}
=
\sup_{0\le x\le 1}\int_C C_{i,T}^2(x,u)\,du,
\label{eq:subject_energy}
\end{equation}
where $C_{i,T}$ is defined in \eqref{eq:cusum_process}. For a threshold $\xi_{NT}$, scale parameter $\tau_{NT}>0$, normalization factor $r_{NT}>0$, and measurable nondecreasing score function $g:\mathbb R\to[0,\infty)$, define the sparse-signal enhancement term
\begin{equation}
\mathcal Z_{NT}^{g}
=
r_{NT}\sum_{i=1}^{N}
g\!\left(\frac{Y_{iT}-\xi_{NT}}{\tau_{NT}}\right).
\label{eq:enhancement_general}
\end{equation}

This construction separates the three roles of the enhancement step: \(\xi_{NT}\) determines which subject-level energies are unusually large, \(\tau_{NT}\) controls the scale of the exceedance, and \(g\) determines how exceedances are counted or weighted. Writing
\begin{equation}
S_{iT}=\frac{Y_{iT}-\xi_{NT}}{\tau_{NT}},
\label{eq:standardized_energy}
\end{equation}
the enhancement term is the normalized total score
\[
\mathcal Z_{NT}^{g}
=
r_{NT}\sum_{i=1}^{N}g(S_{iT}).
\]
Several useful choices of \(g\) are:
\begin{itemize}[leftmargin=3.0em,labelwidth=2.2em,labelsep=0.4em,itemsep=0.35em]
\item[(1)] \textbf{Hard-threshold score:}
\[
g(z)=\mathbf 1\{z>0\}.
\]

\item[(2)] \textbf{Excess-sum score:}
\[
g(z)=z_+=\max\{z,0\}.
\]

\item[(3)] \textbf{Smooth thresholding scores:}
\[
g(z)=\log(1+e^z)
\qquad \text{or} \qquad
g(z)=(1+e^{-z})^{-1}.
\]
\end{itemize}
Thus hard-threshold, excess-sum, and smooth thresholding rules are all covered by the same total-score construction. The theoretical results below are stated for \(\mathcal Z_{NT}^{g}\); the hard-threshold rule is recovered by taking \(g(z)=\mathbf 1\{z>0\}\).

\begin{remark}[Choice of the enhancement score]
The indicator score $g(z)=\mathbf 1\{z>0\}$ records only whether a subject-specific CUSUM energy exceeds the threshold. With $r_{NT}=\sqrt{N\vee T}$ and $\tau_{NT}=1$, it gives the hard-threshold enhancement
\[
\mathcal Z_{NT}^{\diamond}
=
\sqrt{N\vee T}\sum_{i=1}^{N}\mathbf 1\!\left(Y_{iT}>\xi_{NT}\right).
\]
This choice is simple and often convenient for proving null negligibility, but it discards the size of the exceedance. The positive-part score $g(z)=z_+$ instead yields an excess-sum statistic,
\[
\mathcal Z_{NT}^{g}
=
r_{NT}\sum_{i=1}^{N}
\left(\frac{Y_{iT}-\xi_{NT}}{\tau_{NT}}\right)_+,
\]
or equivalently $\widetilde r_{NT}\sum_{i=1}^{N}(Y_{iT}-\xi_{NT})_+$ after absorbing $\tau_{NT}$ into the normalization. Smooth scores such as $\log(1+e^z)$ or $(1+e^{-z})^{-1}$ interpolate continuously around the threshold and may be numerically more stable when some $Y_{iT}$'s are close to $\xi_{NT}$, although their null analysis requires tail-expectation rather than pure exceedance-probability control.
\end{remark}

\subsection{Energy--PE Statistic}
\label{subsec:energy-pe}

The proposed global test statistic combines the subject-wise energy aggregation and the sparse-signal enhancement in an additive form. This choice reflects the two ways in which evidence can appear in a high-dimensional functional panel: a broad accumulation of moderate subject-level energies, or a small number of unusually large subject-level energies. We define
\begin{equation}
\widehat Z_{NT}^{\mathrm{EPE}}
=
Z_T^{\mathrm{Energy}}+\mathcal Z_{NT}^{g}.
\label{eq:new_stat}
\end{equation}
The Energy--CUSUM term \(Z_T^{\mathrm{Energy}}\) provides robustness to heterogeneous signs and gains power when many subjects contribute moderate evidence. The enhancement term \(\mathcal Z_{NT}^{g}\) contributes little under the null but can dominate when a small set of subjects has sufficiently large CUSUM energies. Hence \(\widehat Z_{NT}^{\mathrm{EPE}}\) is designed to adapt to both dense sign-heterogeneous regimes and sparse strong-signal regimes without requiring the sparsity level to be specified in advance.

From a high-dimensional $\ell^2$ inference perspective, the null distribution of the energy term is characterized through a Gaussian approximation to the full panel CUSUM process rather than through the Brownian-bridge limit of a linearly pooled CUSUM. The enhancement term is calibrated to be asymptotically negligible under the null, so the first-order null behavior of \(\widehat Z_{NT}^{\mathrm{EPE}}\) remains governed by the calibrated Energy--CUSUM component, while its power can be enlarged under sparse alternatives.

\begin{remark}[Difference from PE--CUSUM]
The benchmark PE--CUSUM method first forms a cross-sectional average functional CUSUM and then adds a subject-level hard-threshold enhancement term. By contrast, the proposed Energy--PE statistic first keeps the subject-specific CUSUM processes, squares their \(L^2\) energies, and only then aggregates across subjects. This order of operations is important: if half of the changed subjects have jump function \(\delta(\cdot)\) and the other half have \(-\delta(\cdot)\), the mean jump can vanish even though the total squared jump energy is positive. Figure~\ref{fig:counterexample} illustrates this sign-cancellation pattern. The detailed construction and proof are deferred to Appendix~\ref{app:A}.
\end{remark}

\begin{figure}[!htbp]
    \centering
    \includegraphics[width=0.85\textwidth]{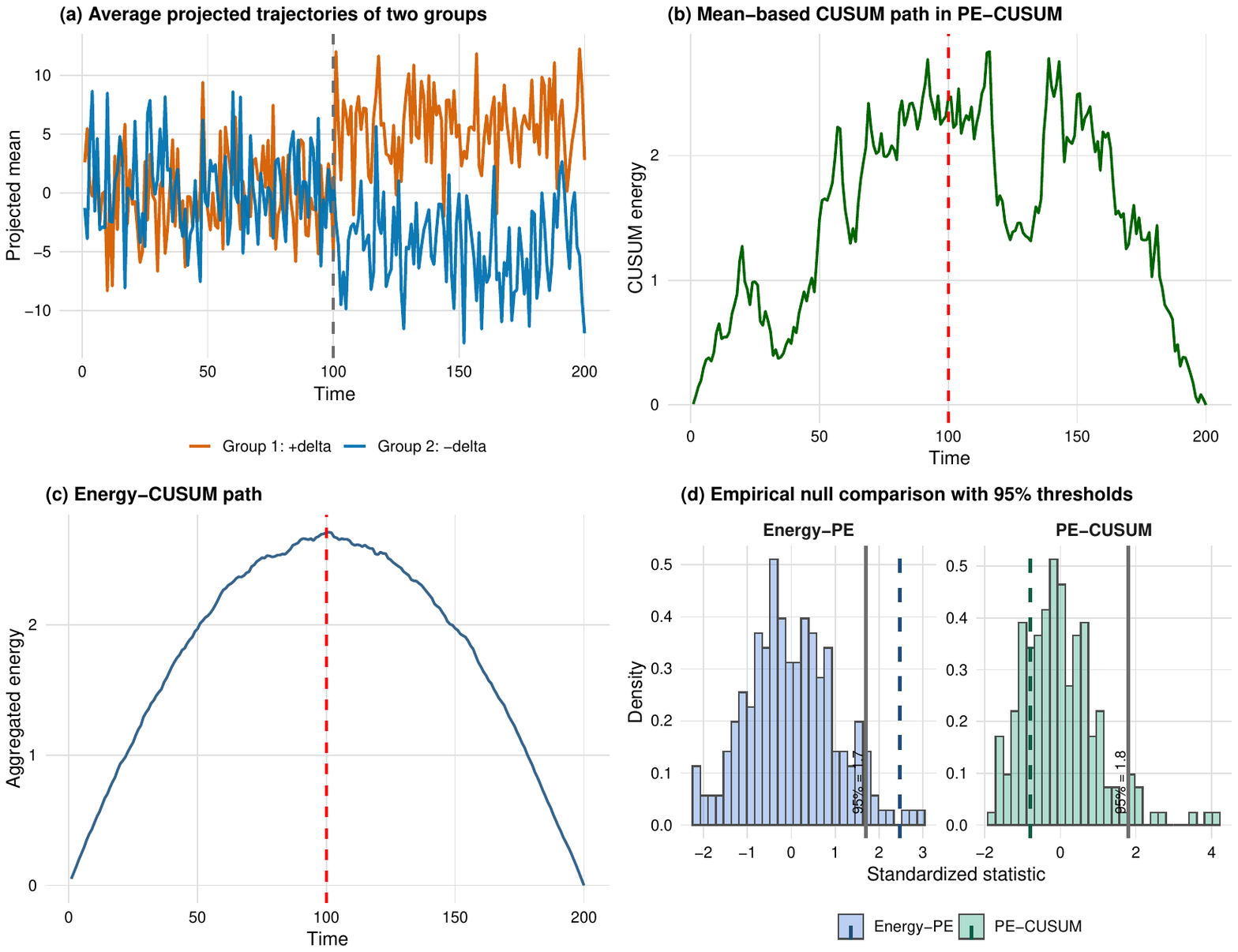}
    \caption{Subject-wise energy aggregation preserves opposite-signed break signals. The sign-cancellation counterexample is generated with $T=200$, $N=400$, and $\delta=0.1$. (a) Average projected trajectories for two subject groups with opposite-sign breaks. (b) The mean-based CUSUM path used by PE--CUSUM, which is weakened by cross-sectional cancellation. (c) The proposed Energy--CUSUM path, which aggregates squared subject-specific break energies and attains a clear maximum near the true break point. (d) Empirical null comparison based on standardized statistics. In each panel, the dashed line denotes the observed statistic and the solid vertical line denotes the empirical 95\% threshold under $H_0$.}
    \label{fig:counterexample}
\end{figure}
\FloatBarrier

\section{Theoretical Properties and Post-Test Estimation}
\label{sec:theory}

This section develops the asymptotic foundation for the proposed Energy--PE procedure. We first state the regularity and signal conditions needed for the global test. We then establish the asymptotic properties of the statistic, including null calibration, power under accumulated energy alternatives, and the contribution of the generalized enhancement term. Finally, we study the post-test estimation problem: the subject-wise energy scores are used to screen changed subjects, localize their preliminary break points, recover latent common-break groups, and refine group-level locations by pooled estimation.

\subsection{Assumptions for the Global Test}
\label{subsec:global-assumptions}

We first state the conditions used to analyze the global statistic. The first assumption controls the stochastic behavior of the subject-wise CUSUM energy process.

\begin{assumption}[Regularity for the Energy--CUSUM process]
\label{ass:regularity}
\normalfont
\leavevmode\par\vspace{0.25em}
\begin{enumerate}[label=(\arabic*),leftmargin=2.2em,itemsep=0.35em]

\item \textit{Dependence.} For each $i=1,\ldots,N$, the error process admits the linear representation
\[
\varepsilon_{it}
=
\sum_{j=0}^{\infty}A_{ij}\eta_{i,t-j},
\]
where $\eta_t=(\eta_{1t},\ldots,\eta_{Nt})$ are independent over $t$, centered $L^2(C)^N$-valued random elements with uniformly bounded fourth moments. The bounded linear operators $A_{ij}$ satisfy
\[
\sum_{j=0}^{\infty} j \max_{1\le i\le N}\|A_{ij}\|_{\mathcal O}<\infty.
\]

\item Define the product Hilbert space
\[
\mathcal H_N=\{h=(h_1,\ldots,h_N):h_i\in L^2(C)\},
\qquad
\|h\|_{\mathcal H_N}^2=\frac1N\sum_{i=1}^N\|h_i\|_{L^2(C)}^2,
\]
and write $\mathcal C_{N,T}(x)=(C_{1,T}(x),\ldots,C_{N,T}(x))$. Under $H_0$, the second moment of the energy process is uniformly bounded and its cross-sectional average is stable:
\[
\sup_{0\le x\le1}
\left|
\frac{1}{N}\sum_{i=1}^N
\left\{
\int_C C_{i,T}^2(x,u)\,du
-
E\int_C C_{i,T}^2(x,u)\,du
\right\}
\right|
=o_P(1),
\]
and
\[
\sup_{0\le x\le1}\frac{1}{N}\sum_{i=1}^N
E\int_C C_{i,T}^2(x,u)\,du
<\infty.
\]

\item There exists $\kappa>0$ such that
\[
N=O(T^\kappa).
\]
\end{enumerate}
\end{assumption}

\begin{remark}[Regularity conditions]
Assumption \ref{ass:regularity} is stated for the statistic actually used in \eqref{eq:energy_stat}. Because \(Z_T^{\mathrm{Energy}}\) averages subject-wise squared CUSUM energies, its null behavior differs from that of a squared pooled CUSUM process. The product-space normalization in Assumption \ref{ass:regularity}(2) therefore provides the scale on which the Gaussian approximation in Theorem~\ref{thm:3.1} is formulated.
\end{remark}

The next condition describes alternatives under which subject-level CUSUM energies accumulate across the panel. It includes dense alternatives and sign-canceling alternatives that are invisible after linear cross-sectional averaging.

\begin{assumption}[Dense and sign-canceling alternatives]
\label{ass:dense}
\normalfont
For dense or sign-canceling alternatives, there exists a subset
$I_N\subset\{1,\ldots,N\}$ and constants $r_0\in(0,1)$, $\eta\in(0,1/2)$ such that
\begin{equation}
\max_{i\in I_N}\left|\frac{\tau_i}{T}-r_0\right|\to 0,
\qquad
\eta\le r_0\le 1-\eta,
\label{eq:3.B4}
\end{equation}
and
\begin{equation}
\frac{T}{N}\sum_{i\in I_N}\|\delta_i\|_{L^2(C)}^2\to\infty.
\label{eq:3.B5}
\end{equation}
\end{assumption}

\begin{remark}[Dense and sign-canceling alternatives]
Assumption \ref{ass:dense} is the signal accumulation condition for the Energy--CUSUM component. The alignment condition \eqref{eq:3.B4} ensures that the deterministic CUSUM signals are evaluated on a common time scale, while \eqref{eq:3.B5} requires the accumulated squared jump energy to dominate stochastic fluctuation.
\end{remark}

Finally, the enhancement term requires one condition under the null and another under sparse strong alternatives.

\begin{assumption}[Calibration and sparse-power conditions for the enhancement term]
\normalfont
\phantomsection\label{ass:enhancement}
Here \(S_{iT}=(Y_{iT}-\xi_{NT})/\tau_{NT}\), and \(g:\mathbb R\to[0,\infty)\) is the nonnegative nondecreasing score function used in \eqref{eq:enhancement_general}.
\leavevmode\par\vspace{0.25em}
\begin{enumerate}[label=(\arabic*),leftmargin=2.2em,itemsep=0.35em]
\item \textit{Null control.} The threshold $\xi_{NT}$, scale parameter $\tau_{NT}$, normalization factor $r_{NT}$, and transformation $g$ satisfy
\begin{equation}
\begin{gathered}
\xi_{NT}\to\infty,\\
Nr_{NT}\sup_i E\{g(S_{iT})\mid H_0\}\to0.
\end{gathered}
\label{eq:3.B6}
\end{equation}

\item For sparse strong alternatives, there exists a subset $J_N\subset\{1,\ldots,N\}$ and a positive sequence $b_{NT}$ such that
\begin{equation}
\inf_{i\in J_N}
P\!\left(
Y_{iT}\ge \xi_{NT}+\tau_{NT}b_{NT}
\right)\to1
\label{eq:3.B7}
\end{equation}
and
\begin{equation}
r_{NT}|J_N|\,g(b_{NT})\to\infty.
\label{eq:3.B8}
\end{equation}
\end{enumerate}
\end{assumption}

\begin{remark}[Enhancement calibration]
Assumption \ref{ass:enhancement}(1) is the null negligibility condition. For the indicator rule
\[
g(z)=\mathbf 1\{z>0\},
\]
it reduces to the exceedance-probability control
\[
Nr_{NT}\sup_{1\le i\le N}P(Y_{iT}>\xi_{NT}\mid H_0)\to0.
\]
For \(g(z)=z_+\), the same condition becomes the tail-expectation bound
\[
Nr_{NT}\sup_{1\le i\le N}
E\!\left[
\left(\frac{Y_{iT}-\xi_{NT}}{\tau_{NT}}\right)_+
\middle|H_0
\right]\to0.
\]
Assumption \ref{ass:enhancement}(2) is the corresponding sparse-signal condition. It requires affected subjects to exceed the threshold by a non-negligible amount and requires their aggregate transformed exceedance to diverge.
\end{remark}

\subsection{Asymptotic Properties of the Statistic}
\label{subsec:global-validity-power}

We next establish validity of the global test. The first result gives the null calibration of the Energy--CUSUM component.

\begin{theorem}[Null calibration of Energy--CUSUM]
\label{thm:3.1}
Suppose that $H_0$ and Assumption \ref{ass:regularity} hold. Let
\[
G_N(x)=(G_{1,N}(x),\ldots,G_{N,N}(x)),\qquad 0\le x\le1,
\]
be a centered Gaussian bridge in $\mathcal H_N$ with the same long-run covariance structure as the panel CUSUM process under $H_0$, and define
\begin{equation}
Z_{NT}^{G}
=
\sup_{0\le x\le1}
\frac1N\sum_{i=1}^N\|G_{i,N}(x)\|_{L^2(C)}^2,
\label{eq:3.4}
\end{equation}
where the covariance is understood in the product-space norm of Assumption \ref{ass:regularity}(2). Assume, in addition, that the Gaussian approximation
\begin{equation}
\sup_{z\in\mathbb R}
\left|
P_{H_0}\!\left(Z_T^{\mathrm{Energy}}\le z\right)
-
P\!\left(Z_{NT}^{G}\le z\right)
\right|\to0
\label{eq:3.GA}
\end{equation}
holds. Let $Z_{NT}^{*,\mathrm{Energy}}$ denote the multiplier bootstrap analogue of $Z_T^{\mathrm{Energy}}$, and let $c_{NT,\alpha}^{*}$ be its conditional upper $\alpha$-level critical value. If the bootstrap consistently estimates the Gaussian law,
\begin{equation}
\sup_{z\in\mathbb R}
\left|
P^*\!\left(Z_{NT}^{*,\mathrm{Energy}}\le z\right)
-
P\!\left(Z_{NT}^{G}\le z\right)
\right|\xrightarrow{P}0,
\label{eq:3.bootstrap}
\end{equation}
the bootstrap critical values are tight, $c_{NT,\alpha}^{*}=O_P(1)$, and the Gaussian approximation is asymptotically anti-concentrated in the sense that, for every deterministic sequence \(\eta_{NT}\downarrow0\),
\[
\sup_{z\in\mathbb R}
P\!\left(|Z_{NT}^{G}-z|\le \eta_{NT}\right)\to0,
\]
then
\[
\limsup_{N,T\to\infty}
P_{H_0}\!\left(Z_T^{\mathrm{Energy}}>c_{NT,\alpha}^{*}\right)
\le \alpha.
\]
\end{theorem}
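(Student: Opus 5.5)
The argument is a standard sandwich. The Gaussian approximation \eqref{eq:3.GA} and the bootstrap consistency \eqref{eq:3.bootstrap} both compare their laws with the same reference law of $Z_{NT}^{G}$. So the bootstrap quantile $c_{NT,\alpha}^{*}$ should lie close to the deterministic upper $\alpha$-quantile of $Z_{NT}^{G}$, and anti-concentration turns closeness of quantiles into closeness of probabilities.

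Write $F_{NT}(z)=P(Z_{NT}^{G}\le z)$, $F^*_{NT}(z)=P^*(Z_{NT}^{*,\mathrm{Energy}}\le z)$ and $\Delta_{NT}=\sup_z|F^*_{NT}(z)-F_{NT}(z)|$, so that $\Delta_{NT}\xrightarrow{P}0$. Choose a deterministic sequence $\epsilon_{NT}\downarrow0$ with $P(\Delta_{NT}>\epsilon_{NT})\to0$; this exists by a diagonal argument. Let $c_{NT}(\beta)$ denote the upper $\beta$-quantile of $F_{NT}$.

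\textbf{Step 1 (quantile comparison).} On the event $\{\Delta_{NT}\le\epsilon_{NT}\}$, the definition of $c^*_{NT,\alpha}$ gives $F^*_{NT}(c^*_{NT,\alpha})\ge 1-\alpha$. Hence $F_{NT}(c^*_{NT,\alpha})\ge 1-\alpha-\epsilon_{NT}$, and therefore $c^*_{NT,\alpha}\ge c_{NT}(\alpha+\epsilon_{NT})$.

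\textbf{Step 2 (transfer to the statistic).} Using \eqref{eq:3.GA},
\[
P_{H_0}\bigl(Z_T^{\mathrm{Energy}}>c^*_{NT,\alpha}\bigr)\le P_{H_0}\bigl(Z_T^{\mathrm{Energy}}>c_{NT}(\alpha+\epsilon_{NT})\bigr)+P(\Delta_{NT}>\epsilon_{NT}).
\]
The first term is at most $1-F_{NT}(c_{NT}(\alpha+\epsilon_{NT}))+o(1)$, which is bounded by $\alpha+\epsilon_{NT}+o(1)$. The second term is $o(1)$. Taking $\limsup$ yields the claim.

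\textbf{Main obstacle: atoms of $F_{NT}$.} The quantile inequality in Step 1 uses $F_{NT}(c_{NT}(\beta))\le 1-\beta$ up to an atom at the quantile, that is, up to a jump of $F_{NT}$. This is where the anti-concentration hypothesis is used. With $\eta_{NT}$ arbitrary, $\sup_z P(|Z_{NT}^{G}-z|\le\eta_{NT})\to0$ shows that the atoms, and more generally the mass in shrinking windows, are uniformly negligible. So $F_{NT}$ is asymptotically continuous uniformly in $z$, and $1-F_{NT}(c_{NT}(\beta))=\beta+o(1)$ uniformly in $\beta$.

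\textbf{Role of tightness.} Tightness $c^*_{NT,\alpha}=O_P(1)$ lets one restrict all suprema to compact ranges. It is also needed if one prefers to compare $c^*_{NT,\alpha}$ and $c_{NT}(\alpha)$ directly via a small perturbation window $\pm\eta_{NT}$ rather than via quantile levels. In that case the bound $P(|Z_{NT}^{G}-c|\le\eta_{NT})\to0$ controls the event where $Z_T^{\mathrm{Energy}}$ falls between the two critical values.
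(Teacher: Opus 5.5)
Your sandwich argument is correct and is essentially the paper's proof made explicit: the paper compares both \eqref{eq:3.GA} and \eqref{eq:3.bootstrap} to the common Gaussian reference law and invokes anti-concentration to pass from the Gaussian quantile to $c_{NT,\alpha}^{*}$. One correction to your commentary: with $c_{NT}(\beta)=\inf\{z:F_{NT}(z)\ge 1-\beta\}$, Steps 1--2 use, beyond \eqref{eq:3.GA} and \eqref{eq:3.bootstrap}, only $F^*_{NT}(c^*_{NT,\alpha})\ge 1-\alpha$ and $F_{NT}(c_{NT}(\beta))\ge 1-\beta$, both of which follow from right-continuity alone, so neither anti-concentration nor tightness is needed for the one-sided bound and your ``main obstacle'' paragraph misplaces their role (atoms matter only for a matching lower bound, i.e.\ exact size, or for absorbing the $o_P(1)$ perturbation in Theorem~\ref{thm:3.3}).
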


The next theorem records the power contribution of the energy component. It applies whenever the squared subject-wise jump energy accumulates across the panel, even if the signed average jump is close to zero.

\begin{theorem}[Power under accumulated energy alternatives]
\label{thm:3.2}
Suppose that $H_A$ and Assumptions \ref{ass:regularity} and \ref{ass:dense} hold. Then, as $N,T\to\infty$ jointly,
\begin{equation}
Z_T^{\mathrm{Energy}}\xrightarrow{P}\infty.
\label{eq:3.5}
\end{equation}
In particular, the above consistency continues to hold under symmetric sign-canceling alternatives for which the linearly aggregated mean shift vanishes but \eqref{eq:3.B5} remains satisfied.
\end{theorem}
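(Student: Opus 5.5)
The plan is to lower-bound $Z_T^{\mathrm{Energy}}$ by its value at the single fraction $x_0=r_0$ and split each subject's CUSUM into a deterministic drift and a null-type noise part. Under model \eqref{eq2.1}, the centering in \eqref{eq:cusum_process} removes $\mu_i$. Hence
\[
C_{i,T}(x,u)=D_{i,T}(x)\,\delta_i(u)+C_{i,T}^{0}(x,u),
\]
where $C^0_{i,T}$ is the CUSUM of the errors $\varepsilon_{it}$ alone, which has the same law as under $H_0$. The drift factor is
\[
D_{i,T}(x)=\frac{1}{\sqrt T}\Bigl[(\lfloor Tx\rfloor-\tau_i)_+-\frac{\lfloor Tx\rfloor (T-\tau_i)}{T}\Bigr].
\]
At $x=r_0$ and for $i\in I_N$, condition \eqref{eq:3.B4} gives $\tau_i/T\to r_0$ uniformly. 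So $D_{i,T}(r_0)=-\sqrt T\, r_0(1-r_0)+o(\sqrt T)$ uniformly in $i\in I_N$. Since $\eta\le r_0\le 1-\eta$, this yields $D_{i,T}^2(r_0)\ge c\,T$ for a constant $c>0$ and all large $T$.

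Next, use $(a+b)^2\ge a^2/2-b^2$ pointwise in $u$, integrate over $C$, and sum over $i\in I_N$. Dropping the nonnegative terms with $i\notin I_N$ gives
\[
Z_T^{\mathrm{Energy}}\ge E_T(r_0)\ge \frac{c\,T}{2N}\sum_{i\in I_N}\|\delta_i\|_{L^2(C)}^2-\frac1N\sum_{i=1}^N\|C^0_{i,T}(r_0)\|_{L^2(C)}^2 .
\]
The first term diverges by \eqref{eq:3.B5}. For the second term, Assumption~\ref{ass:regularity}(2) applies to the error CUSUM. Its expectation is bounded uniformly in $x$, and its deviation from that expectation is $o_P(1)$. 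The noise average is therefore $O_P(1)$ (Markov's inequality alone would also suffice). Consequently $Z_T^{\mathrm{Energy}}\xrightarrow{P}\infty$, which is \eqref{eq:3.5}.

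For the sign-canceling claim, the argument above never uses signs. The signal enters only through $\sum_{i\in I_N}\|\delta_i\|^2$, because each subject is squared before averaging. A vanishing cross-sectional mean $N^{-1}\sum_i\delta_i$ is therefore irrelevant as long as \eqref{eq:3.B5} holds. The case of $\delta_i=\pm\delta$ split evenly is an immediate instance.

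The main obstacle is transferring the null-moment condition of Assumption~\ref{ass:regularity}(2) to $C^0_{i,T}$ under $H_A$. This requires noting that the decomposition is exact and that the error process is unchanged by the alternative. Failing that, I would bound $E\|C^0_{i,T}(r_0)\|^2$ directly from the linear representation in Assumption~\ref{ass:regularity}(1). The summability $\sum_j j\max_i\|A_{ij}\|_{\mathcal O}<\infty$, together with the bounded fourth moments, gives a long-run variance bound uniform in $i$. That bound makes the averaged noise energy $O_P(1)$.

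A secondary point is the uniformity of $D_{i,T}(r_0)$ in $i\in I_N$. It follows because $D_{i,T}(r_0)$ is Lipschitz in $\tau_i/T$ up to $O(T^{-1/2})$ rounding.
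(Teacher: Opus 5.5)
Your proposal is correct and follows essentially the same route as the paper's proof. You evaluate at $x_0=r_0$, split $C_{i,T}$ into the deterministic drift and the error-only CUSUM, and apply $\|a+b\|^2\ge \frac12\|a\|^2-\|b\|^2$. You then bound the drift below by $c\,T\|\delta_i\|_{L^2(C)}^2$ uniformly over $i\in I_N$, as in Lemma~\ref{lem:B.3}. Finally, you show the averaged noise energy is $O_P(1)$ through a uniform second-moment bound and Markov's inequality, as in Lemma~\ref{lem:B.2}. Your observation that the error CUSUM under $H_A$ has exactly its null law, and your fallback long-run-variance bound from Assumption~\ref{ass:regularity}(1), just spell out what the paper's Lemma~\ref{lem:B.2} asserts in one line.
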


Theorem~\ref{thm:3.2} explains the main departure from mean-aggregated PE--CUSUM procedures such as \cite{Li23}: quadratic subject-wise aggregation preserves energy in sign-canceling alternatives that would be removed by linear averaging. The enhancement term plays a different role, contributing primarily under sparse strong alternatives. The two terms are therefore calibrated so that the enhancement is asymptotically negligible under \(H_0\), but can enlarge the power region under alternatives not covered by energy accumulation alone.

\begin{theorem}[Null behavior of Energy--PE]
\label{thm:3.3}
Suppose that the conditions of Theorem \ref{thm:3.1} and Assumption \ref{ass:enhancement}(1) hold. Then, as $N,T\to\infty$ jointly,
\begin{equation}
\mathcal Z_{NT}^{g}=o_P(1),
\qquad
\widehat Z_{NT}^{\mathrm{EPE}}
=
Z_T^{\mathrm{Energy}}+o_P(1).
\label{eq:3.7}
\end{equation}
Consequently, the test that rejects $H_0$ when $\widehat Z_{NT}^{\mathrm{EPE}}>c_{NT,\alpha}^{*}$ has asymptotic size at most $\alpha$.
\end{theorem}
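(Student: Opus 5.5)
The plan has two steps: show that the enhancement term vanishes in probability under $H_0$ by a first-moment (Markov) bound, then transfer the size guarantee of Theorem~\ref{thm:3.1} to $\widehat Z_{NT}^{\mathrm{EPE}}$ by a perturbation argument that uses the anti-concentration condition.

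\textbf{Step 1: negligibility of the enhancement term.} Since $g\ge0$ and $r_{NT}>0$, the term $\mathcal Z_{NT}^{g}$ is a nonnegative random variable. Its null mean is bounded by
\[
E\{\mathcal Z_{NT}^{g}\mid H_0\}=r_{NT}\sum_{i=1}^N E\{g(S_{iT})\mid H_0\}\le N r_{NT}\sup_i E\{g(S_{iT})\mid H_0\},
\]
which tends to zero by Assumption~\ref{ass:enhancement}(1). Markov's inequality then gives $P(\mathcal Z_{NT}^{g}>\epsilon)\le \epsilon^{-1}E\{\mathcal Z_{NT}^{g}\mid H_0\}\to0$ for every $\epsilon>0$. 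Hence $\mathcal Z_{NT}^{g}=o_P(1)$, and the identity $\widehat Z_{NT}^{\mathrm{EPE}}=Z_T^{\mathrm{Energy}}+o_P(1)$ follows directly from definition \eqref{eq:new_stat}. No independence across $i$ is needed, because only expectations are summed.

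\textbf{Step 2: size control.} Because $\mathcal Z_{NT}^{g}\ge0$, we have $\widehat Z_{NT}^{\mathrm{EPE}}\ge Z_T^{\mathrm{Energy}}$, so the result cannot be read off from Theorem~\ref{thm:3.1} by monotonicity. Instead, I would choose a deterministic sequence $\eta_{NT}\downarrow0$ that decays slowly enough that $P(\mathcal Z_{NT}^{g}>\eta_{NT})\to0$. This is possible by taking, for instance,
\[
\eta_{NT}=\bigl(N r_{NT}\sup_i E\{g(S_{iT})\mid H_0\}\bigr)^{1/2}\vee \frac{1}{\log T}
\]
and applying Markov's inequality as in Step 1. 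With this choice,
\[
P_{H_0}(\widehat Z_{NT}^{\mathrm{EPE}}>c^*_{NT,\alpha})\le P_{H_0}(Z_T^{\mathrm{Energy}}>c^*_{NT,\alpha}-\eta_{NT})+o(1),
\]
and the remaining task is to show that the shift by $\eta_{NT}$ costs only $o(1)$ in probability.

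\textbf{Main obstacle: the shifted threshold is random.} The difficulty is that $c^*_{NT,\alpha}$ is data dependent while the anti-concentration condition is stated for deterministic points. I would handle this as follows.
\begin{itemize}
\item Let $c_{NT,\alpha}^G$ denote the upper $\alpha$ quantile of $Z_{NT}^G$. The bootstrap consistency \eqref{eq:3.bootstrap}, together with anti-concentration, implies that $c^*_{NT,\alpha}\ge c^G_{NT,\alpha+\epsilon_{NT}}$ with probability tending to one, for some deterministic $\epsilon_{NT}\to0$. This is the standard quantile comparison, and tightness of $c^*_{NT,\alpha}$ ensures all quantities stay in a compact range.
\item On this event,
\[
P(Z_T^{\mathrm{Energy}}>c^*_{NT,\alpha}-\eta_{NT})\le P(Z_T^{\mathrm{Energy}}>c^G_{NT,\alpha+\epsilon_{NT}}-\eta_{NT})+o(1).
\]
\item The Gaussian approximation \eqref{eq:3.GA} replaces $Z_T^{\mathrm{Energy}}$ by $Z^G_{NT}$ uniformly in $z$, at cost $o(1)$.
\item Anti-concentration, applied at the deterministic point $z=c^G_{NT,\alpha+\epsilon_{NT}}$, removes the shift $\eta_{NT}$ at cost $o(1)$.
\end{itemize}
This leaves the bound $\alpha+\epsilon_{NT}+o(1)$, and taking $\limsup$ yields asymptotic size at most $\alpha$. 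The quantile comparison in the first item is the step I expect to require the most care. If it becomes delicate, a fallback is to reuse the argument behind Theorem~\ref{thm:3.1} directly, with the threshold $c^*_{NT,\alpha}-\eta_{NT}$ in place of $c^*_{NT,\alpha}$.
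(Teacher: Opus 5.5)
Your proposal is correct and follows the paper's route. The paper also obtains $\mathcal Z_{NT}^{g}=o_P(1)$ from the first-moment bound plus Markov's inequality (Proposition~\ref{prop:A.3}), and then invokes anti-concentration to absorb this $o_P(1)$ perturbation into the bootstrap size statement of Theorem~\ref{thm:3.1}. The paper only asserts that last step, whereas you carry it out explicitly. The quantile comparison you flagged as delicate is easier than you expect: writing $F^{G}$ for the distribution function of $Z_{NT}^{G}$ and $\Delta^{*}_{NT}$ for the Kolmogorov distance in \eqref{eq:3.bootstrap}, right-continuity gives $F^{G}(c^{*}_{NT,\alpha})\ge 1-\alpha-\Delta^{*}_{NT}$. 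Hence $c^{*}_{NT,\alpha}\ge c^{G}_{NT,\alpha+\epsilon_{NT}}$ on $\{\Delta^{*}_{NT}\le\epsilon_{NT}\}$ from bootstrap consistency alone, so neither anti-concentration nor tightness is needed there.
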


The following theorem combines the two power mechanisms. It makes explicit that the proposed statistic is not tied to a single signal geometry: either accumulated energy or sufficiently strong sparse exceedances are enough for consistency.

\begin{theorem}[Consistency of Energy--PE]
\label{thm:3.4}
Let $c_{NT,\alpha}^{*}$ be the bootstrap critical value in Theorem \ref{thm:3.1} and suppose that $c_{NT,\alpha}^{*}=O_P(1)$. Suppose either
\begin{itemize}
\item[\textup{(1)}] $H_A$ and Assumptions \ref{ass:regularity} and \ref{ass:dense} hold; or
\item[\textup{(2)}] the sparse-power condition in Assumption \ref{ass:enhancement}(2) holds.
\end{itemize}
Then, as $N,T\to\infty$ jointly,
\begin{equation}
P\!\left(\widehat Z_{NT}^{\mathrm{EPE}}>c_{NT,\alpha}^{*}\right)\to1.
\label{eq:3.8}
\end{equation}
\end{theorem}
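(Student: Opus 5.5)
The argument rests on two facts: both components of $\widehat Z_{NT}^{\mathrm{EPE}}$ are nonnegative, and the critical value is stochastically bounded. Because $g\ge0$ and $r_{NT}>0$, we have $\mathcal Z_{NT}^{g}\ge0$. Also $Z_T^{\mathrm{Energy}}\ge0$, since it is a supremum of averages of squared $L^2$ norms. Hence
\[
\widehat Z_{NT}^{\mathrm{EPE}}\ \ge\ \max\bigl\{Z_T^{\mathrm{Energy}},\ \mathcal Z_{NT}^{g}\bigr\}.
\]
It therefore suffices to show that, in case (1), $Z_T^{\mathrm{Energy}}\xrightarrow{P}\infty$, and in case (2), $\mathcal Z_{NT}^{g}\xrightarrow{P}\infty$. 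We then combine either divergence with $c_{NT,\alpha}^{*}=O_P(1)$. Fix $\epsilon>0$ and choose $M$ with $P(c_{NT,\alpha}^{*}>M)\le\epsilon$ for all large $N,T$. Then
\[
P\bigl(\widehat Z_{NT}^{\mathrm{EPE}}\le c_{NT,\alpha}^{*}\bigr)\le P\bigl(\widehat Z_{NT}^{\mathrm{EPE}}\le M\bigr)+\epsilon\to\epsilon,
\]
and letting $\epsilon\downarrow0$ gives \eqref{eq:3.8}.

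\textbf{Case (1).} This is exactly Theorem~\ref{thm:3.2}: under $H_A$ and Assumptions~\ref{ass:regularity} and~\ref{ass:dense}, $Z_T^{\mathrm{Energy}}\xrightarrow{P}\infty$. Nothing further is needed beyond the reduction above.

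\textbf{Case (2).} Here only the enhancement term is used. Since $g$ is nonnegative and nondecreasing, we drop the indices outside $J_N$:
\[
\mathcal Z_{NT}^{g}\ \ge\ r_{NT}\sum_{i\in J_N}g(S_{iT})\ \ge\ r_{NT}\,g(b_{NT})\,K_N,
\qquad K_N=\sum_{i\in J_N}\mathbf 1\{S_{iT}\ge b_{NT}\}.
\]
The event $\{S_{iT}\ge b_{NT}\}$ is the same as $\{Y_{iT}\ge\xi_{NT}+\tau_{NT}b_{NT}\}$. The main obstacle is that \eqref{eq:3.B7} controls each subject separately, while the $Y_{iT}$ may be dependent across $i$, so we cannot conclude that $K_N\ge|J_N|/2$ with probability tending to one by a union-type argument. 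Instead I would use a Markov bound on the number of failures. Let
\[
p_N=\sup_{i\in J_N}P\bigl(Y_{iT}<\xi_{NT}+\tau_{NT}b_{NT}\bigr),
\]
so that $p_N\to0$ by \eqref{eq:3.B7}. Then $E(|J_N|-K_N)\le|J_N|p_N$, and Markov's inequality gives
\[
P\bigl(K_N<|J_N|/2\bigr)\le 2p_N\to0.
\]
This bound needs no assumption on cross-sectional dependence. On the complementary event,
\[
\mathcal Z_{NT}^{g}\ \ge\ \tfrac12\,r_{NT}|J_N|\,g(b_{NT}),
\]
which diverges by \eqref{eq:3.B8}. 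Hence $\mathcal Z_{NT}^{g}\xrightarrow{P}\infty$, and the reduction in the first paragraph completes the proof.
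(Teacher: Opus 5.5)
Your proposal is correct and follows the paper's proof essentially step for step. The shared ingredients are the nonnegativity reduction $\widehat Z_{NT}^{\mathrm{EPE}}\ge\max\{Z_T^{\mathrm{Energy}},\mathcal Z_{NT}^{g}\}$ combined with $c_{NT,\alpha}^{*}=O_P(1)$, Theorem~\ref{thm:3.2} for case (1), and, for case (2), a Markov bound on the number of failures among $i\in J_N$, which is exactly the argument of Proposition~\ref{prop:A.4}; you simply make the tightness step and the failure count ($P(K_N<|J_N|/2)\le 2p_N$) more explicit than the paper does.
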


The two components in \(\widehat Z_{NT}^{\mathrm{EPE}}\) are designed for complementary alternatives. The Energy--CUSUM component is most effective when subject-wise squared jump energies accumulate, whereas the enhancement component is most effective when a relatively small set of subjects produces large standardized energy scores.

\begin{table}[htbp]
\centering
\caption{Dominant components of the Energy--PE statistic across signal regimes.}
\label{tab:dominance}
\begin{threeparttable}
\setlength{\tabcolsep}{6pt}
\renewcommand{\arraystretch}{1.15}
\begin{tabular}{lccc}
\toprule
\textbf{Regime} 
& \(\bm{Z}_T^{\mathbf{Energy}}\) 
& \(\bm{\mathcal Z}_{\mathbf{NT}}^{\mathbf{g}}\) 
& \textbf{Dominant Component} \\
\midrule
\(H_0\) 
& \(O_P(1)\) 
& \(o_P(1)\) 
& \(Z_T^{\mathrm{Energy}}\) \\

\(H_1\) (Dense Weak) 
& \(\xrightarrow{P}\infty\) 
& \(o_P(1)\) or \(O_P(1)\) 
& \(Z_T^{\mathrm{Energy}}\) \\

\(H_1\) (Sparse Strong)  
& \(O_P(1)\) 
& \(\xrightarrow{P}\infty\) 
& \(\mathcal Z_{NT}^{g}\) \\

\(H_1\) (Dense Strong) 
& \(\xrightarrow{P}\infty\) 
& possibly \(\xrightarrow{P}\infty\) 
& both \\
\bottomrule
\end{tabular}
\begin{tablenotes}[flushleft]
\footnotesize
\item[$\dagger$] The table summarizes which component determines the first-order power behavior in each regime. Under \(H_0\), the enhancement component is calibrated to be asymptotically negligible.
\end{tablenotes}
\end{threeparttable}
\end{table}

Table~\ref{tab:dominance} summarizes the dominant components of the proposed Energy--PE test statistic under different regimes. Under the null hypothesis, the PE component is asymptotically negligible, and the rejection rule is governed by the calibrated Energy--CUSUM critical value. For dense or symmetric alternatives, the Energy--CUSUM component diverges and dominates the test statistic. In contrast, under sparse but strong alternatives, the Energy--CUSUM component need not diverge, whereas the PE component diverges and drives the power of the test.

The global statistic and the subject-wise CUSUM energies play different roles. The former is used for testing \(H_0\); the latter provide the individual evidence used after rejection. We next use the same \(Y_{iT}\)'s to screen changed subjects, estimate their break locations, and organize heterogeneous locations into latent common groups.

\subsection{Estimation of Break Points and Latent Structure}
\label{subsec:break-latent-estimation}

After the global null has been rejected, the next inferential task is to determine which subjects changed and where their individual breaks occurred. Define
\[
\mathcal C_1:=\{1\le i\le N:\delta_i\not\equiv0\},
\qquad
\mathcal C_0:=\{1\le i\le N:\delta_i\equiv0\}.
\]
The subject-wise maximal energies \(Y_{iT}\) provide a direct screening statistic. We estimate the changed and unchanged sets by
\[
\widehat{\mathcal C}_1
=
\left\{
1\le i\le N:
Y_{iT}>\xi_{NT}
\right\},
\qquad
\widehat{\mathcal C}_0
=
\left\{
1\le i\le N:
Y_{iT}\le\xi_{NT}
\right\},
\]
where
\[
Y_{iT}
=
\sup_{0\le x\le 1}\int_C C_{i,T}^2(x,u)\,du.
\]
The same symbol $\xi_{NT}$ is used for notational economy; in applications the global enhancement threshold and the screening threshold may be chosen separately, provided that their respective null and signal conditions are satisfied.

For each $i\in\widehat{\mathcal C}_1$, define the preliminary break-point estimator
\begin{equation}
\widehat\tau_i
=
\arg\max_{1\le t\le T}
\int_C C_{i,T}^2\!\left(\frac{t}{T},u\right)\,du.
\label{eq:3.9}
\end{equation}

\begin{assumption}[Screening separation and localization signal]
\label{ass:screening}
\normalfont
Let
\[
Y_{iT}^{(0)}
=
\sup_{0\le x\le1}\int_C D_{i,T}^2(x,u)\,du
\]
denote the deterministic signal energy induced by the mean shift, where \(D_{i,T}\) is the CUSUM transform in \eqref{eq:cusum_process} applied to the signal sequence \(\delta_i\mathbf 1\{t>\tau_i\}\). Suppose that:
\leavevmode\par\vspace{0.25em}
\begin{enumerate}[label=(\arabic*),leftmargin=2.2em,itemsep=0.35em]
\item The false-positive probability is uniformly controlled:
\begin{equation}
N\sup_{i\in\mathcal C_0}P(Y_{iT}>\xi_{NT})\to0,
\label{eq:3.9a}
\end{equation}

\item The changed subjects are separated from the screening threshold:
\begin{equation}
\min_{i\in\mathcal C_1}
\frac{T\omega_{Ti}^2\|\delta_i\|_{L^2(C)}^2}{\xi_{NT}}
\to\infty,
\qquad
\omega_{Ti}:=\left(\frac{\tau_i}{T}\right)\wedge\left(1-\frac{\tau_i}{T}\right).
\label{eq:3.10}
\end{equation}

\item The observed subject-wise energies approximate their deterministic signal energies uniformly over the changed subjects:
\begin{equation}
\max_{i\in\mathcal C_1}
\frac{|Y_{iT}-Y_{iT}^{(0)}|}
{T\omega_{Ti}^2\|\delta_i\|_{L^2(C)}^2}
=o_P(1).
\label{eq:3.10b}
\end{equation}

\item For localization, there exists a constant $c_\delta>0$ such that
\begin{equation}
\min_{i\in\mathcal C_1}\omega_{Ti}\|\delta_i\|_{L^2(C)}\ge c_\delta,
\label{eq:3.11}
\end{equation}
then the break magnitudes are uniformly separated from zero.
\end{enumerate}
\end{assumption}

Assumption \ref{ass:screening} separates changed and unchanged subjects. Condition \eqref{eq:3.9a} controls false positives uniformly over \(\mathcal C_0\), conditions \eqref{eq:3.10}--\eqref{eq:3.10b} ensure that each truly changed subject crosses the screening threshold, and condition \eqref{eq:3.11} is a strengthened signal condition used only for the localization rate.

\begin{assumption}[Localization maximal inequality]
\label{ass:localmax}
\normalfont
For the subject-wise squared CUSUM criterion,
\[
\max_{1\le i\le N}\sup_{1\le t\le T}
\left|
\int_C C_{i,T}^2(t/T,u)\,du
-
E\!\left[\int_C C_{i,T}^2(t/T,u)\,du\right]
\right|
=
O_P\!\left(\log(N\vee T)\right).
\]
\end{assumption}

\begin{remark}[Localization condition]
Assumption \ref{ass:localmax} is used only for the refined rate of the preliminary break-point estimators. It is an exponential-type maximal inequality over the $N\times T$ subject-time grid and is implied, for example, by sub-Gaussian innovations together with sufficiently fast temporal dependence decay. The weaker screening consistency in \eqref{eq:3.12} does not require this strengthened maximal control.
\end{remark}

\begin{theorem}[Screening and preliminary localization]
\label{thm:3.5}
Suppose that Assumption \ref{ass:regularity} and Assumption \ref{ass:screening}(1)--(3) hold. Then, as $N,T\to\infty$ jointly,
\begin{equation}
P\!\left(
\widehat{\mathcal C}_0=\mathcal C_0,\ 
\widehat{\mathcal C}_1=\mathcal C_1
\right)\to1.
\label{eq:3.12}
\end{equation}
Moreover, if Assumption \ref{ass:screening}(4) and Assumption \ref{ass:localmax} hold, then
\begin{equation}
\max_{i\in\mathcal C_1}
|\widehat\tau_i-\tau_i|
=
o_P\!\left([\log(N\vee T)]^{1+\zeta}\right)
\label{eq:3.13}
\end{equation}
for any arbitrarily small $\zeta>0$.
\end{theorem}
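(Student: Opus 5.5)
The proof splits into two parts. Part (i) is the screening consistency \eqref{eq:3.12}. Part (ii) is the uniform localization rate \eqref{eq:3.13}.

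\textbf{Screening consistency.} Since $\widehat{\mathcal C}_0$ and $\widehat{\mathcal C}_1$ are complementary, it suffices to show that $P(\exists i\in\mathcal C_0: Y_{iT}>\xi_{NT})\to0$ and $P(\exists i\in\mathcal C_1: Y_{iT}\le\xi_{NT})\to0$.

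For the first event, a union bound gives at most $|\mathcal C_0|\sup_{i\in\mathcal C_0}P(Y_{iT}>\xi_{NT})\le N\sup_{i\in\mathcal C_0}P(Y_{iT}>\xi_{NT})$. This tends to zero by \eqref{eq:3.9a}.

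For the second event, I would first lower bound the deterministic energy. For $i\in\mathcal C_1$, the CUSUM of $\delta_i\mathbf 1\{t>\tau_i\}$ evaluated at $x=\tau_i/T$ equals, up to $O(T^{-1/2})$ rounding,
\[
D_{i,T}(\tau_i/T,u)=-\sqrt{T}\,\frac{\tau_i}{T}\Bigl(1-\frac{\tau_i}{T}\Bigr)\delta_i(u).
\]
Hence
\[
Y_{iT}^{(0)}\ge T\frac{\tau_i^2}{T^2}\Bigl(1-\frac{\tau_i}{T}\Bigr)^2\|\delta_i\|^2\ge \tfrac14 T\omega_{Ti}^2\|\delta_i\|^2 ,
\]
using $\max\{\tau_i/T,\,1-\tau_i/T\}\ge1/2$. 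Write $s_i=T\omega_{Ti}^2\|\delta_i\|^2$. By \eqref{eq:3.10b}, uniformly over $i\in\mathcal C_1$,
\[
Y_{iT}\ge Y^{(0)}_{iT}-o_P(1)s_i\ge(\tfrac14-o_P(1))s_i .
\]
By \eqref{eq:3.10}, $\min_{i\in\mathcal C_1} s_i/\xi_{NT}\to\infty$. So with probability tending to one, $Y_{iT}>\xi_{NT}$ for all $i\in\mathcal C_1$ simultaneously. The uniformity is already built into \eqref{eq:3.10b}, so no further union bound is needed.

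\textbf{Localization rate.} I would work on the event of part (i), which has probability tending to one. Write $Q_i(t)=\int_C C_{i,T}^2(t/T,u)\,du$ and $\bar Q_i(t)=EQ_i(t)$.

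Under the model, $\bar Q_i(t)=\|D_{i,T}(t/T)\|^2+V_i(t)$. Here $V_i(t)=E\|C^{\varepsilon}_{i,T}(t/T)\|^2$ is the noise variance profile, which is smooth in $t$ and has bounded increments of order $O(|t-\tau_i|/T)$ by Assumption \ref{ass:regularity}(1)–(2).

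The deterministic part has the explicit triangular form
\[
\|D_{i,T}(t/T)\|^2=T\|\delta_i\|^2\,h(t/T)^2,
\]
where $h(x)=x(1-r_i)$ for $x\le r_i$, $h(x)=r_i(1-x)$ for $x>r_i$, and $r_i=\tau_i/T$. A direct computation then gives the drift bound
\[
\|D_{i,T}(\tau_i/T)\|^2-\|D_{i,T}(t/T)\|^2\ge c\,\|\delta_i\|^2\omega_{Ti}\,|t-\tau_i|
\]
for a universal $c>0$. Using \eqref{eq:3.11}, this gap is at least $c'\,\|\delta_i\|\,|t-\tau_i|\ge c''\,|t-\tau_i|$. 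Here I need $\|\delta_i\|$ bounded below, which follows from \eqref{eq:3.11} because $\omega_{Ti}\le1/2$.

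Since $\widehat\tau_i$ maximizes $Q_i$, we have $Q_i(\widehat\tau_i)\ge Q_i(\tau_i)$. Decomposing $Q_i=\bar Q_i+(Q_i-\bar Q_i)$ yields
\[
c''|\widehat\tau_i-\tau_i|\le 2\max_{i,t}|Q_i(t)-\bar Q_i(t)|+|V_i(\widehat\tau_i)-V_i(\tau_i)| .
\]
By Assumption \ref{ass:localmax}, the first term is $O_P(\log(N\vee T))$ uniformly in $i$. The variance-profile term is $O(1)$. Therefore $\max_{i\in\mathcal C_1}|\widehat\tau_i-\tau_i|=O_P(\log(N\vee T))$, which is $o_P([\log(N\vee T)]^{1+\zeta})$ for every $\zeta>0$.

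\textbf{Main obstacle.} The delicate step is the linear-drift lower bound. The deterministic gain must dominate the stochastic term uniformly in $i$, including subjects whose $\omega_{Ti}$ is small; there the slope $\|\delta_i\|^2\omega_{Ti}$ must be controlled through \eqref{eq:3.11} rather than through a fixed $\eta$.

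A further difficulty is that the maximal inequality controls $Q_i-\bar Q_i$ only at scale $\log(N\vee T)$, not in increments. Without an increment (chaining) bound, one cannot peel to a sharper rate, which is why the stated rate carries the $[\log]^{1+\zeta}$ slack. If the crude comparison fails because $\|\delta_i\|$ is large, I would instead normalize the drift by $\|\delta_i\|^2\omega_{Ti}$, which only improves the bound.
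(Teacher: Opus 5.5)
Your proof is correct and follows essentially the paper's route: screening uses a union bound with \eqref{eq:3.9a} together with a lower bound on $Y_{iT}^{(0)}$ combined with \eqref{eq:3.10}--\eqref{eq:3.10b} (the paper's Lemma~\ref{lem:B.4}), and localization uses an argmax comparison of a linear deterministic drift against the $O_P(\log(N\vee T))$ fluctuation from Assumption~\ref{ass:localmax} (Lemmas~\ref{lem:B.5}--\ref{lem:B.6}); your explicit split $\bar Q_i=\|D_{i,T}\|^2+V_i$ with a global drift bound is slightly more careful than the paper's local curvature statement and gives the sharper rate $O_P(\log(N\vee T))$. One slip needs fixing: your triangular formula actually gives $\|D_{i,T}(\tau_i/T)\|^2-\|D_{i,T}(t/T)\|^2\ge\tfrac12\,\omega_{Ti}^2\|\delta_i\|^2|t-\tau_i|$, because the smaller of the one-sided slopes at $\tau_i$, namely $2r_i(1-r_i)^2\|\delta_i\|^2$ and $2r_i^2(1-r_i)\|\delta_i\|^2$, is of order $\omega_{Ti}^2\|\delta_i\|^2$, so your bound with $\omega_{Ti}$ to the first power fails for small $\omega_{Ti}$; however, \eqref{eq:3.11} gives $\omega_{Ti}^2\|\delta_i\|^2\ge c_\delta^2$ directly, so the drift is still at least $\tfrac12 c_\delta^2|t-\tau_i|$, the conclusion is unchanged, and no separate lower bound on $\|\delta_i\|$ is needed.
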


Theorem \ref{thm:3.5} shows that the subject-wise energy scores support both consistent screening and preliminary localization. These estimates provide the inputs for the latent grouping step below.

We next organize heterogeneous individual break locations into a small number of latent common break points. Suppose that the set of changed subjects $\mathcal C_1$ admits a latent partition
\[
\mathcal C_1
=
\bigcup_{k=1}^{K_0}\mathcal C(b_k),
\qquad
\mathcal C(b_{k_1})\cap\mathcal C(b_{k_2})=\varnothing
\ \text{for } k_1\neq k_2,
\]
such that
\begin{equation}
\tau_i=b_k,
\qquad
\forall\, i\in\mathcal C(b_k),
\qquad
1\le k\le K_0,
\label{eq:3.14}
\end{equation}
where
\[
1\le b_1<\cdots<b_{K_0}\le T
\]
are $K_0$ distinct break points. Both the group membership and the number \(K_0\) are unknown.

For $i\in\widehat{\mathcal C}_1$, let $\widehat\tau_i$ be defined in \eqref{eq:3.9}. Sort these preliminary estimates from the smallest to the largest:
\[
\widehat\tau_{(1)}\le\cdots\le \widehat\tau_{(n)},
\qquad
n=|\widehat{\mathcal C}_1|.
\]
Define adjacent gaps
\[
\Delta_j(\widehat\tau)=\widehat\tau_{(j+1)}-\widehat\tau_{(j)},
\qquad
j=1,\ldots,n-1.
\]

For a candidate number of groups \(K\), let
$1\le \widehat j_{K,1}<\cdots<\widehat j_{K,K-1}\le n-1$
be the ordered indices of the $(K-1)$ largest gaps in
$\{\Delta_j(\widehat\tau)\}_{j=1}^{n-1}$. Set
$\widehat j_{K,0}=0$ and $\widehat j_{K,K}=n$. The $k$-th estimated cluster is defined by the sorted-index block
\begin{equation}
\widehat{\mathcal C}(k\mid K)
=
\left\{
i\in\widehat{\mathcal C}_1:
\widehat\tau_i=\widehat\tau_{(\ell)}
\text{ for some }
\widehat j_{K,k-1}<\ell\le \widehat j_{K,k}
\right\},
\qquad
k=1,\ldots,K,
\label{eq:3.15}
\end{equation}
with ties broken by the subject index. This index-based definition avoids empty clusters and boundary ambiguities.

For each $k=1,\ldots,K$, define the clusterwise averaged break-point estimator
\begin{equation}
\widehat b_{k\mid K}
=
\frac{1}{|\widehat{\mathcal C}(k\mid K)|}
\sum_{i\in\widehat{\mathcal C}(k\mid K)}
\widehat\tau_i.
\label{eq:3.16}
\end{equation}

For each $i\in \widehat{\mathcal C}(k\mid K)$, define
\[
\widehat\nu_{it,k\mid K}
=
\widehat\mu_{i,k\mid K}
+
\widehat\delta_{i,k\mid K}\mathbf 1\{t>\widehat b_{k\mid K}\},
\]
where
\[
\widehat\mu_{i,k\mid K}
=
\frac{1}{\widehat b_{k\mid K}}
\sum_{t=1}^{\widehat b_{k\mid K}}X_{it},
\qquad
\widehat\delta_{i,k\mid K}
=
\frac{1}{T-\widehat b_{k\mid K}}
\sum_{t=\widehat b_{k\mid K}+1}^{T}X_{it}
-
\frac{1}{\widehat b_{k\mid K}}
\sum_{t=1}^{\widehat b_{k\mid K}}X_{it}.
\]

Define the within-cluster fitting criterion
\[
V(K)
=
\frac{1}{|\widehat{\mathcal C}_1|}
\sum_{k=1}^{K}
\sum_{i\in\widehat{\mathcal C}(k\mid K)}
\frac{1}{T}
\sum_{t=1}^{T}
\left\|
X_{it}-\widehat\nu_{it,k\mid K}
\right\|_{L^2(C)}^2,
\]
and the information criterion
\begin{equation}
IC(K)=\log V(K)+K\rho_{NT},
\label{eq:3.17}
\end{equation}
where $\rho_{NT}$ is a vanishing penalty sequence. The selected number of clusters is
\begin{equation}
\widehat K=\arg\min_{1\le K\le \overline K}IC(K),
\label{eq:3.18}
\end{equation}
where $\overline K$ is a pre-specified upper bound.

\begin{assumption}[Latent grouping and information-criterion penalty]
\label{ass:cluster}
\normalfont
Suppose that the latent structure \eqref{eq:3.14} holds and:
\begin{enumerate}[label=(\arabic*),leftmargin=2.2em,itemsep=0.35em]
\item There exist constants $0<c_1<\cdots<c_{K_0}<1$ such that
\[
b_k=\lfloor c_k T\rfloor,
\qquad
k=1,\ldots,K_0.
\]

\item The cluster sizes satisfy
\[
|\mathcal C(b_k)|=d_k|\mathcal C_1|,
\qquad
d_k>0,
\qquad
\sum_{k=1}^{K_0}d_k=1.
\]

\item The signal magnitudes are uniformly bounded and separated:
\[
\sup_{1\le i\le N}\|\mu_i\|_{L^2(C)}<\infty,
\qquad
\sup_{i\in\mathcal C_1}\|\delta_i\|_{L^2(C)}<\infty,
\qquad
\inf_{i\in\mathcal C_1}\|\delta_i\|_{L^2(C)}>0.
\]

\item The penalty sequence satisfies
\[
\rho_{NT}\to0,
\qquad
\frac{T\rho_{NT}}{[\log(N\vee T)]^{1+\zeta}}\to\infty
\]
for every arbitrarily small $\zeta>0$.

\item The candidate range contains the truth and is fixed, \(K_0\le\overline K<\infty\). Moreover, the fitting criterion separates underfitting from the oracle partition on the log-loss scale and makes overfitting gains negligible relative to the penalty:
\[
\inf_{1\le K<K_0}\{\log V(K)-\log V(K_0)\}\ge c+o_P(1)
\]
for some constant \(c>0\), and
\[
\sup_{K_0<K\le\overline K}
|\log V(K)-\log V(K_0)|
=o_P(\rho_{NT}).
\]
\end{enumerate}
\end{assumption}

Assumption \ref{ass:cluster} formalizes the latent grouping structure. Conditions (1)--(2) impose separation of the true break locations and nondegenerate cluster sizes. Condition (3) rules out vanishing or exploding signals within the changed set. Conditions (4)--(5) are the information-criterion requirements: the penalty vanishes, but remains large enough to remove overfitted groups, while underfitted partitions retain a non-negligible approximation loss.

\begin{theorem}[Recovery of latent break groups]
\label{thm:3.6}
Suppose that Assumptions \ref{ass:regularity}, \ref{ass:screening}, \ref{ass:localmax}, and \ref{ass:cluster} hold. Then, as $N,T\to\infty$ jointly,
\begin{equation}
P(\widehat K=K_0)\to1.
\label{eq:3.19}
\end{equation}
Moreover,
\begin{equation}
P\!\left(
\widehat{\mathcal C}(b_k)=\mathcal C(b_k),\ k=1,\ldots,K_0
\ \big|\
\widehat K=K_0
\right)\to1,
\label{eq:3.20}
\end{equation}
where
\[
\widehat{\mathcal C}(b_k):=\widehat{\mathcal C}(k\mid\widehat K).
\]
\end{theorem}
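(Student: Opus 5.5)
Our argument works on the intersection of two high-probability events and is otherwise deterministic. Let \(\mathcal E_1\) be the event \(\{\widehat{\mathcal C}_1=\mathcal C_1\}\), and let \(\mathcal E_2\) be the event \(\{\max_{i\in\mathcal C_1}|\widehat\tau_i-\tau_i|\le L_{NT}\}\) with \(L_{NT}=[\log(N\vee T)]^{1+\zeta}\). By Theorem~\ref{thm:3.5}, which applies because Assumptions \ref{ass:regularity}, \ref{ass:screening} and \ref{ass:localmax} are in force, \(P(\mathcal E_1\cap\mathcal E_2)\to1\).

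\textbf{Step 1: the \(K_0\)-partition is the oracle one.} On \(\mathcal E_1\cap\mathcal E_2\), the sorted preliminary estimates fall into \(K_0\) blocks. The estimates in block \(k\) lie in \([b_k-L_{NT},b_k+L_{NT}]\), and Assumption \ref{ass:cluster}(2) ensures every block is nonempty. Gaps within a block are therefore at most \(2L_{NT}\). Gaps between consecutive blocks are at least \(b_{k+1}-b_k-2L_{NT}\ge (c_{k+1}-c_k)T/2\) for large \(T\), by Assumption \ref{ass:cluster}(1). Since \(L_{NT}=o(T)\), the \(K_0-1\) largest gaps are exactly the between-block gaps. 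The sorted-index rule \eqref{eq:3.15} with \(K=K_0\) thus returns \(\widehat{\mathcal C}(k\mid K_0)=\mathcal C(b_k)\) for every \(k\). This gives \eqref{eq:3.20} once \(P(\widehat K=K_0)\to1\) is established, because \(P(A\mid \widehat K=K_0)\ge 1-P(A^c)/P(\widehat K=K_0)\to1\). As a by-product, \(|\widehat b_{k\mid K_0}-b_k|\le L_{NT}\).

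\textbf{Step 2: selection consistency of \(\widehat K\).} This uses Assumption \ref{ass:cluster}(5) exactly as stated.

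For underfitting, fix \(K<K_0\). Then
\[
IC(K)-IC(K_0)=\{\log V(K)-\log V(K_0)\}-(K_0-K)\rho_{NT}\ge c+o_P(1)-\overline K\rho_{NT}.
\]
This is bounded below by \(c/2\) with probability tending to one, since \(\rho_{NT}\to0\).

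For overfitting, fix \(K_0<K\le\overline K\). Then
\[
IC(K)-IC(K_0)\ge (K-K_0)\rho_{NT}-o_P(\rho_{NT})\ge \rho_{NT}\{1-o_P(1)\}>0
\]
with probability tending to one.

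Since \(\overline K\) is fixed, a union bound over the finitely many \(K\neq K_0\) gives \(P(\widehat K=K_0)\to1\).

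\textbf{Main obstacle.} The delicate point is justifying that \(V(K_0)\) is well behaved, so that the log-scale comparisons are meaningful. We need \(V(K_0)\) bounded away from \(0\) and \(\infty\) in probability. To show this, we use that on \(\mathcal E_1\cap\mathcal E_2\) the fitted break \(\widehat b_{k\mid K_0}\) is within \(L_{NT}\) of \(b_k\). Because \(T\rho_{NT}/L_{NT}\to\infty\), misplacement contributes at most \(O_P(L_{NT}/T)\) to \(V(K_0)\), using the bounded \(\|\delta_i\|\) from Assumption \ref{ass:cluster}(3). This perturbation is \(o_P(\rho_{NT})\). Up to that perturbation, \(V(K_0)\) is a panel average of \(T^{-1}\sum_t\|\varepsilon_{it}\|^2\), which is bounded above by the moment conditions of Assumption \ref{ass:regularity}(1). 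It is bounded away from zero provided the errors are nondegenerate, which we would make explicit as a mild requirement.

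This also explains why penalty condition (4) is compatible with condition (5) on the overfitting side. Any further split of a correctly placed cluster only re-estimates means and changes \(V\) by \(O_P(L_{NT}/T)=o_P(\rho_{NT})\), consistent with the second part of Assumption \ref{ass:cluster}(5).
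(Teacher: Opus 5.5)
Your proof is correct and follows the same route as the paper. Theorem~\ref{thm:3.5} makes the sorted $\widehat\tau_i$ cluster within $[\log(N\vee T)]^{1+\zeta}=o(T)$ of the order-$T$-separated $b_k$, so the gap rule recovers the oracle partition at $K=K_0$; the information-criterion comparison then follows directly from Assumption~\ref{ass:cluster}(4)--(5), and your explicit conditional-probability step supplies a detail the paper leaves implicit. Your ``main obstacle'' paragraph, together with the extra nondegeneracy requirement it introduces, is not needed: Assumption~\ref{ass:cluster}(5) is imposed directly on $\log V(K)-\log V(K_0)$, so that discussion only explains why the assumption is plausible.
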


Theorem \ref{thm:3.6} shows that the latent grouping structure can be recovered once the changed subjects and their preliminary break locations are estimated accurately. The global test, subject screening rule, and gap-based clustering step therefore form a coherent inferential pipeline.

After latent groups have been recovered, the break-point estimate within each group can be refined by pooling the subject-wise CUSUM energies. For \(k=1,\ldots,\widehat K\), write
\[
\widehat{\mathcal C}_k=\widehat{\mathcal C}(k\mid\widehat K).
\]
Define
\begin{equation}
\widetilde b_k
=
\arg\max_{1\le t\le T}
\sum_{i\in\widehat{\mathcal C}_k}
\int_C C_{i,T}^2\!\left(\frac{t}{T},u\right)\,du,
\qquad
k=1,\ldots,\widehat K.
\label{eq:3.21}
\end{equation}

\begin{theorem}[Post-clustering pooled localization]
\label{thm:3.7}
Suppose that the latent structure \eqref{eq:3.14}, Assumption \ref{ass:regularity}, Assumption \ref{ass:cluster}, and the conditions of Theorem \ref{thm:3.6} hold. In addition, assume that
\[
|\mathcal C_1|=O(T^2),\qquad T=O(|\mathcal C_1|^{3/2}),
\]
and
\begin{equation}
\min_{1\le k\le K_0}
\frac{1}{|\mathcal C(b_k)|^{1/2}}
\sum_{i\in\mathcal C(b_k)}
\|\delta_i\|_{L^2(C)}^2
\to\infty.
\label{eq:3.22}
\end{equation}
Then, as $T$ and $|\mathcal C_1|$ tend to infinity jointly,
\begin{equation}
\max_{1\le k\le K_0}
|\widetilde b_k-b_k|
=o_P(T).
\label{eq:3.23}
\end{equation}
\end{theorem}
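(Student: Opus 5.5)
The plan is to work on the high-probability event produced by Theorem~\ref{thm:3.6}, and then treat the pooled criterion in \eqref{eq:3.21} as a one-break CUSUM problem within each recovered group. By \eqref{eq:3.12}, \eqref{eq:3.19} and \eqref{eq:3.20}, the event
\[
\mathcal E_{NT}=\{\widehat{\mathcal C}_1=\mathcal C_1,\ \widehat K=K_0,\ \widehat{\mathcal C}_k=\mathcal C(b_k),\ k=1,\ldots,K_0\}
\]
has probability tending to one. On $\mathcal E_{NT}$, $\widetilde b_k$ is the maximizer of a deterministic-index criterion. For $i\in\mathcal C(b_k)$ the error enters the CUSUM additively, so $C_{i,T}=D_{i,T}+\mathcal E_{i,T}$, where $\mathcal E_{i,T}$ is the CUSUM of $\varepsilon_{it}$. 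Writing $n_k=|\mathcal C(b_k)|$, I would decompose, for $x=t/T$,
\[
\frac{1}{n_k}\sum_{i\in\mathcal C(b_k)}\int_C C_{i,T}^2(x,u)\,du
= S_k(x)+2R_k(x)+Q_k(x),
\]
where $S_k(x)=n_k^{-1}\sum_i\|D_{i,T}(x)\|^2$, $R_k(x)=n_k^{-1}\sum_i\langle D_{i,T}(x),\mathcal E_{i,T}(x)\rangle$ and $Q_k(x)=n_k^{-1}\sum_i\|\mathcal E_{i,T}(x)\|^2$.

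\textbf{Signal part.} Because all subjects in the group share the break $b_k$, a direct computation gives $D_{i,T}(x)=\sqrt T\,\delta_i\,h_{c_k}(x)$, with $h_c(x)=x(1-c)$ for $x\le c$ and $h_c(x)=c(1-x)$ for $x>c$, up to $O(T^{-1/2})$ rounding. Hence
\[
S_k(x)=T\,\bar\Delta_k\,h_{c_k}^2(x),\qquad \bar\Delta_k=n_k^{-1}\sum_{i\in\mathcal C(b_k)}\|\delta_i\|^2 .
\]
This is uniquely maximized at $c_k$, and $h_{c_k}^2(c_k)-h_{c_k}^2(x)\ge \kappa|x-c_k|$ for a constant $\kappa>0$ depending on $c_k\in(0,1)$. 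By Assumption~\ref{ass:cluster}(3), $\bar\Delta_k$ is bounded away from zero and infinity.

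\textbf{Noise parts.} The term $Q_k$ is not centered. However, under Assumption~\ref{ass:regularity} its mean $EQ_k(x)$ is a bounded function of $x$ (Assumption~\ref{ass:regularity}(2)), and it is nearly flat because of stationarity: it is approximately $x(1-x)$ times the average trace of the long-run covariance. Its fluctuations are $o_P(1)$ uniformly in $x$, since averaging over $n_k\to\infty$ subjects gives a uniform LLN of the same kind as Assumption~\ref{ass:regularity}(2). So $\sup_x|Q_k(x)-EQ_k(x)|=O_P(1)$, and $\sup_x|EQ_k(x)-EQ_k(c_k)|=O(1)$. For the cross term, $\sqrt T\,n_k^{-1}\sum_i\langle\delta_i,\mathcal E_{i,T}(x)\rangle h_{c_k}(x)$, Cauchy--Schwarz together with the bounded second moments gives $\sup_x|R_k(x)|=O_P\bigl(\sqrt{T}\,\bar\Delta_k^{1/2}\bigr)$. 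This crude bound already suffices. If one uses the cross-sectional averaging instead, it improves to $O_P(\sqrt{T/n_k})$ up to logs; the size conditions $|\mathcal C_1|=O(T^2)$, $T=O(|\mathcal C_1|^{3/2})$ and \eqref{eq:3.22} would be used here to keep the bound uniform over $t$ and $k$ with $K_0$ fixed.

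\textbf{Argmax step.} Since $\widetilde b_k/T$ maximizes the full criterion, the basic inequality at $x=c_k$ gives
\[
T\bar\Delta_k\kappa\,|\widetilde b_k/T-c_k|
\le 2\cdot 2\sup_x|R_k|+2\sup_x|Q_k-EQ_k|+O(1)
=O_P(\sqrt T).
\]
Therefore $|\widetilde b_k-b_k|=O_P(\sqrt T)=o_P(T)$, and the bound is uniform over the finitely many $k$. Finally, since $P(\mathcal E_{NT}^c)\to0$, the result transfers from $\mathcal E_{NT}$ to an unconditional statement, which yields \eqref{eq:3.23}.

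\textbf{Main obstacle.} The hardest step is the uniform-in-$x$ control of the noise energy $Q_k$. The danger is that its deterministic mean profile is a quadratic of order $O(1)$ and could in principle shift the argmax. It cannot, because the signal curvature is of order $T$. However, this requires the same stationarity and moment arguments as Assumption~\ref{ass:regularity}(2), restricted to subsets of subjects. I would handle this by noting that the average over a subset is bounded by $N/n_k$ times the full average, using Assumption~\ref{ass:cluster}(2). If $|\mathcal C_1|/N$ degenerates, I would instead invoke the moment bound directly together with Markov's inequality on $\sup_x$, obtained through a maximal inequality for partial sums of the linear process in Assumption~\ref{ass:regularity}(1).
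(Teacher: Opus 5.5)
Your proposal is correct and follows essentially the paper's route: restrict to the recovery event of Theorem~\ref{thm:3.6}, split the pooled criterion into a deterministic part uniquely maximized at $b_k$ plus a stochastic remainder, and finish with the argmax comparison of Lemma~\ref{lem:B.10}; your explicit linear-curvature bound gives the sharper rate $O_P(\sqrt T)$ and shows that the order-$T$ signal together with Assumption~\ref{ass:cluster}(3), rather than \eqref{eq:3.22} or the growth conditions on $|\mathcal C_1|$, is what drives the result. The one step to tighten is the noise energy: drop the subset-LLN and $N/n_k$ argument, because Assumption~\ref{ass:cluster}(2) ties $n_k$ to $|\mathcal C_1|$ and not to $N$, and no cross-sectional weak dependence is assumed; instead use your fallback directly, since a Beveridge--Nelson decomposition plus Doob's inequality under Assumption~\ref{ass:regularity}(1) gives $E\sup_x\|C^{\varepsilon}_{i,T}(x)\|_{L^2(C)}^2\le C$ uniformly in $i$, hence $\sup_x Q_k(x)=O_P(1)$, which is exactly what your Cauchy--Schwarz bound on $R_k$ and the basic inequality require.
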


Theorem \ref{thm:3.7} shows that, after group recovery, pooling information within each estimated cluster yields consistent group-level break localization. Exact integer recovery would require stronger signal-to-noise conditions; the displayed rate is the baseline conclusion under the stated assumptions.

\begin{remark}[Growth conditions in Theorem \ref{thm:3.7}]
The growth restrictions involving \(T\) and \(|\mathcal C_1|\) control the pooled stochastic fluctuation after cluster recovery. They ensure that the information gained by averaging within a recovered cluster is large enough for the deterministic pooled CUSUM signal to dominate the residual noise.
\end{remark}

\section{Simulation Studies}
\label{sec:simulation}

\subsection{Data-Generating Process}
\label{subsec:data-generation}

We consider high-dimensional functional observations on the domain $\mathcal U=[0,1]$. For each component $i=1,\dots,N$ and time point $t=1,\dots,T$, let $X_{i,t}(u)$, $u\in[0,1]$, denote the observed curve. To comprehensively assess the performance of the proposed procedure, we consider several data-generating mechanisms that vary along three dimensions: the functional generation scheme, the jump-direction pattern, and the signal sparsity level.

\paragraph{(I) Functional generation mechanism.}


Motivated by practical functional time series such as daily load and renewable energy generation curves, we consider two representative models: (i) a basis-expansion model with dependent latent coefficients, and (ii) an SDE-based model with smooth stochastic trajectories. These two designs capture complementary sources of functional variation and provide distinct benchmarks for evaluating change-point detection performance.

\paragraph{Model I: basis-expansion model.}
The functional observations are generated through a Fourier basis expansion,
\[
X_{i,t}(u)=\sum_{j=1}^{J}\xi_{i,t,j}\phi_j(u), \qquad u\in[0,1],
\]
where $\{\phi_j(\cdot)\}_{j=1}^{J}$ is a Fourier basis system. The latent coefficients are generated as
\[
\xi_{i,t,j}=\beta_{i,t,j}+\eta_{i,t,j}, \qquad i=1,\dots,N,\; t=1,\dots,T,\; j=1,\dots,J.
\]
For each $j$, let
\[
\bm{\beta}_{t,j}=(\beta_{1,t,j},\dots,\beta_{N,t,j})^\top.
\]
We assume that $\bm{\beta}_{t,j}$ follows a VAR(1) model,
\[
\bm{\beta}_{t,j}=A\bm{\beta}_{t-1,j}+\bm{\varepsilon}_{t,j}, \qquad t\ge2,
\]
where $\bm{\varepsilon}_{t,j}\sim N(\bm 0,I_N)$ independently across $t$ and $j$, and $\bm{\beta}_{1,j}\sim N(\bm 0,I_N)$. The transition matrix $A=(a_{kl})_{1\le k,l\le N}$ is banded:
\[
a_{kl}\sim \mathrm{Unif}(-0.3,0.3)\;\; \text{if } |k-l|\le 3, \qquad
a_{kl}=0\;\; \text{if } |k-l|>3.
\]
This induces both local cross-sectional dependence and temporal dependence. The idiosyncratic term is generated independently as
\[
\eta_{i,t,j}\sim N(0,1/j),
\]
so that higher-order basis coefficients have smaller variances.

\paragraph{Model II: SDE-based model.}
As an alternative functional generation scheme, we generate the baseline curves from a stochastic differential equation. Specifically, for each $(i,t)$, let $Z_{i,t}(u)$ solve
\[
dZ_{i,t}(u)=-\lambda Z_{i,t}(u)\,du+\sigma\, dW_{i,t}(u), \qquad u\in[0,1],
\]
where $W_{i,t}(\cdot)$ is a standard Wiener process, and $\lambda,\sigma>0$ are fixed constants. The observed curve is then generated as
\[
X_{i,t}(u)=\mu_{i,t}(u)+Z_{i,t}(u),
\]
where $\mu_{i,t}(u)$ is the mean function that may change over time. This model produces smoother trajectories and serves as a complementary benchmark to the basis-expansion setting.

\paragraph{(II) Null and alternative hypotheses.}
Under the null hypothesis $H_0$, no change-point is present in any component, and the observations are generated solely from the baseline process. That is,
\[
X_{i,t}(u)=\varepsilon_{i,t}(u), \qquad i=1,\dots,N,\; t=1,\dots,T,
\]
for Model I, where
\[
\varepsilon_{i,t}(u)=\sum_{j=1}^{J}(\beta_{i,t,j}+\eta_{i,t,j})\phi_j(u),
\]
and
\[
X_{i,t}(u)=Z_{i,t}(u), \qquad i=1,\dots,N,\; t=1,\dots,T,
\]
for Model II.

Under the alternative hypothesis $H_1$, only a subset of components undergo mean shifts. Let
\[
\mathcal I_{\mathrm{cp}}\subset\{1,\dots,N\}, \qquad |\mathcal I_{\mathrm{cp}}|=\lfloor \mathrm{SDR}\cdot N\rfloor,
\]
where $\mathrm{SDR}\in(0,1]$ controls the proportion of changed components. For each $i\in\mathcal I_{\mathrm{cp}}$, the change-point location $\tau_i$ is drawn independently from
\[
\tau_i \sim \mathrm{Unif}\bigl\{\lfloor 0.25T\rfloor,\dots,\lfloor 0.75T\rfloor\bigr\},
\]
whereas $\tau_i=\infty$ for $i\notin\mathcal I_{\mathrm{cp}}$.

\paragraph{(III) Signal shape and sparsity.}
The mean-shift function is defined as
\[
\delta^\star(u)=\frac{1}{\sqrt m}\sum_{j=1}^{m}\phi_j(u),
\]
where $m$ controls the number of affected basis directions. Hence, smaller $m$ corresponds to a basis-sparse signal, while larger $m$ produces a basis-dense signal.



Each data model includes a mean shift after the component-specific change-point location, with two jump-direction patterns: (i) one-sided jumps, representing aligned changes across components, and (ii) mixed-sign jumps, designed to introduce cross-sectional signal cancellation.

\paragraph{Type A: one-sided jumps.}
All changed components share the same jump direction:
\[
\delta_i(u)=
\begin{cases}
\sqrt{\mathrm{SNR}}\,\delta^\star(u), & i\in\mathcal I_{\mathrm{cp}},\\[0.3em]
0, & i\notin\mathcal I_{\mathrm{cp}}.
\end{cases}
\]

\paragraph{Type B: mixed-sign jumps.}
To introduce cancellation effects, we partition the changed set as
\[
\mathcal I_{\mathrm{cp}}=\mathcal I_{\mathrm{cp}}^{+}\cup \mathcal I_{\mathrm{cp}}^{-}, \qquad \mathcal I_{\mathrm{cp}}^{+}\cap \mathcal I_{\mathrm{cp}}^{-}=\varnothing,
\]
with $|\mathcal I_{\mathrm{cp}}^{+}|\approx |\mathcal I_{\mathrm{cp}}^{-}|$, and define
\[
\delta_i(u)=
\begin{cases}
\sqrt{\mathrm{SNR}}\,\delta^\star(u), & i\in \mathcal I_{\mathrm{cp}}^{+},\\[0.3em]
-\sqrt{\mathrm{SNR}}\,\delta^\star(u), & i\in \mathcal I_{\mathrm{cp}}^{-},\\[0.3em]
0, & i\notin \mathcal I_{\mathrm{cp}}.
\end{cases}
\]

The observed process under $H_1$ is generated by
\[
X_{i,t}(u)=
\begin{cases}
\varepsilon_{i,t}(u), & t\le \tau_i,\\[0.3em]
\varepsilon_{i,t}(u)+\delta_i(u), & t>\tau_i,
\end{cases}
\qquad i=1,\dots,N,\; t=1,\dots,T,
\]
for Model I, and by
\[
X_{i,t}(u)=
\begin{cases}
Z_{i,t}(u), & t\le \tau_i,\\[0.3em]
Z_{i,t}(u)+\delta_i(u), & t>\tau_i,
\end{cases}
\qquad i=1,\dots,N,\; t=1,\dots,T,
\]
for Model II.

\paragraph{(IV) Sparse and dense regimes.}
We examine both component-wise sparse and dense contamination settings through the parameter $\mathrm{SDR}$. Small values of $\mathrm{SDR}$ correspond to sparse alternatives, where only a few components change, whereas large values of $\mathrm{SDR}$ represent dense alternatives. Combined with the choice of $m$, this yields four representative signal regimes:


\begin{table}[htbp]
\centering
\caption{Simulation scenarios for assessing sparsity and sign heterogeneity.}
\label{tab:simulation_scenarios}
\begin{threeparttable}
\setlength{\tabcolsep}{5pt}
\renewcommand{\arraystretch}{1.15}
\begin{tabular}{ccccccccc}
\toprule
\multirow{2}{*}{\textbf{Scenario}} 
& \multicolumn{2}{c}{\textbf{Mechanism}}
& \multicolumn{2}{c}{\textbf{Jump type}}
& \multicolumn{2}{c}{\textbf{Comp. sparsity}}
& \multicolumn{2}{c}{\textbf{Basis sparsity}} \\
\cmidrule(lr){2-3} \cmidrule(lr){4-5} \cmidrule(lr){6-7} \cmidrule(lr){8-9}
& Basis & SDE 
& One-sided & Mixed-sign 
& Dense & Sparse 
& Dense & Sparse \\
\midrule
S1 & $\checkmark$ &      & $\checkmark$ &      & $\checkmark$ &      & $\checkmark$ &  \\
S2 & $\checkmark$ &      &              & $\checkmark$ &              & $\checkmark$ &              & $\checkmark$ \\
S3 &              & $\checkmark$ &              & $\checkmark$ &              & $\checkmark$ & $\checkmark$ &  \\
S4 &              & $\checkmark$ & $\checkmark$ &              & $\checkmark$ &              &              & $\checkmark$ \\
\bottomrule
\end{tabular}
\begin{tablenotes}[flushleft]
\footnotesize
\item[$\dagger$] A check mark indicates that the corresponding feature is present in the scenario, while a blank entry indicates that the feature is absent. ``Comp. sparsity'' refers to the component-wise sparsity level controlled by SDR, and ``Basis sparsity'' refers to the number of affected basis directions controlled by \(m\).
\end{tablenotes}
\end{threeparttable}
\end{table}

In implementation, the functional trajectories are evaluated on an equally spaced grid of size \texttt{grid\_size} over $[0,1]$. Unless otherwise specified, the main tuning parameters are the number of basis functions $J$, the signal proportion $\mathrm{SDR}$, the number of affected basis directions $m$, and the signal strength $\mathrm{SNR}$.


Unless otherwise stated, we set $N=100$, $T=200$, $J=15$, and $\texttt{grid\_size}=50$. 

To further assess performance in genuine multiple change-point settings, we also consider a multi-jump extension in which each changed component may experience more than one structural break. Specifically, for $i\in\mathcal I_{\mathrm{cp}}$, let
\[
X_{i,t}(u)=\varepsilon_{i,t}(u)+\sum_{\ell=1}^{K_i}\delta_{i,\ell}(u)\mathbf 1(t>\tau_{i,\ell}),
\]
where $K_i\in\{1,2\}$ is randomly generated, and the break locations $\tau_{i,1}<\cdots<\tau_{i,K_i}$ are sampled with a minimum spacing constraint. The jump directions $\delta_{i,\ell}(u)$ may be either common-sign or mixed-sign.

\subsection{Size Performance}
\label{subsec:size-performance}

We first examine the empirical size of the competing tests under the null hypothesis of no structural break. Since the structural-break components are removed under $H_0$, the four alternative scenarios reduce to two distinct null-generation mechanisms: the basis-expansion null corresponding to S1/S2 and the SDE-based null corresponding to S3/S4. We compare the conventional CUSUM statistic, the PE--CUSUM statistic, and the proposed Energy--PE statistic. The rejection frequencies are computed over Monte Carlo replications at the nominal significance levels $\alpha=0.01, 0.05$, and $0.10$.

\begin{table}[htbp]
\centering
\caption{Empirical Size under Two Null Data-Generating Mechanisms}
\label{tab:size_all}
\small
\setlength{\tabcolsep}{3.5pt}
\renewcommand{\arraystretch}{1.12}
\resizebox{\textwidth}{!}{%
\begin{tabular}{llcccccccc}
\toprule
\multirow{2}{*}{\textbf{Level}} & \multirow{2}{*}{\textbf{Method}}
& \multicolumn{2}{c}{\textbf{\(N=100\)}}
& \multicolumn{2}{c}{\textbf{\(N=200\)}}
& \multicolumn{2}{c}{\textbf{\(N=500\)}}
& \multicolumn{2}{c}{\textbf{\(N=800\)}} \\
\cmidrule(lr){3-4} \cmidrule(lr){5-6} \cmidrule(lr){7-8} \cmidrule(lr){9-10}
& & \textbf{Basis} & \textbf{SDE} & \textbf{Basis} & \textbf{SDE} & \textbf{Basis} & \textbf{SDE} & \textbf{Basis} & \textbf{SDE} \\
\midrule
\multirow{3}{*}{$\alpha=0.01$}
& CUSUM     & 0.008 & 0.006 & 0.013 & 0.007 & 0.009 & 0.011 & 0.004 & 0.030 \\
& PE--CUSUM & 0.006 & 0.006 & 0.012 & 0.007 & 0.009 & 0.011 & 0.004 & 0.030 \\
& Energy--PE & 0.004 & 0.008 & 0.016 & 0.013 & 0.008 & 0.016 & 0.005 & 0.014 \\
\midrule
\multirow{3}{*}{$\alpha=0.05$}
& CUSUM     & 0.032 & 0.050 & 0.059 & 0.054 & 0.038 & 0.041 & 0.038 & 0.061 \\
& PE--CUSUM & 0.032 & 0.050 & 0.061 & 0.054 & 0.057 & 0.041 & 0.063 & 0.061 \\
& Energy--PE & 0.029 & 0.023 & 0.061 & 0.042 & 0.048 & 0.055 & 0.070 & 0.054 \\
\midrule
\multirow{3}{*}{$\alpha=0.10$}
& CUSUM     & 0.076 & 0.116 & 0.089 & 0.119 & 0.106 & 0.097 & 0.083 & 0.109 \\
& PE--CUSUM & 0.112 & 0.116 & 0.109 & 0.119 & 0.084 & 0.097 & 0.134 & 0.109 \\
& Energy--PE & 0.115 & 0.077 & 0.116 & 0.103 & 0.097 & 0.108 & 0.146 & 0.113 \\
\bottomrule
\end{tabular}%
}
\end{table}

Table~\ref{tab:size_all} reports the empirical sizes of the CUSUM, PE--CUSUM, and Energy--PE tests under the two null data-generating mechanisms, namely the basis-expansion null corresponding to S1/S2 and the SDE-based null corresponding to S3/S4. Overall, the three methods exhibit reasonably controlled empirical sizes across the considered sample sizes and nominal significance levels.

For $N=100$, the empirical sizes under the basis-expansion null are slightly conservative at $\alpha=0.01$ and $\alpha=0.05$, with rejection frequencies below the nominal levels for all three methods. At $\alpha=0.10$, the CUSUM test remains conservative, whereas PE--CUSUM and Energy--PE are closer to the target level. Under the SDE-based null, CUSUM and PE--CUSUM match the nominal level well at $\alpha=0.05$, while Energy--PE is somewhat conservative. At $\alpha=0.10$, all methods remain reasonably close to the nominal level, with Energy--PE showing a slightly smaller rejection rate.

When $N$ increases to 200, the empirical sizes become closer to the nominal levels in most cases. Under the basis-expansion null, all three methods are well controlled at $\alpha=0.01$ and $\alpha=0.05$, and PE--CUSUM and Energy--PE are close to the nominal level at $\alpha=0.10$. Under the SDE-based null, CUSUM and PE--CUSUM show nearly identical sizes, while Energy--PE is slightly more conservative, especially at $\alpha=0.05$. These results suggest that the proposed Energy--PE statistic does not exhibit evident size inflation under the null and maintains comparable size performance to the benchmark CUSUM and PE--CUSUM procedures.

\subsection{Power Performance}
\label{subsec:power-performance}

We next investigate empirical power under four representative alternatives. Unlike the size experiment, the structural-break components are retained here. Scenario S1 corresponds to dense one-sided breaks under the basis-expansion mechanism, S2 to sparse mixed-sign breaks under the same mechanism, S3 to sparse mixed-sign breaks under the SDE-based mechanism, and S4 to dense one-sided breaks under the SDE-based mechanism. Together, these settings assess sensitivity to sparsity, sign heterogeneity, and the baseline dependence structure of the functional observations.

We compare the conventional CUSUM test, PE--CUSUM, and the proposed Energy--PE test. The rejection frequencies are computed over Monte Carlo replications at the nominal level $\alpha=0.05$. Since Energy--PE aggregates subject-specific CUSUM energies before cross-sectional averaging while retaining a power-enhancement component, it is designed to remain sensitive under both dense sign-heterogeneous alternatives and sparse strong-signal alternatives.

\begin{table}[htbp]
\centering
\caption{Empirical Power under Dense, Sparse, and Mixed-Sign Alternatives. Results are based on $n_{\mathrm{rep}}=300$ Monte Carlo replications with $N=100$, $T=200$, $\mathrm{SNR}=1$, $\mathrm{SDR}_{\mathrm{dense}}=0.8$, and $\mathrm{SDR}_{\mathrm{sparse}}=0.1$.}
\label{tab:power_four_scenarios}
\setlength{\tabcolsep}{8pt}
\renewcommand{\arraystretch}{1.15}
\begin{tabular}{cccc}
\toprule
\textbf{Scenario}  & \textbf{CUSUM} & \textbf{PE--CUSUM} & \textbf{Energy--PE} \\
\midrule
S1    & 1.000 & 1.000 & 1.000 \\
S2 & 0.043 & 0.587 & 0.620 \\
S3   & 0.150 & 0.150 & 1.000 \\
S4    & 1.000 & 1.000 & 1.000 \\
\bottomrule
\end{tabular}
\end{table}

Table~\ref{tab:power_four_scenarios} reports the empirical powers of CUSUM, PE--CUSUM, and Energy--PE under the four alternative scenarios. In the dense one-sided scenarios S1 and S4, all methods achieve power close to one, indicating that these alternatives are relatively easy to detect and that the conventional CUSUM statistic is already effective. Clear differences emerge under the sparse mixed-sign scenarios. In S2, CUSUM has very low power because sign cancellation weakens the cross-sectional average, whereas PE--CUSUM and Energy--PE substantially improve rejection probability, with Energy--PE giving the highest power. The improvement is more pronounced in S3, where Energy--PE attains power equal to one while CUSUM and PE--CUSUM remain much less sensitive. These results indicate that Energy--PE preserves high power under dense aligned alternatives and provides clear gains under sparse or sign-heterogeneous alternatives.

\begin{figure}[htbp]
    \centering
    \includegraphics[width=0.95\textwidth]{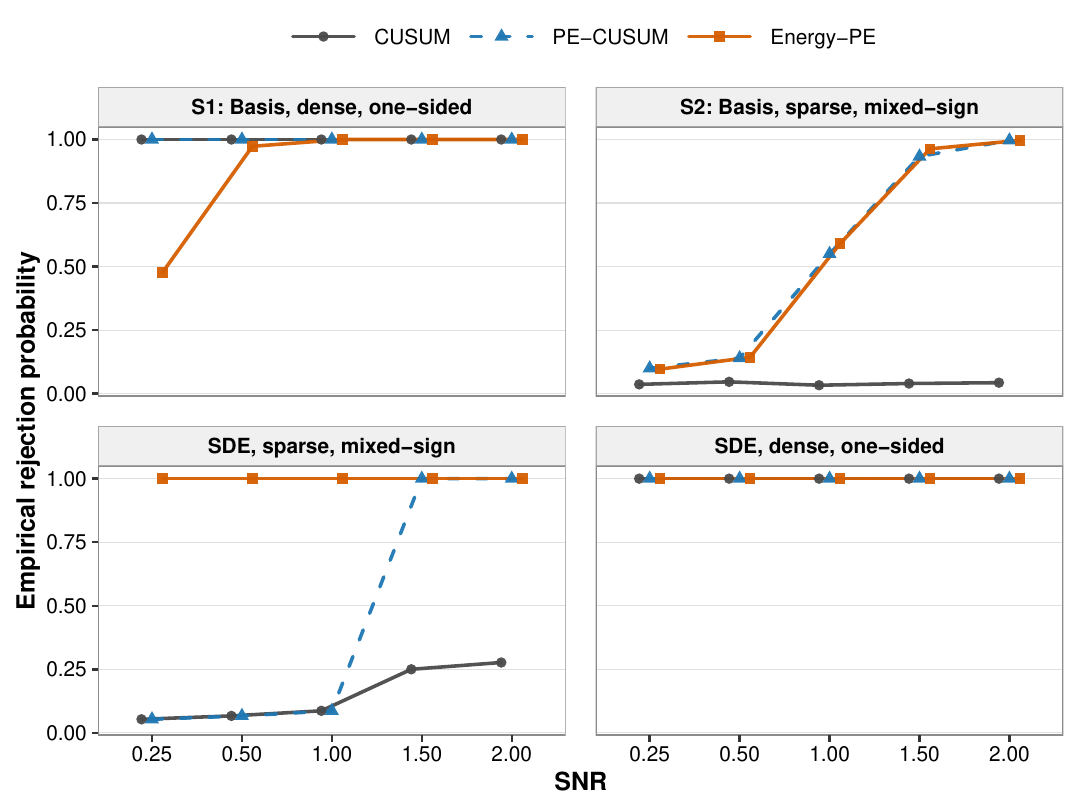}
    \caption{Energy--PE maintains high rejection probability under weak mixed-sign signals. Empirical rejection probabilities are reported for CUSUM, PE--CUSUM, and Energy--PE as the signal-to-noise ratio varies across the four alternative scenarios. The nominal significance level is $\alpha=0.05$, with $N=100$ and $T=200$.}
    \label{fig:power_comparison_snr}
\end{figure}

Figure~\ref{fig:power_comparison_snr} reports the empirical rejection probabilities as the signal-to-noise ratio varies. In the dense one-sided scenarios S1 and S4, all methods quickly attain high power, indicating that these alternatives are relatively easy to detect. The advantage of the proposed Energy--PE becomes more evident in the sparse mixed-sign scenarios S2 and S3, where the conventional CUSUM test suffers from power loss due to sign cancellation in the cross-sectional average. In S2, both PE--CUSUM and Energy--PE improve substantially over CUSUM, with Energy--PE giving slightly higher power in most cases. In S3, Energy--PE shows a particularly strong advantage and maintains high power even under weak signals. These results suggest that Energy--PE preserves competitive performance in dense aligned alternatives while providing clear gains under more challenging sign-heterogeneous settings.

\subsection{Post-Test Identification and Break-Location Estimation}
\label{subsec:post-test-identification}

After rejecting the global null hypothesis, we further identify the subjects that contain structural breaks and estimate their break locations. For the proposed Energy--PE procedure, this post-test step uses the quantities introduced in Section~\ref{subsec:sparse-enhancement}: the subject-specific energy score \(Y_{iT}\) in \eqref{eq:subject_energy}, the standardized score \(S_{iT}\) in \eqref{eq:standardized_energy}, and the generalized enhancement component \(\mathcal Z_{NT}^{g}\) in \eqref{eq:enhancement_general}. We write \(g_{iT}=g(S_{iT})\) for the transformed subject-level score.

For subject identification, we threshold the subject-level energy scores. In the hard-threshold case $g(z)=\mathbf 1\{z>0\}$, the estimated set of subjects with breaks is
\[
\widehat{\mathcal C}_{\bullet}^{\,E}
=
\left\{
1\le i\le N:
Y_{iT}>\xi_{NT}
\right\}.
\]
More generally, for a monotone transformation $g$, we use
\[
\widehat{\mathcal C}_{\bullet}^{\,g}
=
\left\{
1\le i\le N:
g_{iT}>g(0)
\right\}.
\]
For the monotone choices considered in this study, including the indicator, positive-part, softplus, and sigmoid transformations, this rule is equivalent to $Y_{iT}>\xi_{NT}$. Hence, the subject-screening step is driven by the same underlying CUSUM energy score, while different choices of $g$ mainly affect the global enhancement statistic. This also implies that the post-test screening mechanism is closely related to the subject-level screening used in PE--CUSUM; the main methodological difference between PE--CUSUM and Energy--PE lies in the construction of the global test statistic, where Energy--PE replaces the mean-aggregated CUSUM component with a quadratic subject-wise energy aggregation.

For each detected subject $i\in\widehat{\mathcal C}_{\bullet}^{\,g}$, the individual break location is estimated by
\[
\widehat{\tau}_{i}^{\,E}
=
\arg\max_{1\le t<T}
\int_{\mathcal C}
C_{i,T}^2(t/T,u)\,du .
\]
Thus, the same subject-specific energy path is used both to determine whether a subject contains a break and to locate the break point.

We evaluate the subject-identification accuracy by treating subjects with breaks as class one and subjects without breaks as class zero. Let TP, FP, and FN denote the numbers of true positives, false positives, and false negatives, respectively. We report the true positive rate (TPR), precision, and the $F_1$ score, defined as
\[
\mathrm{TPR}
=
\frac{\mathrm{TP}}{\mathrm{TP}+\mathrm{FN}},
\qquad
\mathrm{Precision}
=
\frac{\mathrm{TP}}{\mathrm{TP}+\mathrm{FP}},
\]
and
\[
F_1
=
\frac{2\,\mathrm{Precision}\cdot \mathrm{TPR}}
{\mathrm{Precision}+\mathrm{TPR}}
=
\frac{\mathrm{TP}}
{\mathrm{TP}+(\mathrm{FP}+\mathrm{FN})/2}.
\]
Here, TPR measures the proportion of truly changed subjects that are successfully detected, whereas precision measures the proportion of detected subjects that are truly changed. The accuracy of break-location estimation is measured by the mean squared distance between the estimated and true break locations,
\[
\mathrm{MSD}_{\tau}
=
\frac{1}{|\mathcal C_{\bullet}|}
\sum_{i\in\mathcal C_{\bullet}}
(\widehat \tau_i/T-\tau_i/T)^2\times10^3,
\]
where $\mathcal C_{\bullet}$ denotes the set of subjects with true structural breaks.

In the main scenarios S1--S4, the break locations are generated independently across changed subjects. Therefore, these scenarios are used to assess global power, subject identification, and individual break-location estimation. To evaluate the latent common-break structure, we additionally consider a separate setting in which changed subjects share a small number of common break points. In this setting, we cluster the estimated break locations by a gap-based procedure, select the number of latent groups using an information criterion, and report the accuracy of group-number estimation, Purity, normalized mutual information (NMI), and the post-clustering break-location error.

Let
\[
Y_{iT}
=
\max_{1\le t<T}
\int_0^1 C_{iT}^2(t/T,u)\,du
\]
denote the subject-level CUSUM energy statistic. For each Monte Carlo
sample generated under the null hypothesis, denoted by
\(b=1,\ldots,B\), we compute
\[
Y_{iT}^{0,(b)}, \qquad i=1,\ldots,N .
\]

\paragraph{Subjectwise calibration.}
The subjectwise calibrated threshold is defined as
\[
\widehat \xi_{NT}^{\mathrm{sub}}(\alpha)
=
\widehat Q_{1-\alpha}
\left(
\left\{
Y_{iT}^{0,(b)}
:
i=1,\ldots,N,\ b=1,\ldots,B
\right\}
\right),
\]
where \(\widehat Q_{1-\alpha}(\cdot)\) denotes the empirical
\((1-\alpha)\)-quantile. This threshold approximately controls the
false positive probability for each individual subject:
\[
\mathbb P_{H_0}
\left(
Y_{iT} > \widehat \xi_{NT}^{\mathrm{sub}}(\alpha)
\right)
\approx \alpha .
\]

\paragraph{Familywise max calibration.}
The familywise calibrated threshold is defined as
\[
\widehat \xi_{NT}^{\mathrm{fam}}(\alpha)
=
\widehat Q_{1-\alpha}
\left(
\left\{
\max_{1\le i\le N} Y_{iT}^{0,(b)}
:
b=1,\ldots,B
\right\}
\right).
\]
This threshold approximately controls the probability of selecting at
least one false positive subject under the null:
\[
\mathbb P_{H_0}
\left(
\max_{1\le i\le N}Y_{iT}
>
\widehat \xi_{NT}^{\mathrm{fam}}(\alpha)
\right)
\approx \alpha .
\]

Figure~\ref{fig:calibration_performance} summarizes the effect of the
two H0-calibrated thresholding rules on post-test subject identification.
The subjectwise calibration yields better overall $F_1$ scores in the
basis-expansion settings S1 and S2, while the familywise max calibration
performs particularly well in the SDE-based settings S3 and S4. The
TPR--Precision plot shows the underlying tradeoff: subjectwise calibration
increases recall, whereas familywise max calibration imposes stronger
false-positive control and therefore yields higher precision.

\begin{figure}[htbp]
\centering

\begin{subfigure}{0.82\textwidth}
\centering
\includegraphics[width=\textwidth]{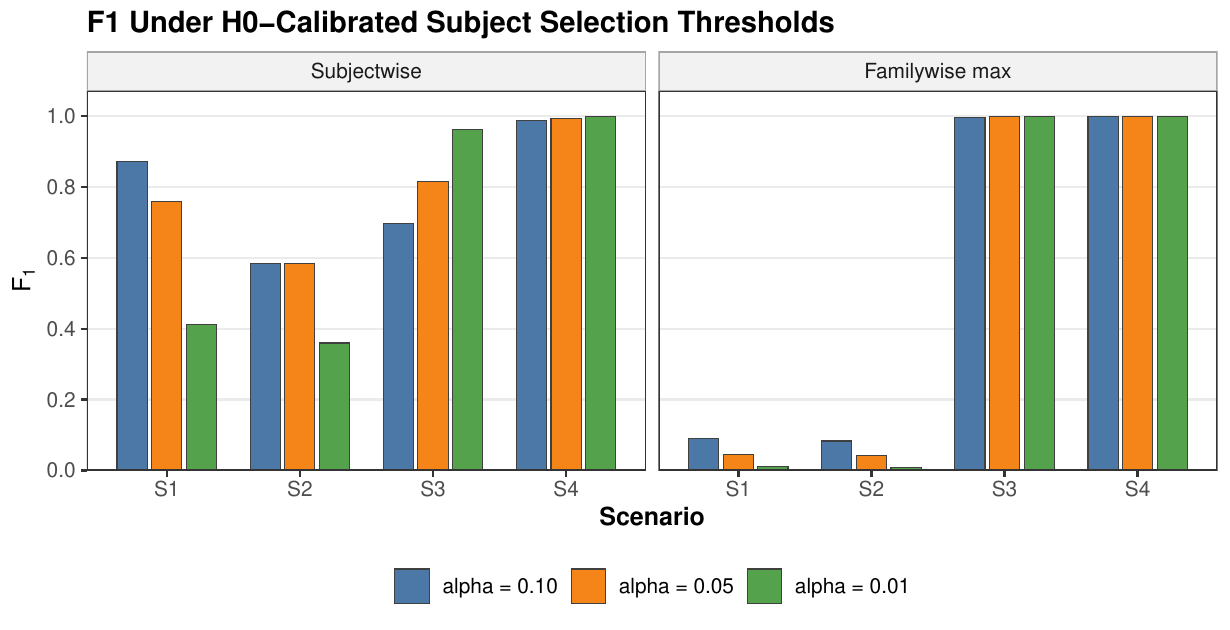}
\caption{$F_1$ score}
\label{fig:calibration_f1}
\end{subfigure}

\vspace{0.8em}

\begin{subfigure}{0.82\textwidth}
\centering
\includegraphics[width=\textwidth]{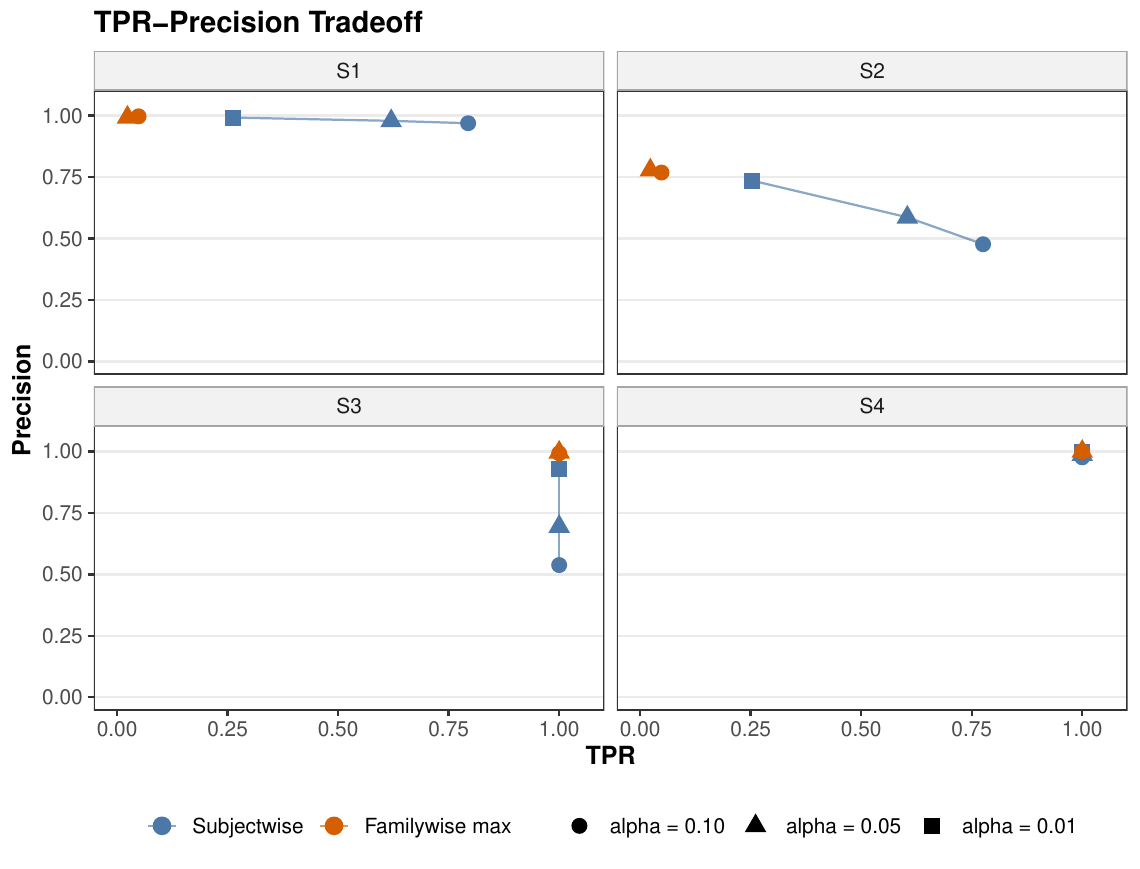}
\caption{TPR--precision tradeoff}
\label{fig:calibration_tpr_precision}
\end{subfigure}

\caption{Subjectwise calibration improves recall, whereas familywise max calibration improves precision. Post-test subject-identification performance is compared using $F_1$ scores and the TPR--precision tradeoff.}
\label{fig:calibration_performance}
\end{figure}

\begin{table}[htbp]
\centering
\caption{Post-Test Identification Accuracy across Signal-to-Noise Ratios. Results use $g(z)=\mathbf{1}\{z>0\}$, $N=100$, $T=200$, and $n_{\mathrm{rep}}=300$.}
\label{tab:post_identification_snr_h0_subjectwise}
\setlength{\tabcolsep}{6pt}
\renewcommand{\arraystretch}{1.15}
\begin{tabular}{lccccc}
\toprule
\multirow{2}{*}{\textbf{Scenario}} & \multirow{2}{*}{\textbf{Metric}}
& \multicolumn{4}{c}{\textbf{SNR}} \\
\cmidrule(lr){3-6}
& & 0.25 & 0.50 & 1.00 & 2.00 \\
\midrule
\multirow{2}{*}{S1} & $F_1$ & 0.212 & 0.377 & 0.762 & 0.980 \\
                    & MSD$_\tau$ & 19.166 & 11.819 & 5.143 & 2.046 \\
\midrule
\multirow{2}{*}{S2} & $F_1$ & 0.143 & 0.271 & 0.598 & 0.803 \\
                    & MSD$_\tau$ & 15.975 & 10.785 & 6.153 & 2.410 \\
\midrule
\multirow{2}{*}{S3} & $F_1$ & 0.816 & 0.821 & 0.826 & 0.816 \\
                    & MSD$_\tau$ & 0.030 & 0.007 & 0.001 & 0.000 \\
\midrule
\multirow{2}{*}{S4} & $F_1$ & 0.994 & 0.993 & 0.994 & 0.993 \\
                    & MSD$_\tau$ & 0.101 & 0.023 & 0.005 & 0.001 \\
\bottomrule
\end{tabular}
\end{table}

The pre-clustering mean squared distance (Pre-MSD) measures the localization error of the individual break estimates before applying the latent-group structure:
\[
\mathrm{Pre\text{-}MSD}
=
\frac{1}{|\widehat{\mathcal C}_{\bullet}\cap\mathcal C_{\bullet}|}
\sum_{i\in\widehat{\mathcal C}_{\bullet}\cap\mathcal C_{\bullet}}
(\widehat{\tau}_i-\tau_i)^2.
\]
The post-clustering mean squared distance (Post-MSD) measures the localization error after grouping subjects with common break points and re-estimating the group-level breaks by pooled CUSUM energy:
\[
\mathrm{Post\text{-}MSD}
=
\frac{1}{K_0}
\sum_{k=1}^{K_0}
(\widehat b_{(k)}-b_{(k)})^2,
\]
where \(b_{(k)}\) and \(\widehat b_{(k)}\) are the ordered true and estimated common break points, respectively. Smaller values of Pre-MSD and Post-MSD indicate more accurate break-location estimation.

\paragraph{Latent common-break setting.}
To further examine whether the proposed procedure can recover latent common-break structures, we consider an additional simulation setting. In contrast to the main scenarios S1--S4, where the break locations are generated independently across changed subjects, this setting assumes that changed subjects share a small number of common break points. Specifically, we set \(K_0=3\) and choose the common break locations as
\[
b_1=\lfloor 0.25T\rfloor,\qquad
b_2=\lfloor 0.50T\rfloor,\qquad
b_3=\lfloor 0.75T\rfloor .
\]
A proportion of subjects are generated without structural breaks, while the changed subjects are evenly divided into three latent groups, each associated with one of the common break points. The proposed Energy--PE procedure is first used to identify subjects with breaks and estimate their individual break locations. The estimated break locations are then clustered by the gap-based procedure, and the number of latent groups is selected by the information criterion described above.

We evaluate the latent-structure recovery by reporting the probability of correctly estimating the number of groups, \(\Pr(\widehat K=K_0)\), together with Purity and normalized mutual information (NMI) for group membership recovery. To assess localization accuracy, we report both the pre-clustering mean squared distance (Pre-MSD), based on the individual estimates \(\widehat\tau_i\), and the post-clustering mean squared distance (Post-MSD), based on the pooled group-level estimates \(\widehat b_k\).

\begin{table}[htbp]
\centering
\caption{Recovery of Latent Common-Break Groups by Energy--PE}
\label{tab:latent_structure}
\setlength{\tabcolsep}{6pt}
\renewcommand{\arraystretch}{1.15}
\begin{tabular}{lccccc}
\toprule
\textbf{SNR} 
& $\Pr(\widehat K=K_0)$ 
& \textbf{Purity} 
& \textbf{NMI} 
& \textbf{Pre-MSD} 
& \textbf{Post-MSD} \\
\midrule
0.25 & 0.143 & 0.375 & 0.055 & 31.736 & 13.101 \\
0.50 & 0.070 & 0.403 & 0.075 & 19.371 & 7.444 \\
1.00 & 0.073 & 0.511 & 0.160 & 9.447 & 1.877 \\
1.50 & 0.360 & 0.622 & 0.312 & 6.907 & 0.854 \\
2.00 & 0.747 & 0.790 & 0.550 & 5.217 & 0.355 \\
\bottomrule
\end{tabular}
\end{table}

\section{Empirical Application}
\label{sec:empirical}

This section illustrates the proposed Energy--PE procedure using Chinese A-share stock data. Each cross-sectional unit corresponds to one stock, and each time point corresponds to one intraday return curve. We use five-minute prices rather than daily closing prices because intraday trajectories provide a natural functional representation of within-day return dynamics.

The data are obtained from TuShare Pro and consist of five-minute prices for actively traded Chinese A-share stocks. To better examine the latent common-break grouping structure, we retain a relatively broad balanced panel after merging repeated downloads and removing stocks with incomplete intraday records. The final sample contains $N=22$ stocks over the period from November 14, 2022 to September 11, 2025, yielding $T=688$ trading-day functional observations. For each stock $i$ and trading day $t$, we construct the intraday cumulative return curve
\[
X_{i,t}(u_j)
=
100\{\log P_{i,t}(u_j)-\log P_{i,t}(u_1)\},
\qquad j=1,\ldots,48,
\]
where $P_{i,t}(u_j)$ denotes the five-minute price at the $j$th intraday grid point. Thus, each observation $X_{i,t}(\cdot)$ is treated as a functional object describing the within-day cumulative return pattern of stock $i$ on trading day $t$.

\begin{table}[htbp]
\centering
\caption{A-Share Intraday Functional Panel Used in the Empirical Analysis}
\label{tab:real_stock_data_description}
\setlength{\tabcolsep}{7pt}
\renewcommand{\arraystretch}{1.15}
\begin{tabular}{ll}
\toprule
\textbf{Item} & \textbf{Description} \\
\midrule
Data Source & TuShare Pro five-minute prices \\
Sample Period & 2022-11-14 to 2025-09-11 \\
Number of Stocks & $N=22$ \\
Number of Functional Observations & $T=688$ trading days \\
Functional Grid & 48 five-minute intraday positions \\
Functional Observation & $100\{\log P_{i,t}(u_j)-\log P_{i,t}(u_1)\}$ \\
\bottomrule
\end{tabular}
\end{table}

We first apply the proposed Energy--PE global test to examine whether the high-dimensional intraday functional panel contains structural changes. The null hypothesis is
\[
H_0:\quad \mu_{i,t}(u)=\mu_i(u),\qquad
i=1,\ldots,N,\quad t=1,\ldots,T,
\]
whereas the alternative allows the mean functions of a subset of stocks to change over time. Thus, the global test is used as the first-stage screening step. Only after obtaining evidence against the global null do we proceed to post-test subject selection, individual break-location estimation, and latent common-break clustering.

For the global test, we use the Energy--PE statistic with the same main threshold choice $c_\xi=0.10$ as in the post-test analysis. Critical values are computed by a centered circular block bootstrap. Specifically, we first remove the stock-specific time average from each intraday curve and then synchronously resample time blocks across all stocks, so that the bootstrap samples preserve the cross-sectional dependence and short-run temporal dependence under the null hypothesis. We use $B=1000$ bootstrap replications with block length equal to 10 trading days.

\begin{table}[htbp]
\centering
\caption{Global Energy--PE Test for Structural Changes in the A-Share Intraday Functional Panel. The critical value is obtained from a centered circular block bootstrap with $B=1000$ bootstrap replications and block length equal to 10 trading days ($\alpha=0.10$).}
\label{tab:real_stock_global_test}
\setlength{\tabcolsep}{7pt}
\renewcommand{\arraystretch}{1.15}
\begin{tabular}{ccccc}
\toprule
$c_\xi$ & Test Statistic & Critical Value & $p$-value & Decision \\
\midrule
0.10 & 210.352 & 184.292 & 0.076 & Reject $H_0$ \\
\bottomrule
\end{tabular}
\end{table}

Table~\ref{tab:real_stock_global_test} reports the first-stage global
Energy--PE test. At the nominal significance level $\alpha=0.10$, the
test statistic is 210.352, exceeding the bootstrap critical value 184.292.
The bootstrap $p$-value is 0.076. Therefore, we reject the null hypothesis
of no structural change in the high-dimensional intraday functional panel.
This provides statistical support for proceeding to the post-test analysis,
including subject selection, individual break-location estimation, and
latent common-break clustering.

We then identify which stocks contribute most strongly to the detected structural-change evidence. The subject-level screening threshold is set as
\[
\xi_{NT}
=
c_\xi \log(N\vee T)\log\log(N\vee T).
\]
Because the intraday functional observations have a different scale from the weekly curves, we report a sensitivity analysis over several smaller values of $c_\xi$. The main empirical interpretation is based on $c_\xi=0.10$, which gives a relatively conservative subject selection rule while still retaining a nontrivial latent common-break structure.

\begin{table}[htbp]
\centering
\caption{Sensitivity of A-Share Subject Selection to the Screening Threshold}
\label{tab:real_stock_cxi_sensitivity}
\setlength{\tabcolsep}{7pt}
\renewcommand{\arraystretch}{1.15}
\begin{tabular}{ccccc}
\toprule
$c_\xi$ & $\xi_{NT}$ & $\widehat{|\mathcal C|}$ & $\widehat K$ & Energy--PE Statistic \\
\midrule
0.025 & 0.307 & 18 & 1 & 472.650 \\
0.050 & 0.613 & 11 & 2 & 289.041 \\
0.075 & 0.920 & 8  & 4 & 210.352 \\
0.100 & 1.226 & 8  & 4 & 210.352 \\
0.125 & 1.533 & 5  & 3 & 131.663 \\
\bottomrule
\end{tabular}
\end{table}

Table~\ref{tab:real_stock_cxi_sensitivity} shows that the number of selected stocks decreases as $c_\xi$ increases, as expected. For very small thresholds, many stocks are selected and the estimated common-break structure is relatively coarse. For $c_\xi=0.075$ and $c_\xi=0.10$, however, the procedure selects the same number of stocks and estimates the same number of latent groups, suggesting that the grouping result is locally stable with respect to the threshold choice. We therefore use $c_\xi=0.10$ as the main specification.

\begin{table}[htbp]
\centering
\caption{Selected Stocks with the Strongest Intraday Structural Break Evidence}
\label{tab:real_stock_selected_subjects}
\setlength{\tabcolsep}{7pt}
\renewcommand{\arraystretch}{1.15}
\begin{tabular}{lccc}
\toprule
\textbf{Stock} & $\widehat\tau_i$ & \textbf{Estimated Break Date} & $Y_{iT}$ \\
\midrule
002415.SZ & 98  & 2023-04-07 & 1.588 \\
300750.SZ & 267 & 2023-12-18 & 1.775 \\
601888.SH & 269 & 2023-12-20 & 2.849 \\
600309.SH & 300 & 2024-02-02 & 1.455 \\
600048.SH & 343 & 2024-04-15 & 1.816 \\
601857.SH & 347 & 2024-04-19 & 1.451 \\
000002.SZ & 349 & 2024-04-23 & 2.170 \\
600000.SH & 493 & 2024-11-26 & 1.519 \\
\bottomrule
\end{tabular}
\end{table}

After the post-test subject selection step, we further apply the proposed latent common-break clustering procedure to the selected stocks. With $c_\xi=0.10$, the estimated number of common-break groups is $\widehat K=4$. The resulting group membership and post-clustering common break estimates are given in Table~\ref{tab:real_stock_common_break_groups}.

\begin{table}[htbp]
\centering
\caption{Estimated Latent Common-Break Groups in the A-Share Application}
\label{tab:real_stock_common_break_groups}
\setlength{\tabcolsep}{6pt}
\renewcommand{\arraystretch}{1.15}
\begin{tabular}{cccc}
\toprule
\textbf{Group} & $\widehat b_k$ & \textbf{Estimated Common Break Date} & \textbf{Stocks} \\
\midrule
1 & 98  & 2023-04-07 & 002415.SZ \\
2 & 281 & 2024-01-08 & 300750.SZ, 601888.SH, 600309.SH \\
3 & 349 & 2024-04-23 & 600048.SH, 601857.SH, 000002.SZ \\
4 & 493 & 2024-11-26 & 600000.SH \\
\bottomrule
\end{tabular}
\end{table}

The clustering step plays an important role in the empirical interpretation. It does not simply list stock-level break dates, but further determines whether the detected breaks are synchronized across stocks. In this application, Groups 2 and 3 contain multiple stocks, indicating that some selected stocks share similar structural break timing. Groups 1 and 4 are singleton groups, suggesting that the corresponding changes are more stock-specific.

The estimated groups are also economically interpretable. Group 1 contains 002415.SZ, Hikvision. Its estimated break date is April 7, 2023, a period when A-share technology and artificial-intelligence-related stocks were actively repriced. Hikvision has substantial exposure to AIoT, machine vision, and digital transformation applications. The detected break therefore appears consistent with the early-2023 revaluation of AI and digital-economy themes, together with firm-specific reassessment of Hikvision's growth profile.

Group 2 contains 300750.SZ, 601888.SH, and 600309.SH, with a pooled break estimate on January 8, 2024. This group combines new-energy, duty-free consumption, and chemical-sector stocks. The common break date coincides with a broad weakening of the A-share market at the beginning of 2024, when the Shanghai Composite Index fell below 2900 and most stocks declined. The group also reflects sector-specific pressures around late 2023 and early 2024. For 300750.SZ, CATL, the individual break occurs near December 18, 2023, when CATL fell sharply and the ChiNext index reached a new low. The abrupt termination of Germany's electric-vehicle subsidy programme in December 2023 also added pressure to the global electric-vehicle and battery supply chain. For 601888.SH, China Tourism Group Duty Free, the late-2023 adjustment is consistent with the market's reassessment of post-pandemic consumption recovery and duty-free demand. Thus, Group 2 can be interpreted as a common intraday regime shift associated with growth-stock valuation pressure and sector-level demand reassessment.

Group 3 contains 600048.SH, 601857.SH, and 000002.SZ, with a common break estimate on April 23, 2024. Two of the stocks, 600048.SH and 000002.SZ, are major real-estate developers, while 601857.SH is PetroChina, a large state-owned high-dividend energy stock. Around April 2024, Chinese real-estate equities were under substantial valuation pressure, while policy expectations for stabilizing the property market were intensifying. At the same time, the A-share market showed a stronger preference for state-owned, high-dividend, and defensive large-cap assets. The estimated group therefore appears to capture a broader style and policy-driven regime shift involving real-estate stress, policy support expectations, and high-dividend state-owned enterprise allocation.

Group 4 contains only 600000.SH, Shanghai Pudong Development Bank, with an estimated break date of November 26, 2024. Around this period, the banking sector showed visibly stronger trading activity, and several bank stocks approached stage highs. The break in 600000.SH is therefore consistent with a banking-sector repricing episode related to high-dividend allocation, expectations of improving asset quality, and defensive positioning in the equity market.

To better understand the nature of the estimated breaks, Figure~\ref{fig:real_stock_group_mean_prepost} plots the estimated pre- and post-break mean functions for each latent group. The curves show clear changes in the intraday cumulative return patterns after the estimated group-level break dates. For Groups 2, 3, and 4, the post-break mean curves lie above the pre-break curves over most intraday grid points, indicating a shift in the average within-day return profile. Group 1 displays the opposite pattern, with a substantially lower post-break mean curve. These differences indicate that the detected breaks correspond to meaningful changes in the average intraday return dynamics rather than isolated daily price movements.

\begin{figure}[htbp]
\centering
\includegraphics[width=0.92\textwidth]{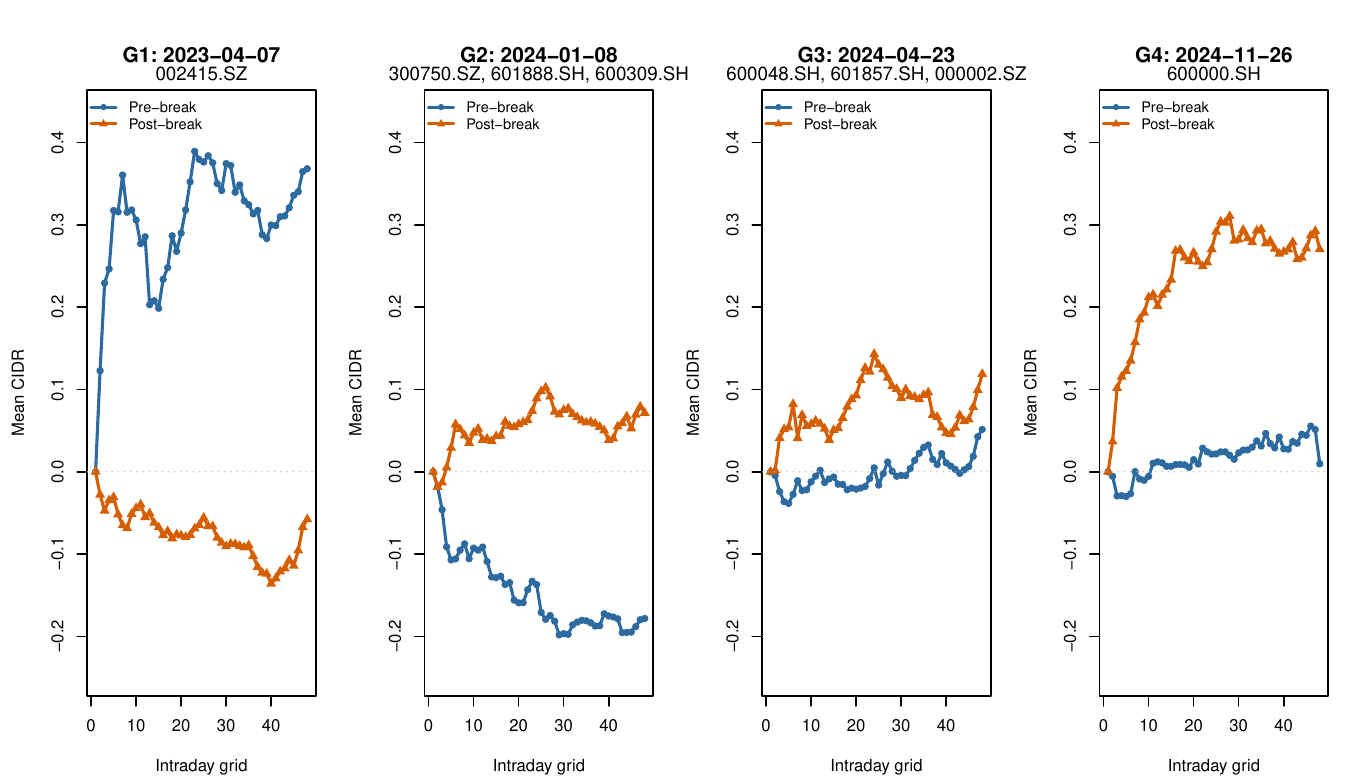}
\caption{Selected A-share stocks exhibit distinct pre- and post-break intraday mean functions. Curves show the estimated mean functions before and after the group-level breaks selected by Energy--PE with $c_\xi=0.10$. The horizontal axis denotes the five-minute intraday grid.}
\label{fig:real_stock_group_mean_prepost}
\end{figure}

The empirical findings provide marginal but economically meaningful evidence of structural changes in the A-share intraday functional panel. The global Energy--PE test rejects the null hypothesis at the 10\% level, and the subsequent post-test analysis identifies a sparse subset of stocks whose intraday functional mean patterns exhibit structural changes. The clustering step further provides an interpretable summary of the latent common-break structure. In particular, the method distinguishes stock-specific breaks from common-break groups and links the detected changes to economically meaningful market episodes, including AI-related repricing, new-energy and consumption-sector pressure, real-estate policy expectations, and banking-sector high-dividend allocation.

\section{Conclusion}
\label{sec:conclusion}

This paper develops an Energy--PE procedure for testing and estimating structural breaks in high-dimensional functional panel data. The key methodological step is to aggregate squared subject-specific CUSUM energies before adding a generalized power-enhancement component. This construction preserves signal strength under dense, moderately dense, and sign-canceling alternatives, while the enhancement term improves sensitivity to sparse strong alternatives. The counterexample in Section~\ref{sec:model} shows why linearly aggregated PE--CUSUM statistics can miss symmetric break patterns, and the proposed energy aggregation directly addresses this failure mode.

The asymptotic theory establishes bootstrap calibration under the null, consistency under dense and sparse alternatives, changed-subject screening, preliminary localization, latent break-group recovery, and post-clustering pooled refinement. The simulation design assesses the method across different functional-generation mechanisms, sparsity regimes, and jump-direction patterns, and the reported size results indicate that the proposed calibration is stable in the considered settings. The empirical application to intraday stock return curves further illustrates the practical value of combining global testing, subject screening, and latent common-break grouping in a single inferential workflow.

The proposed framework is designed for mean-shift alternatives in high-dimensional functional panels. Extensions to changes in covariance structure, fully data-driven tuning of the enhancement threshold, and sequential monitoring settings are natural directions for future work.


\appendix

\section{Sign-Cancellation Counterexample}
\label{app:A}

We provide the detailed construction behind the counterexample summarized in Section~\ref{sec:model}. Consider a panel of functional observations
\(\{X_{i,t}(u): i=1,\ldots,N,\ t=1,\ldots,T\}\) in \(L^2[0,1]\) generated by
\begin{equation}\label{eq:counterexample_model}
X_{i,t}(u)
=
\varepsilon_{i,t}(u)+\mathbf 1\{t>\tau\}\delta_i(u),
\qquad
\tau=\lfloor T/2\rfloor .
\end{equation}
The noise process admits the finite Karhunen--Lo\`eve representation
\[
\varepsilon_{i,t}(u)
=
\sum_{j=1}^{J}\eta_{i,t,j}\phi_j(u),
\]
where \(\{\phi_j\}\) are Fourier basis functions on \([0,1]\), and the coefficients \(\eta_{i,t,j}\) are centered Gaussian variables with uniformly bounded variances. The change functions are
\[
\delta_i(u)=s_i\,\delta\,\delta^\star(u),
\qquad
s_i\in\{+1,-1\},
\]
where exactly half of the subjects have \(s_i=+1\) and the remaining half have \(s_i=-1\). The common direction is
\[
\delta^\star(u)=\frac{1}{\sqrt m}\sum_{k=1}^{m}\phi_k(u),
\]
with fixed \(m\). Thus
\[
\frac1N\sum_{i=1}^{N}\delta_i(u)\equiv0,
\qquad
\frac1N\sum_{i=1}^{N}\|\delta_i\|_{L^2}^2
=
\delta^2\|\delta^\star\|_{L^2}^2>0.
\]
The first identity removes the deterministic signal from any statistic based on the cross-sectional average, whereas the second identity shows that subject-wise squared-energy aggregation retains the signal.

\begin{proposition}[Why PE--CUSUM fails but Energy--CUSUM succeeds]
\label{prop:short_counterexample}
Suppose that $N$ is even, $\tau_i=\tau_0=\lfloor Tr_0\rfloor$ for all $i$, and
\[
\delta_i(u)=\delta(u),\quad i=1,\dots,N/2,
\qquad
\delta_i(u)=-\delta(u),\quad i=N/2+1,\dots,N,
\]
for some $\delta\in L^2(C)\setminus\{0\}$. Then:

\begin{enumerate}
\item[(1)] the panel-average jump vanishes,
\[
\frac1N\sum_{i=1}^N\delta_i(u)=0,
\]
so the mean-based CUSUM component in PE--CUSUM does not contain any deterministic break signal;

\item[(2)] if
\[
T\|\delta\|_{L^2}^2=o(\xi_{NT}),
\]
and the subject-wise CUSUM noise energies obey a uniform upper-tail bound
\[
N\sup_{1\le i\le N}
P\!\left(
\sup_{0\le x\le1}\int_C \{C_{i,T}^{(\varepsilon)}(x,u)\}^2\,du>\xi_{NT}/4
\right)\to0,
\]
then
\[
P\!\left(Z_{NT}^{\diamond}=0\right)\to 1,
\]
and hence PE--CUSUM has no asymptotic power under this alternative;

\item[(3)] for the Energy--CUSUM statistic,
\[
E_T(r_0)
=
E_T^{(\varepsilon)}(r_0)
+
r_0^2(1-r_0)^2T\|\delta\|_{L^2}^2
+
o_P\!\bigl(T\|\delta\|_{L^2}^2\bigr),
\]
provided that
\[
T\|\delta\|_{L^2}^2\to\infty.
\]
Therefore, Energy--CUSUM remains powerful in this sign-cancellation regime.
\end{enumerate}
\end{proposition}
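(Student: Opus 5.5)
The plan is to decompose every CUSUM process into a deterministic signal part and a noise part, and then to treat the three claims one at a time. Write \(C_{i,T}=C_{i,T}^{(\varepsilon)}+D_{i,T}\). Because \(\tau_i=\tau_0\) for all \(i\), a direct computation of the CUSUM transform of \(\delta_i\mathbf 1\{t>\tau_0\}\) gives
\[
D_{i,T}(x,u)=s_i\,\sqrt T\,h_T(x)\,\delta(u),
\qquad
h_T(x)=\frac{\lfloor Tx\rfloor\wedge\tau_0}{T}-\frac{\lfloor Tx\rfloor}{T}\cdot\frac{\tau_0}{T}+\text{(sign-adjusted piecewise form)}.
\]
Concretely, \(h_T(x)\) equals \(-\lfloor Tx\rfloor(T-\tau_0)/T^2\) for \(\lfloor Tx\rfloor\le\tau_0\) and \(-\tau_0(T-\lfloor Tx\rfloor)/T^2\) afterwards. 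Hence \(|h_T(x)|\le r_0(1-r_0)+O(1/T)\), with equality attained at \(x=r_0\).

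\emph{Part (1).} This is immediate from linearity. Averaging \(\delta_i\) over the two halves gives \(\tfrac1N\sum_i s_i\,\delta=0\). Since the CUSUM transform is linear, the pooled CUSUM \(\tfrac1N\sum_i C_{i,T}\) equals \(\tfrac1N\sum_i C_{i,T}^{(\varepsilon)}\). It therefore has exactly the null law and contains no deterministic break component.

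\emph{Part (2).} Since \(Z_{NT}^{\diamond}\) is \(\sqrt{N\vee T}\) times a count, it suffices to show \(P(\max_i Y_{iT}>\xi_{NT})\to0\). Use \((a+b)^2\le 2a^2+2b^2\) pointwise in \((x,u)\), integrate, and take the supremum to get
\[
Y_{iT}\le 2\sup_x\|C_{i,T}^{(\varepsilon)}(x)\|^2+2\sup_x\|D_{i,T}(x)\|^2 .
\]
The second term is at most \(2T r_0^2(1-r_0)^2\|\delta\|^2(1+o(1))=o(\xi_{NT})\), which is below \(\xi_{NT}/2\) for large \(N,T\). Consequently \(\{Y_{iT}>\xi_{NT}\}\subset\{\sup_x\|C^{(\varepsilon)}_{i,T}(x)\|^2>\xi_{NT}/4\}\). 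A union bound over \(i\) together with the assumed uniform tail bound then gives \(P(\exists i: Y_{iT}>\xi_{NT})\to0\), so \(Z_{NT}^{\diamond}=0\) w.p.a.1. Combining this with Part (1), the PE--CUSUM statistic has the same asymptotic law as under \(H_0\), and its rejection probability tends to at most \(\alpha\). This is what "no asymptotic power" means here.

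\emph{Part (3).} Expand the square at \(x=r_0\):
\[
E_T(r_0)=E_T^{(\varepsilon)}(r_0)+\frac1N\sum_i\|D_{i,T}(r_0)\|^2+\frac2N\sum_i\langle C_{i,T}^{(\varepsilon)}(r_0),D_{i,T}(r_0)\rangle .
\]
The middle term is deterministic and equals \(T h_T(r_0)^2\|\delta\|^2=r_0^2(1-r_0)^2T\|\delta\|^2+O(\|\delta\|^2)\); the \(O(\|\delta\|^2)\) remainder is \(o(T\|\delta\|^2)\). For the cross term, Cauchy--Schwarz gives the bound
\[
2\Bigl(\tfrac1N\sum_i\|C^{(\varepsilon)}_{i,T}(r_0)\|^2\Bigr)^{1/2}\Bigl(\tfrac1N\sum_i\|D_{i,T}(r_0)\|^2\Bigr)^{1/2}.
\]
By Assumption~\ref{ass:regularity}(2), the first factor is \(O_P(1)\). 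The cross term is therefore \(O_P\bigl((T\|\delta\|^2)^{1/2}\bigr)=o_P(T\|\delta\|^2)\), because \(T\|\delta\|^2\to\infty\). This gives the displayed expansion. Since \(E_T^{(\varepsilon)}(r_0)\ge0\), it follows that \(Z_T^{\mathrm{Energy}}\ge E_T(r_0)\to\infty\) in probability. Combined with \(c^*_{NT,\alpha}=O_P(1)\), as in Theorem~\ref{thm:3.4}, the test rejects w.p.a.1.

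The main obstacle is the cross term. The Cauchy--Schwarz route requires the null energy average to be \(O_P(1)\), and this must hold under the alternative's noise. That is fine, because the noise is unchanged by the alternative and Assumption~\ref{ass:regularity}(2) applies to \(C^{(\varepsilon)}\). A secondary subtlety is the meaning of "no power" in Part (2), which rests on Part (1) making the mean-based component exactly null-distributed.
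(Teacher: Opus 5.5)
Your proof is correct and follows the paper's three-part structure. The real difference is how you handle the cross term in Part (3).

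\textbf{Part (3).} The paper writes the averaged cross term as $\frac{2}{N}\sum_i s_i\sqrt{T}\,r_0(1-r_0)\langle C^{(\varepsilon)}_{i,T}(r_0),\delta\rangle$. It then argues this term is centered and, invoking independence across $i$, that it is $O_P(\sqrt{T/N}\,\|\delta\|_{L^2})$. You instead bound it by Cauchy--Schwarz, using only $\frac1N\sum_i\|C^{(\varepsilon)}_{i,T}(r_0)\|^2=O_P(1)$, which follows from Assumption~\ref{ass:regularity}(2) and Markov's inequality. This gives the cruder rate $O_P\bigl((T\|\delta\|^2_{L^2})^{1/2}\bigr)$.

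Your route buys generality. Cross-sectional independence is not a hypothesis of the proposition. It is also not implied by Assumption~\ref{ass:regularity}, which allows cross-sectional dependence through the joint innovation vector $\eta_t$. The cruder rate still suffices, because the claimed remainder is only $o_P(T\|\delta\|^2_{L^2})$.

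When independence does hold, the paper's route gives a sharper remainder by averaging independent centered terms across subjects. For example, the error is $o_P(1)$ whenever $T\|\delta\|^2_{L^2}=o(N)$. That is more than the statement requires.

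\textbf{Part (2).} You use the same union-bound idea as the paper. You additionally make explicit, via $(a+b)^2\le 2a^2+2b^2$, why the noise tail only needs to be controlled at level $\xi_{NT}/4$.

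\textbf{Signal term.} Your piecewise formula for $h_T$ correctly tracks the sign of the signal. It also captures the $O(\|\delta\|^2_{L^2})$ discretization error from $\tau_0=\lfloor Tr_0\rfloor$. The paper's display at $x=r_0$ glosses over both. Your first display for $h_T$ is garbled; replace it with the concrete piecewise expression you give right after.
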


\begin{proof}[Proof]
We split the proof into three parts.

\medskip
\noindent
\textbf{Step 1: Failure of the mean-based CUSUM component.}
Let
\[
\bar X_t(u)=\frac{1}{N}\sum_{i=1}^N X_{i,t}(u).
\]
Under the symmetric construction,
\[
\frac{1}{N}\sum_{i=1}^N \delta_i(u)
=
\frac{1}{N}\left(\frac{N}{2}\delta(u)-\frac{N}{2}\delta(u)\right)
=0.
\]
Hence
\[
\bar X_t(u)=\bar\varepsilon_t(u),
\]
so the panel-average process contains no break signal at all. Therefore the CUSUM statistic $Z_{NT}$ constructed from $\bar X_t$ has the same deterministic part as under $H_0$, and thus cannot accumulate signal under this alternative.

\medskip
\noindent
\textbf{Step 2: Failure of the PE component.}
For each subject $i$, the subject-wise functional CUSUM process admits the decomposition
\[
C_{i,T}(x,u)
=
C_{i,T}^{(\varepsilon)}(x,u)+C_{i,T}^{(\delta)}(x,u),
\]
where the signal part reaches its maximal order at $x=r_0$ and satisfies
\[
\sup_{0\le x\le 1}\int_C \{C_{i,T}^{(\delta)}(x,u)\}^2\,du
\asymp
T\|\delta\|_{L^2}^2.
\]
By assumption,
\[
T\|\delta\|_{L^2}^2=o(\xi_{NT}).
\]
Since the deterministic signal part is $o(\xi_{NT})$ and the noise energies satisfy the stated uniform upper-tail bound, a union bound gives
\[
N\sup_{1\le i\le N}P\!\left(
\sup_{0\le x\le 1}\int_C C_{i,T}^2(x,u)\,du>\xi_{NT}
\right)\to 0.
\]
Therefore
\[
P(Z_{NT}^{\diamond}=0)\to 1.
\]
Therefore,
\[
\widehat Z_{NT}=Z_{NT}+Z_{NT}^{\diamond}
=
Z_{NT}+o_P(1),
\]
and PE--CUSUM has no asymptotic power here.

\medskip
\noindent
\textbf{Step 3: Success of Energy--CUSUM.}
At the true break fraction $x=r_0$, write
\[
C_{i,T}(r_0,u)=C_{i,T}^{(\varepsilon)}(r_0,u)+s_i\,\sqrt{T}\,r_0(1-r_0)\delta(u),
\]
where $s_i=1$ for $i\le N/2$ and $s_i=-1$ otherwise. Squaring and integrating give
\[
\int_C C_{i,T}^2(r_0,u)\,du
=
\int_C \{C_{i,T}^{(\varepsilon)}(r_0,u)\}^2\,du
+
T r_0^2(1-r_0)^2\|\delta\|_{L^2}^2
+
2s_i\sqrt{T}r_0(1-r_0)\langle C_{i,T}^{(\varepsilon)}(r_0,\cdot),\delta\rangle.
\]
Summing over $i$ yields
\[
E_T(r_0)
=
E_T^{(\varepsilon)}(r_0)
+
r_0^2(1-r_0)^2T\|\delta\|_{L^2}^2
+
\widetilde R_{NT},
\]
where the averaged cross term $\widetilde R_{NT}$ is centered and, by independence across $i$,
\[
\widetilde R_{NT}=O_P(\sqrt{T/N}\,\|\delta\|_{L^2}).
\]
Thus the normalized deterministic signal term is of order $T\|\delta\|_{L^2}^2$, whereas the stochastic remainder is of smaller order whenever
\[
T\|\delta\|_{L^2}^2\to\infty.
\]
Hence the Energy--CUSUM aggregation accumulates the aligned weak signals across subjects after squaring, and does not suffer from the sign-cancellation effect.
\end{proof}

\section{Proofs of Main Results}
\label{app:B}

This appendix provides proofs of the main theorems in Section~\ref{sec:theory}.
The proofs rely on several auxiliary propositions stated and proved below, together with technical lemmas collected in Appendix~\ref{app:C}.

Throughout, let
\[
C_{i,T}(x,u)
=
\frac{1}{\sqrt{T}}
\left[
\sum_{t=1}^{\lfloor Tx\rfloor} X_{it}(u)
-
\frac{\lfloor Tx\rfloor}{T}\sum_{t=1}^{T} X_{it}(u)
\right],
\qquad 0\le x\le 1,
\]
and define
\[
E_T(x)
=
\frac{1}{N}\sum_{i=1}^{N}\int_C C_{i,T}^2(x,u)\,du,
\qquad
Z_T^{\mathrm{Energy}}
=
\sup_{0\le x\le 1}E_T(x).
\]
Recall also that
\[
Y_{iT}
=
\sup_{0\le x\le 1}\int_C C_{i,T}^2(x,u)\,du,
\]
and
\[
\mathcal Z_{NT}^{g}
=
r_{NT}\sum_{i=1}^{N}
g\!\left(\frac{Y_{iT}-\xi_{NT}}{\tau_{NT}}\right),
\]
where $g:\mathbb R\to[0,\infty)$ is the generalized enhancement function introduced in Section~\ref{sec:model}. The unified statistic is
\[
\widehat Z_{NT}^{\mathrm{EPE}}
=
Z_T^{\mathrm{Energy}}+\mathcal Z_{NT}^{g}.
\]

For later use, define the error-only CUSUM process
\[
C_{i,T}^{\varepsilon}(x,u)
=
\frac{1}{\sqrt{T}}
\left[
\sum_{t=1}^{\lfloor Tx\rfloor} \varepsilon_{it}(u)
-
\frac{\lfloor Tx\rfloor}{T}\sum_{t=1}^{T}\varepsilon_{it}(u)
\right],
\]
and the deterministic signal part
\[
D_{i,T}(x,u)
=
\frac{1}{\sqrt{T}}
\left[
\sum_{t=1}^{\lfloor Tx\rfloor} \delta_i(u)\mathbf 1\{t>\tau_i\}
-
\frac{\lfloor Tx\rfloor}{T}\sum_{t=1}^{T}\delta_i(u)\mathbf 1\{t>\tau_i\}
\right].
\]
Then
\begin{equation}
C_{i,T}(x,u)=C_{i,T}^{\varepsilon}(x,u)+D_{i,T}(x,u).
\label{eq:A.1}
\end{equation}

For clarity and self-containment, we first isolate the auxiliary propositions that drive the main theoretical arguments. Each proposition is stated and proved before the theorem proofs, so that the subsequent arguments can refer to these intermediate results without interrupting the logical flow.

\begin{proposition}
\label{prop:A.1}
Suppose that the conditions of Theorem \ref{thm:3.1} hold.
Then the Energy--CUSUM statistic $Z_T^{\mathrm{Energy}}$ satisfies
\[
\sup_{z\in\mathbb R}
\left|
P_{H_0}\!\left(Z_T^{\mathrm{Energy}}\le z\right)
-
P\!\left(Z_{NT}^{G}\le z\right)
\right|\to0.
\]
Moreover, the rejection rule based on $c_{NT,\alpha}^{*}$ has asymptotic size at most $\alpha$.
\end{proposition}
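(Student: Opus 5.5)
The first claim of Proposition~\ref{prop:A.1} is exactly the Gaussian approximation \eqref{eq:3.GA}, which the conditions of Theorem~\ref{thm:3.1} include. I will still sketch why it is natural under Assumption~\ref{ass:regularity}, but the proof will take it from the hypotheses. The sketch goes as follows. Under $H_0$ we have $C_{i,T}=C_{i,T}^{\varepsilon}$ by \eqref{eq:A.1}. The linear-process representation in Assumption~\ref{ass:regularity}(1) then allows a Beveridge--Nelson decomposition $\varepsilon_{it}=A_i(1)\eta_{it}+\tilde\varepsilon_{i,t-1}-\tilde\varepsilon_{it}$. The summability condition $\sum_j j\max_i\|A_{ij}\|_{\mathcal O}<\infty$ makes the remainder contribute $O_P(T^{-1/2})$ uniformly in $x$ in the $\mathcal H_N$ norm. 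The leading martingale part is a partial-sum process of independent $\mathcal H_N$-valued vectors, and it is coupled with the Gaussian bridge $G_N$. The map $h\mapsto\sup_x\|h(x)\|_{\mathcal H_N}^2$ is locally Lipschitz on bounded sets, so the coupling error passes to $Z_T^{\mathrm{Energy}}$ through $|\|a\|^2-\|b\|^2|\le\|a-b\|(\|a\|+\|b\|)$. Assumption~\ref{ass:regularity}(2) guarantees that $\sup_x\|\mathcal C_{N,T}(x)\|_{\mathcal H_N}=O_P(1)$, and Assumption~\ref{ass:regularity}(3) controls the dimension growth in the coupling rate.

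For the size statement I will follow the standard three-step bootstrap argument. \textbf{Step 1.} Combine \eqref{eq:3.GA} and \eqref{eq:3.bootstrap} with the triangle inequality to get
\[
\sup_z\bigl|P^*(Z_{NT}^{*,\mathrm{Energy}}\le z)-P_{H_0}(Z_T^{\mathrm{Energy}}\le z)\bigr|\xrightarrow{P}0 .
\]
Let $c_{NT,\alpha}^{G}$ denote the upper $\alpha$-quantile of $Z_{NT}^{G}$. \textbf{Step 2.} Show that $c_{NT,\alpha}^{*}$ is close to $c_{NT,\alpha}^{G}$. Fix $\eta_{NT}\downarrow0$ slowly enough that the bootstrap error in \eqref{eq:3.bootstrap} is $o_P(\eta_{NT})$. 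On an event of probability tending to one, $P^*(Z^{*}\le z)\ge P(Z^{G}\le z)-\eta_{NT}$ for all $z$, and hence
\[
c_{NT,\alpha}^{*}\ge \text{upper }(\alpha+\eta_{NT})\text{-quantile of }Z_{NT}^{G}=:c_{NT,\alpha+\eta_{NT}}^{G}.
\]
\textbf{Step 3.} Bound the rejection probability:
\[
P_{H_0}(Z_T^{\mathrm{Energy}}>c_{NT,\alpha}^{*})\le P_{H_0}(Z_T^{\mathrm{Energy}}>c_{NT,\alpha+\eta_{NT}}^{G})+o(1)= \alpha+\eta_{NT}+o(1),
\]
where the equality uses \eqref{eq:3.GA} together with the continuity of the law of $Z^{G}_{NT}$ at its quantiles.

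The main obstacle is that $c_{NT,\alpha}^{*}$ is random and may depend on the data in $Z_T^{\mathrm{Energy}}$. This rules out the plain deterministic-threshold bound and forces the sandwich argument of Step 2. The sandwich in turn needs the quantile map of $Z_{NT}^{G}$ to be stable under small perturbations of both level and location. Stability in level is where anti-concentration enters. It gives $P(|Z_{NT}^{G}-z|\le\eta_{NT})\to0$ uniformly in $z$, so the law of $Z^{G}$ has no atoms in the limit. Consequently, perturbing the quantile level by $\eta_{NT}$, or the threshold by $o_P(1)$, changes probabilities by $o(1)$. Stability in location uses tightness, $c_{NT,\alpha}^{*}=O_P(1)$. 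It confines the relevant thresholds to a compact range, so the uniform statements suffice and no uniformity issues arise at infinity. Taking $\limsup$ and letting $\eta_{NT}\to0$ gives the bound $\le\alpha$.
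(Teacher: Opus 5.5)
Your proposal is correct and follows the same route as the paper. The paper takes the first claim directly from \eqref{eq:3.GA} (via Lemma~\ref{lem:B.1}) and states, with anti-concentration invoked in the proof of Theorem~\ref{thm:3.1}, that combining it with \eqref{eq:3.bootstrap} gives size control; your quantile sandwich is a careful expansion of that one-line argument.

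One small slip: the bound $c_{NT,\alpha}^{*}\ge c_{NT,\alpha+\eta_{NT}}^{G}$ in Step 2 follows from the other half of the two-sided bootstrap bound, namely $P^*(Z_{NT}^{*,\mathrm{Energy}}\le z)\le P(Z_{NT}^{G}\le z)+\eta_{NT}$, not the half you wrote. Also, for the upper bound $\le\alpha$, right-continuity of the distribution function of $Z_{NT}^{G}$ already gives $P(Z_{NT}^{G}>c_{NT,\alpha+\eta_{NT}}^{G})\le\alpha+\eta_{NT}$, so anti-concentration and tightness of $c_{NT,\alpha}^{*}$ are not actually needed for this statement.
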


\begin{proof}[Proof of Proposition \ref{prop:A.1}]
The Gaussian approximation follows from Lemma \ref{lem:B.1}. The size statement follows by combining this approximation with the bootstrap consistency condition \eqref{eq:3.bootstrap}.
\end{proof}

\begin{proposition}
\label{prop:A.2}
Suppose that Assumptions \ref{ass:regularity} and \ref{ass:dense} hold.
Then for $x_0=r_0$ given in \eqref{eq:3.B4},
\[
E_T(x_0)\xrightarrow{P}\infty.
\]
Consequently,
\[
Z_T^{\mathrm{Energy}}\xrightarrow{P}\infty.
\]
\end{proposition}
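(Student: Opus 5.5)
We use the decomposition \eqref{eq:A.1}, $C_{i,T}=C_{i,T}^{\varepsilon}+D_{i,T}$, evaluated at the fixed fraction $x_0=r_0$. Expanding the square and averaging over $i$ gives
\[
E_T(x_0)=\frac1N\sum_{i=1}^N\|C_{i,T}^{\varepsilon}(x_0)\|_{L^2(C)}^2+\frac1N\sum_{i=1}^N\|D_{i,T}(x_0)\|_{L^2(C)}^2+\frac2N\sum_{i=1}^N\langle C_{i,T}^{\varepsilon}(x_0),D_{i,T}(x_0)\rangle .
\]
The first (noise) term is nonnegative, so it can be dropped for a lower bound. Equivalently, we use the elementary inequality $\|a+b\|^2\ge \tfrac12\|b\|^2-\|a\|^2$. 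This avoids handling the cross term separately, and it is the route I would take first:
\[
E_T(x_0)\ \ge\ \frac{1}{2N}\sum_{i=1}^N\|D_{i,T}(x_0)\|^2-\frac1N\sum_{i=1}^N\|C_{i,T}^{\varepsilon}(x_0)\|^2 .
\]
The subtracted noise average is $O_P(1)$. The errors $\varepsilon_{it}$ do not depend on $H_A$, so $C_{i,T}^{\varepsilon}$ has the null law. Assumption~\ref{ass:regularity}(2) then gives that the average of the expectations is bounded uniformly in $x$, and Markov's inequality yields $O_P(1)$. (Alternatively, the stability part of Assumption~\ref{ass:regularity}(2) gives concentration around a bounded mean.)

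\textbf{Signal term.} For $t\le\tau_i$ versus $t>\tau_i$, a direct computation gives
\[
D_{i,T}(x,u)=-\sqrt T\,\delta_i(u)\,\tfrac{\lfloor Tx\rfloor}{T}\bigl(1-\tfrac{\tau_i}{T}\bigr)\quad\text{for } \lfloor Tx\rfloor\le\tau_i,
\]
and a symmetric expression $-\sqrt T\,\delta_i(u)\,\tfrac{\tau_i}{T}\bigl(1-\tfrac{\lfloor Tx\rfloor}{T}\bigr)$ for $\lfloor Tx\rfloor>\tau_i$. By the alignment condition \eqref{eq:3.B4}, $\max_{i\in I_N}|\tau_i/T-r_0|\to0$. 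Hence for all large $T$ and all $i\in I_N$, the deterministic factor at $x_0$ is at least $\tfrac12 r_0(1-r_0)\ge\tfrac12\eta(1-\eta)$. This uses $\eta\le r_0\le1-\eta$ and $\lfloor Tr_0\rfloor/T\to r_0$. Therefore
\[
\frac1N\sum_{i=1}^N\|D_{i,T}(x_0)\|^2\ \ge\ \frac{\eta^2(1-\eta)^2}{4}\cdot\frac TN\sum_{i\in I_N}\|\delta_i\|^2\ \to\ \infty
\]
by \eqref{eq:3.B5}. Subjects outside $I_N$ contribute nonnegatively and are discarded.

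Combining the two parts, $E_T(x_0)\ge \tfrac12\cdot(\text{divergent deterministic})-O_P(1)\xrightarrow{P}\infty$. Then $Z_T^{\mathrm{Energy}}=\sup_x E_T(x)\ge E_T(x_0)$ gives the second claim.

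\textbf{Main obstacle.} The delicate point is the uniformity of the signal factor over $i\in I_N$, since break locations differ across subjects. It is handled exactly by the uniform alignment in \eqref{eq:3.B4}, with care needed for the floor functions and for the two cases $\lfloor Tx_0\rfloor\lessgtr\tau_i$. The noise term needs only first-moment control under the null law of $\varepsilon$, so no cross-sectional independence is required. This is why the crude inequality is preferred over bounding the cross term by $O_P(\sqrt{T/N})$ as in Appendix~\ref{app:A}.
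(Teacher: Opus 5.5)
Your proposal is correct and follows essentially the same route as the paper's argument (the proof of Theorem~\ref{thm:3.2}, which Proposition~\ref{prop:A.2} invokes): the decomposition \eqref{eq:A.1}, the inequality $\|a+b\|^2\ge\frac12\|b\|^2-\|a\|^2$, an $O_P(1)$ bound on the averaged noise energy as in Lemma~\ref{lem:B.2}, and a uniform lower bound $\|D_{i,T}(x_0)\|^2\ge c_0T\|\delta_i\|^2$ over $I_N$ from the alignment condition as in Lemma~\ref{lem:B.3}, which you make explicit through the closed form of $D_{i,T}$. One wording point: dropping the nonnegative noise term would not by itself give a lower bound, because the cross term can be negative, so calling it ``equivalent'' is inaccurate; it is the elementary inequality, which you do use, that carries the argument.
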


\begin{proof}[Proof of Proposition \ref{prop:A.2}]
The proposition follows immediately from \eqref{eq:A.2}, Lemma \ref{lem:B.2},
Lemma \ref{lem:B.3}, and condition \eqref{eq:3.B5}.
\end{proof}

\begin{proposition}
\label{prop:A.3}
Suppose that Assumption \ref{ass:enhancement}(1) holds. Under $H_0$,
\[
\mathcal Z_{NT}^{g}=o_P(1).
\]
\end{proposition}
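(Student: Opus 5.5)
The plan is a first-moment (Markov) argument that uses only the nonnegativity of the summands and Assumption~\ref{ass:enhancement}(1). Since $g:\mathbb R\to[0,\infty)$ and $r_{NT}>0$, the enhancement term $\mathcal Z_{NT}^{g}=r_{NT}\sum_{i=1}^N g(S_{iT})$ is a nonnegative random variable. It therefore suffices to show that its expectation under $H_0$ tends to zero, because $L^1$ convergence to zero implies convergence in probability to zero.

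First I compute the expectation under $H_0$ by linearity and bound it by the worst subject:
\[
E\{\mathcal Z_{NT}^{g}\mid H_0\}
= r_{NT}\sum_{i=1}^N E\{g(S_{iT})\mid H_0\}
\le N r_{NT}\sup_{1\le i\le N}E\{g(S_{iT})\mid H_0\}.
\]
By \eqref{eq:3.B6}, the right-hand side tends to $0$.

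Second, for any $\epsilon>0$, Markov's inequality gives
\[
P_{H_0}\bigl(\mathcal Z_{NT}^{g}>\epsilon\bigr)\le \epsilon^{-1}E\{\mathcal Z_{NT}^{g}\mid H_0\}\to0,
\]
which is exactly $\mathcal Z_{NT}^{g}=o_P(1)$. No dependence structure across $i$ or over $t$ is needed, because linearity of expectation does not require independence. The condition $\xi_{NT}\to\infty$ is not used directly in this step; its role is to make the expectation bound in \eqref{eq:3.B6} attainable.

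For the indicator rule $g(z)=\mathbf 1\{z>0\}$, the same computation reduces to $Nr_{NT}\sup_iP(Y_{iT}>\xi_{NT}\mid H_0)\to0$, as in the remark following Assumption~\ref{ass:enhancement}. There is essentially no hard step here: the only point to check is that $g(S_{iT})$ is measurable and integrable, so that the expectations are well defined. Measurability follows because $g$ is measurable and $Y_{iT}$ is a supremum over a finite grid of time indices. Integrability is implicit in the finiteness of the expectations assumed in \eqref{eq:3.B6}.
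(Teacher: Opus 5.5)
Your proof is correct and is the same argument the paper uses: nonnegativity of the summands, linearity of expectation bounded by $N r_{NT}\sup_i E\{g(S_{iT})\mid H_0\}$, condition \eqref{eq:3.B6}, and then Markov's inequality. Your side remarks are accurate but not needed: the supremum over $x$ reduces to a maximum over the finitely many values of $\lfloor Tx\rfloor$, and the condition $\xi_{NT}\to\infty$ is not used in this step.
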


\begin{proof}[Proof of Proposition \ref{prop:A.3}]
Under $H_0$,
\[
\mathcal Z_{NT}^{g}
=
r_{NT}\sum_{i=1}^{N}
g\!\left(\frac{Y_{iT}-\xi_{NT}}{\tau_{NT}}\right).
\]
Since each summand is nonnegative,
\[
E\bigl[\mathcal Z_{NT}^{g}\mid H_0\bigr]
=
r_{NT}\sum_{i=1}^{N}
E\!\left[
g\!\left(\frac{Y_{iT}-\xi_{NT}}{\tau_{NT}}\right)\middle|H_0
\right]
\le
Nr_{NT}\sup_{1\le i\le N}
E\!\left[
g\!\left(\frac{Y_{iT}-\xi_{NT}}{\tau_{NT}}\right)\middle|H_0
\right].
\]
By condition \eqref{eq:3.B6}, the right-hand side converges to zero. Therefore,
\[
E\bigl[\mathcal Z_{NT}^{g}\mid H_0\bigr]\to0.
\]
Applying Markov's inequality yields
\[
\mathcal Z_{NT}^{g}=o_P(1).
\]
\end{proof}

\begin{proposition}
\label{prop:A.4}
Suppose that the sparse-power condition in Assumption \ref{ass:enhancement}(2) holds.
Then under the sparse strong alternative,
\[
\mathcal Z_{NT}^{g}\xrightarrow{P}\infty.
\]
\end{proposition}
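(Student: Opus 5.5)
The argument uses only the nonnegativity and monotonicity of $g$ together with the two displays in Assumption~\ref{ass:enhancement}(2). The idea is to discard all subjects outside $J_N$ and bound the remaining sum below on a high-probability event. Since $g\ge0$,
\[
\mathcal Z_{NT}^{g}\ \ge\ r_{NT}\sum_{i\in J_N} g(S_{iT}).
\]
On the event $A_i=\{Y_{iT}\ge \xi_{NT}+\tau_{NT}b_{NT}\}$ we have $S_{iT}\ge b_{NT}$. Because $g$ is nondecreasing, this gives $g(S_{iT})\ge g(b_{NT})$ on $A_i$. Hence
\[
\mathcal Z_{NT}^{g}\ \ge\ r_{NT}\,g(b_{NT})\sum_{i\in J_N}\mathbf 1_{A_i}.
\]
The task is therefore to show that a non-vanishing fraction of the $|J_N|$ events occur, with probability tending to one.

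The union bound cannot do this directly, because \eqref{eq:3.B7} only gives $\inf_i P(A_i)\to1$ and not $|J_N|\sup_i P(A_i^c)\to0$. I would instead use a counting argument through expectations. Let $p_N=\sup_{i\in J_N}P(A_i^c)\to0$ and $M=\sum_{i\in J_N}\mathbf 1_{A_i^c}$. Then $E M\le |J_N|p_N$, so Markov's inequality gives
\[
P\bigl(M>|J_N|/2\bigr)\le 2p_N\to0.
\]
With probability tending to one we thus have $\sum_{i\in J_N}\mathbf 1_{A_i}\ge |J_N|/2$, and therefore
\[
\mathcal Z_{NT}^{g}\ \ge\ \tfrac12\, r_{NT}|J_N|\,g(b_{NT}).
\]
By \eqref{eq:3.B8} this lower bound diverges. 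For any fixed $K>0$ it exceeds $K$ for large $N,T$, and hence $P(\mathcal Z_{NT}^{g}>K)\to1$, which is the claim.

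This Markov step is the one that needs care. It is what lets the argument work without any independence across subjects and without a uniform union bound. The only other subtlety is that $J_N$ must be nonempty, which \eqref{eq:3.B8} implicitly forces.
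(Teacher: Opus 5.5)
Your proposal is correct and follows essentially the same route as the paper. Both arguments restrict the sum to $J_N$ and use monotonicity of $g$ on the events $\{Y_{iT}\ge\xi_{NT}+\tau_{NT}b_{NT}\}$; both then apply Markov's inequality to the number of failures in $J_N$ to get a lower bound proportional to $r_{NT}|J_N|\,g(b_{NT})$. The only difference is that you keep the fixed fraction $1/2$ where the paper writes $1-o_P(1)$, which is immaterial.
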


\begin{proof}[Proof of Proposition \ref{prop:A.4}]
For $i\in J_N$, condition \eqref{eq:3.B7} implies
\[
P\!\left(
Y_{iT}\ge \xi_{NT}+\tau_{NT}b_{NT}
\right)\to1.
\]
Let
\[
A_{iT}=\{Y_{iT}\ge \xi_{NT}+\tau_{NT}b_{NT}\}.
\]
The uniform lower bound in \eqref{eq:3.B7} implies
\[
\frac{1}{|J_N|}\sum_{i\in J_N}\mathbf 1(A_{iT})=1-o_P(1)
\]
by Markov's inequality applied to the average number of failures.
Since $g$ is nondecreasing and nonnegative, on the event
\[
Y_{iT}\ge \xi_{NT}+\tau_{NT}b_{NT},
\]
we have
\[
g\!\left(\frac{Y_{iT}-\xi_{NT}}{\tau_{NT}}\right)\ge g(b_{NT}).
\]
Hence
\[
\mathcal Z_{NT}^{g}
\ge
r_{NT}\sum_{i\in J_N}
g\!\left(\frac{Y_{iT}-\xi_{NT}}{\tau_{NT}}\right)
\ge
(1-o_P(1))r_{NT}|J_N|\,g(b_{NT}).
\]
By \eqref{eq:3.B8}, the right-hand side diverges to infinity in probability.
Thus
\[
\mathcal Z_{NT}^{g}\xrightarrow{P}\infty.
\]
\end{proof}

\begin{proposition}
\label{prop:A.5}
Suppose that Assumption \ref{ass:regularity} and Assumption \ref{ass:screening}(1)--(3) hold.
Then
\[
P\!\left(
\widehat{\mathcal C}_0=\mathcal C_0,\
\widehat{\mathcal C}_1=\mathcal C_1
\right)\to1.
\]
Moreover, if Assumption \ref{ass:screening}(4) and Assumption \ref{ass:localmax} hold, then for any arbitrarily small $\zeta>0$,
\[
\max_{i\in\mathcal C_1}
|\widehat\tau_i-\tau_i|
=
o_P\!\left([\log(N\vee T)]^{1+\zeta}\right).
\]
\end{proposition}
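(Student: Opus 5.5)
We prove Proposition~\ref{prop:A.5} in two parts: exact screening first, then the localization rate.

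\textbf{Screening.} The event $\{\widehat{\mathcal C}_0=\mathcal C_0,\ \widehat{\mathcal C}_1=\mathcal C_1\}$ fails only if (a) some $i\in\mathcal C_0$ has $Y_{iT}>\xi_{NT}$, or (b) some $i\in\mathcal C_1$ has $Y_{iT}\le\xi_{NT}$.

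For (a), a union bound gives
\[
P\bigl(\exists\, i\in\mathcal C_0: Y_{iT}>\xi_{NT}\bigr)\le N\sup_{i\in\mathcal C_0}P(Y_{iT}>\xi_{NT}),
\]
and this tends to zero by \eqref{eq:3.9a}.

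For (b), we first lower-bound the deterministic energy. A direct computation at $x=\tau_i/T$ gives
\[
\int_C D_{i,T}^2(\tau_i/T,u)\,du=T(\tau_i/T)^2(1-\tau_i/T)^2\|\delta_i\|^2 .
\]
Hence $Y_{iT}^{(0)}\ge c\,T\omega_{Ti}^2\|\delta_i\|^2$ with $c=1/4$, using $\max(a,1-a)\ge1/2$. Write $s_i:=T\omega_{Ti}^2\|\delta_i\|^2$. By \eqref{eq:3.10b}, uniformly over $i\in\mathcal C_1$,
\[
Y_{iT}\ge Y_{iT}^{(0)}-o_P(1)\,s_i\ge (1/4-o_P(1))\,s_i .
\]
By \eqref{eq:3.10}, $\min_{i\in\mathcal C_1} s_i/\xi_{NT}\to\infty$. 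Therefore, with probability tending to one, $\min_{i\in\mathcal C_1}Y_{iT}>\xi_{NT}$. Note that Assumption~\ref{ass:regularity} is not needed beyond the stated conditions here.

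\textbf{Localization.} Fix $i\in\mathcal C_1$ and set $Q_i(t)=\int_C C_{i,T}^2(t/T,u)\,du$. Decompose $Q_i(t)=\mathbb E Q_i(t)+R_i(t)$. By Assumption~\ref{ass:localmax},
\[
\max_i\sup_t|R_i(t)|=O_P(\log(N\vee T)).
\]
The mean is $\mathbb E Q_i(t)=\|D_{i,T}(t/T)\|^2+\mathbb E\|C^{\varepsilon}_{i,T}(t/T)\|^2$, and we treat the two pieces separately.

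\emph{Deterministic piece.} With $k=\tau_i$, the signal curve is
\[
\|D_{i,T}(t/T)\|^2=T^{-1}\|\delta_i\|^2\,f(t)^2,\qquad
f(t)=\begin{cases} t(T-k)/T, & t\le k,\\ k(T-t)/T, & t>k.\end{cases}
\]
It is maximized at $t=k$, and it has a linear drop: for $|t-k|\ge m$,
\[
\|D(k)\|^2-\|D(t)\|^2\ \gtrsim\ \omega_{Ti}\|\delta_i\|^2\,|t-k| .
\]
The constant uses $f(k)\asymp T\omega_{Ti}$ and $|f(k)-f(t)|\ge\omega_{Ti}|t-k|$.

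\emph{Noise-mean piece.} $\mathbb E\|C^\varepsilon_{i,T}(x)\|^2$ behaves like $x(1-x)\sigma_i^2+O(T^{-1})$ by the summability in Assumption~\ref{ass:regularity}(1). It is therefore Lipschitz in $t$ with constant $O(1/T)$, so its variation over $|t-k|\le T$ is $O(1)$ times $|t-k|/T$.

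\emph{Combining.} Since $\widehat\tau_i$ maximizes $Q_i$, we have $Q_i(\widehat\tau_i)\ge Q_i(\tau_i)$. Rearranging,
\[
c\,\omega_{Ti}\|\delta_i\|^2|\widehat\tau_i-\tau_i|\le 2\max_{i,t}|R_i(t)|+O(|\widehat\tau_i-\tau_i|/T).
\]
By \eqref{eq:3.11}, $\omega_{Ti}\|\delta_i\|^2\ge c_\delta\|\delta_i\|$. This requires a lower bound on $\|\delta_i\|$. We take it from \eqref{eq:3.11} together with $\omega_{Ti}\le1/2$, which gives $\|\delta_i\|\ge 2c_\delta$, so $\omega_{Ti}\|\delta_i\|^2\ge 2c_\delta^2$. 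The $O(|\cdot|/T)$ term is absorbed into the left-hand side. Hence
\[
\max_i|\widehat\tau_i-\tau_i|=O_P(\log(N\vee T))=o_P([\log(N\vee T)]^{1+\zeta}).
\]

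\textbf{Main obstacle.} The hard step is showing that the noise-mean drift is negligible relative to the linear signal drop. The cross term $2\langle C^\varepsilon,D\rangle$ is already contained in $R_i$, since Assumption~\ref{ass:localmax} is stated for the full criterion, so it needs no separate treatment. If a Lipschitz bound on the noise mean is unavailable, the fallback is to write $\mathbb E\|C^\varepsilon\|^2(x)=x(1-x)\|\Sigma_i\|+o(1)$ and note that its increment $O(|t-k|/T)$ is uniformly dominated by $2c_\delta^2|t-k|$.
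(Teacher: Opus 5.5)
This is correct and is essentially the paper's argument: for screening, a union bound with \eqref{eq:3.9a} plus the signal lower bound behind Lemma~\ref{lem:B.4}; for localization, the Assumption~\ref{ass:localmax} fluctuation bound played against a linear drop of the expected criterion (Lemmas~\ref{lem:B.5}--\ref{lem:B.6}) in an argmax comparison, which you make explicit and which in fact yields $O_P(\log(N\vee T))$. One slip to fix: from your own bounds $f(k)\ge T\omega_{Ti}/2$ and $f(k)-f(t)\ge\omega_{Ti}|t-k|$, the signal drop is at least $\tfrac{1}{2}\omega_{Ti}^2\|\delta_i\|^2|t-k|$, not of order $\omega_{Ti}\|\delta_i\|^2|t-k|$ (the latter is false when $\omega_{Ti}$ is small); however, \eqref{eq:3.11} bounds the correct slope below by $\tfrac{1}{2}c_\delta^2$ directly, so the detour through $\|\delta_i\|\ge 2c_\delta$ is unnecessary and your conclusion is unaffected.
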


\begin{proof}[Proof of Proposition \ref{prop:A.5}]
For unchanged subjects, condition \eqref{eq:3.9a} and the union bound imply
\[
P\!\left(\max_{i\in\mathcal C_0}Y_{iT}>\xi_{NT}\right)
\le
N\sup_{i\in\mathcal C_0}P(Y_{iT}>\xi_{NT})
\to0.
\]
For changed subjects, Lemma \ref{lem:B.4} gives
\[
P\!\left(\min_{i\in\mathcal C_1}Y_{iT}>\xi_{NT}\right)\to1.
\]
Thus $\widehat{\mathcal C}_0=\mathcal C_0$ and $\widehat{\mathcal C}_1=\mathcal C_1$ with probability tending to one.
For the break-point rate, let
\[
b_{NT}=[\log(N\vee T)]^{1+\zeta}.
\]
By Lemma \ref{lem:B.5}, the stochastic fluctuation of the squared CUSUM criterion is
$O_P(\log(N\vee T))=o_P(b_{NT})$ uniformly over $i$ and $t$.
By Lemma \ref{lem:B.6}, whenever $|t-\tau_i|\ge \epsilon b_{NT}$, the deterministic loss is at least
$c\epsilon b_{NT}$ uniformly over $i\in\mathcal C_1$.
Thus the deterministic loss dominates the stochastic fluctuation uniformly outside the window
$|t-\tau_i|<\epsilon b_{NT}$.
Since $\epsilon>0$ is arbitrary, the argmax comparison argument yields
\[
\max_{i\in\mathcal C_1}|\widehat\tau_i-\tau_i|
=
o_P(b_{NT}),
\]
which proves the asserted rate.
\end{proof}

\begin{proposition}
\label{prop:A.6}
Suppose that the latent structure \eqref{eq:3.14} and Assumptions
\ref{ass:regularity}, \ref{ass:screening}, \ref{ass:localmax}, and \ref{ass:cluster} hold.
Then
\[
P(\widehat K=K_0)\to1,
\]
and
\[
P\!\left(
\widehat{\mathcal C}(b_k)=\mathcal C(b_k),\ k=1,\ldots,K_0
\ \Big|\
\widehat K=K_0
\right)\to1.
\]
\end{proposition}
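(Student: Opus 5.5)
The plan is to work on the high-probability event
\[
\mathcal E_{NT}=\Bigl\{\widehat{\mathcal C}_1=\mathcal C_1\Bigr\}\cap\Bigl\{\max_{i\in\mathcal C_1}|\widehat\tau_i-\tau_i|\le b_{NT}\Bigr\},
\qquad b_{NT}=[\log(N\vee T)]^{1+\zeta}.
\]
By Proposition \ref{prop:A.5}, whose hypotheses are contained in Assumptions \ref{ass:regularity}, \ref{ass:screening} and \ref{ass:localmax}, we have $P(\mathcal E_{NT})\to1$, so it suffices to argue deterministically on $\mathcal E_{NT}$ up to $o_P(1)$ terms. On this event the sorted preliminary estimates fall into $K_0$ windows $[b_k-b_{NT},b_k+b_{NT}]$. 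By Assumption \ref{ass:cluster}(1) the windows are separated by at least $\min_k(c_{k+1}-c_k)T-2b_{NT}\asymp T$, since $b_{NT}=o(T)$ under Assumption \ref{ass:regularity}(3). By Assumption \ref{ass:cluster}(2) every window is nonempty, because $d_k>0$ and $|\mathcal C_1|\ge1$.

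Consequently, every within-window adjacent gap is at most $2b_{NT}$, while exactly $K_0-1$ between-window gaps exceed $cT$. The $(K_0-1)$ largest gaps are therefore exactly the between-window gaps, and the sorted-index blocks in \eqref{eq:3.15} with $K=K_0$ coincide with $\mathcal C(b_1),\ldots,\mathcal C(b_{K_0})$. This is a purely combinatorial step and yields the membership claim \eqref{eq:3.20} on $\mathcal E_{NT}\cap\{\widehat K=K_0\}$. Dividing by $P(\widehat K=K_0)\to1$, established in the next paragraph, gives the conditional statement.

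For $P(\widehat K=K_0)\to1$, I would compare $IC(K)$ with $IC(K_0)$ separately for underfitting and overfitting, since $\overline K$ is fixed. For $K<K_0$, Assumption \ref{ass:cluster}(5) gives
\[
IC(K)-IC(K_0)\ge c+o_P(1)-(K_0-K)\rho_{NT}.
\]
This is positive w.p.a.1 because $\rho_{NT}\to0$. For $K_0<K\le\overline K$,
\[
IC(K)-IC(K_0)=(K-K_0)\rho_{NT}+o_P(\rho_{NT})\ge\rho_{NT}\{1+o_P(1)\}>0
\]
w.p.a.1. A union bound over the finitely many $K$ then shows that $\widehat K=K_0$ w.p.a.1.

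One should also check that $V(K_0)$ is bounded away from $0$ and $\infty$, so that the log scale is meaningful. I would do this using Assumption \ref{ass:cluster}(3) together with the bounded fourth moments in Assumption \ref{ass:regularity}(1). On $\mathcal E_{NT}$ the group averages $\widehat b_{k\mid K_0}$ lie within $b_{NT}$ of $b_k$, so the fitted $\widehat\nu$ differ from the true means by $O_P(b_{NT}/T)$ in the relevant averages. Hence $V(K_0)$ is close to the average error energy $N^{-1}\sum_i E\|\varepsilon_{it}\|^2$, which is positive and finite.

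The main obstacle is the rate condition in Assumption \ref{ass:cluster}(4), where $T\rho_{NT}/b_{NT}\to\infty$. If one tries to verify the overfitting part of (5) rather than assume it, this is what is needed to show that splitting a true cluster, which only moves $\widehat b$ by $O(b_{NT})$, reduces $V$ by $O_P(b_{NT}/T)=o_P(\rho_{NT})$. Since (5) is stated as an assumption, I would invoke it directly and simply remark on this consistency. The real work of the proposition is then the gap-separation argument above.
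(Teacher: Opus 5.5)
Your proposal is correct and follows essentially the same route as the paper. The localization rate from Proposition~\ref{prop:A.5}, combined with the order-$T$ spacing in Assumption~\ref{ass:cluster}(1), makes the $K_0-1$ largest gaps the true group boundaries, and the under- and overfitting comparisons of $IC(K)$ follow directly from Assumption~\ref{ass:cluster}(4)--(5). You are more explicit than the paper about nonempty windows via $d_k>0$, about $b_{NT}=o(T)$ via Assumption~\ref{ass:regularity}(3), and about passing to the conditional probability; your side check that $V(K_0)$ is bounded away from $0$ and $\infty$ is unnecessary, since Assumption~\ref{ass:cluster}(5) is already stated on the log scale.
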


\begin{proof}[Proof of Proposition \ref{prop:A.6}]
By Lemma \ref{lem:B.7}, the ordered preliminary estimates display $K_0$ well-separated blocks with probability tending to one.
Hence, if $K=K_0$, the gap-based construction recovers the correct partition.

For $K<K_0$, Lemma \ref{lem:B.8} implies
\[
IC(K)-IC(K_0)
=
\log V(K)-\log V(K_0)-(K_0-K)\rho_{NT}
>0
\]
with probability tending to one, because the positive loss inflation dominates the smaller penalty.

For $K>K_0$, Lemma \ref{lem:B.9} gives
\[
IC(K)-IC(K_0)
=
\log V(K)-\log V(K_0)+(K-K_0)\rho_{NT}
>0
\]
with probability tending to one, because the penalty dominates the negligible fit gain.

Therefore $\widehat K=K_0$ with probability tending to one.
Conditional on this event, the block separation argument yields the desired membership consistency.
\end{proof}

\begin{proposition}
\label{prop:A.7}
Suppose that the conditions of Theorem \ref{thm:3.7} hold.
Then
\[
\max_{1\le k\le K_0}|\widetilde b_k-b_k|=o_P(T).
\]
\end{proposition}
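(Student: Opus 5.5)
The plan is to work on the high-probability event
\[
\mathcal E_{NT}=\{\widehat K=K_0,\ \widehat{\mathcal C}_k=\mathcal C(b_k),\ k=1,\ldots,K_0\}.
\]
Proposition \ref{prop:A.6} and Theorem \ref{thm:3.6} give $P(\mathcal E_{NT})\to1$. On $\mathcal E_{NT}$, each $\widetilde b_k$ is the argmax of the deterministic-cluster criterion
\[
Q_k(t)=\sum_{i\in\mathcal C(b_k)}\int_C C_{i,T}^2(t/T,u)\,du .
\]
So it suffices to show $|\arg\max_t Q_k(t)-b_k|=o_P(T)$ for each fixed $k$. Because $K_0\le\overline K$ is fixed, the maximum over $k$ then comes for free.

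Fix $k$, write $m_k=|\mathcal C(b_k)|$ and $x=t/T$, and normalize by $m_k^{-1}$. Decomposition \eqref{eq:A.1} splits $m_k^{-1}Q_k(t)$ into three pieces:
\[
S_k(x)=m_k^{-1}\sum_{i\in\mathcal C(b_k)}\|D_{i,T}(x)\|^2,\qquad N_k(x)=m_k^{-1}\sum_{i\in\mathcal C(b_k)}\|C^\varepsilon_{i,T}(x)\|^2,
\]
\[
R_k(x)=2m_k^{-1}\sum_{i\in\mathcal C(b_k)}\langle D_{i,T}(x),C^\varepsilon_{i,T}(x)\rangle .
\]

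\textbf{Signal term.} All subjects in the cluster share $\tau_i=b_k=\lfloor c_kT\rfloor$. An explicit computation gives $D_{i,T}(x)=\sqrt T\,h_{c_k}(x)\delta_i+O(T^{-1/2})$, where $h_c(x)=x(1-c)$ for $x\le c$ and $h_c(x)=c(1-x)$ for $x>c$. Hence
\[
S_k(x)=T h_{c_k}^2(x)\,\bar\Delta_k+O(1),\qquad \bar\Delta_k=m_k^{-1}\sum_{i\in\mathcal C(b_k)}\|\delta_i\|^2 .
\]
By Assumption \ref{ass:cluster}(3), $\bar\Delta_k$ is bounded above and below. Since $h_c^2$ has a unique, well-separated maximum at $c$, for each $\epsilon>0$ there is $\kappa(\epsilon)>0$ such that
\[
S_k(c_k)-\sup_{|x-c_k|\ge\epsilon}S_k(x)\ge \kappa(\epsilon)\,T\,\bar\Delta_k .
\]

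\textbf{Noise terms.} The aim is to show $\sup_x|N_k(x)-EN_k(x)|$ and $\sup_x|R_k(x)|$ are both $o_P(T)$, and that $\sup_x EN_k(x)$ does not distort the comparison.
\begin{itemize}
\item For the centered part of $N_k$, Assumption \ref{ass:localmax} gives a per-subject bound of $O_P(\log(N\vee T))$, so the average is $O_P(\log(N\vee T))=o_P(T)$.
\item For $EN_k$, Assumption \ref{ass:regularity}(2) and the summable-coefficient condition in (1) give $\sup_x EN_k(x)=O(1)$, which is negligible.
\item For the cross term, $R_k(x)=2\sqrt T\,h_{c_k}(x)\,m_k^{-1}\sum_i\langle\delta_i,C^\varepsilon_{i,T}(x)\rangle$. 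The average of these inner products has variance of order $m_k^{-1}\sum_i\|\delta_i\|^2$ (plus cross-sectional covariance controlled through the operator bounds). A maximal inequality over $x$ on the grid should then give $\sup_x|R_k(x)|=O_P(\sqrt T\,\bar\Delta_k^{1/2}\log T)=o_P(T)$.
\end{itemize}
The signal condition \eqref{eq:3.22} and the growth relations $|\mathcal C_1|=O(T^2)$ and $T=O(|\mathcal C_1|^{3/2})$ will be used here. Together they ensure that the pooled signal $\sum_i\|\delta_i\|^2\asymp m_k$ dominates the pooled noise fluctuations of order $m_k^{1/2}$ when the total criterion is compared at scale $T m_k$.

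\textbf{Argmax step.} On the event that the total stochastic perturbation is below $\kappa(\epsilon)T\bar\Delta_k/3$, no $t$ with $|t/T-c_k|\ge\epsilon$ can beat $t=b_k$. Hence $P(|\widetilde b_k-b_k|\ge\epsilon T)\to0$ for every $\epsilon>0$, which gives \eqref{eq:3.23}.

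\textbf{Main obstacle.} I expect the hardest step to be the uniform control of the cross term $R_k$ and of the centered part of $N_k$ under cross-sectional dependence. If the maximal inequality for $R_k$ fails to close, the first fallback is a Cauchy--Schwarz bound: $|R_k|\le 2\sqrt{S_kN_k}$, and then $\sup_xN_k=O_P(\log(N\vee T))$ already gives $o_P(T)$ when compared with $S_k\asymp T$.
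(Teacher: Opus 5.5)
Your proposal is correct and follows the paper's route. You restrict to the correct-recovery event from Theorem~\ref{thm:3.6}, split the pooled criterion into a deterministic part uniquely maximized at $b_k$ (with an order-$T$ gap outside $\epsilon T$-neighbourhoods) plus an $o_P(T)$ stochastic remainder, which is the content of Lemma~\ref{lem:B.10}, and conclude by the argmax comparison and a finite intersection over $k$. One minor correction: Assumption~\ref{ass:localmax} bounds the centered \emph{full} criterion, i.e.\ $N_k-EN_k+R_k$ jointly rather than $N_k-EN_k$ alone for changed subjects; applied in that combined form it gives the remainder $O_P(\log(N\vee T))=o_P(T)$ at once (using $N=O(T^\kappa)$), so the separate maximal inequality for $R_k$, your ``main obstacle'', is unnecessary.
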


\begin{proof}[Proof of Proposition \ref{prop:A.7}]
On the event of correct cluster recovery, the pooled estimator $\widetilde b_k$ maximizes the criterion in Lemma \ref{lem:B.10}.
For any fixed $\epsilon>0$, that lemma implies that, with probability tending to one, the criterion at $t=b_k$ strictly exceeds the supremum of the criterion over all $t$ satisfying $|t-b_k|>\epsilon T$.
Hence $|\widetilde b_k-b_k|\le \epsilon T$ with probability tending to one for each $k$.
Since $\epsilon$ is arbitrary and $K_0$ is fixed, taking the intersection over $k=1,\ldots,K_0$ yields
\[
\max_{1\le k\le K_0}|\widetilde b_k-b_k|=o_P(T),
\]
as claimed.
\end{proof}

\subsection{Proof of Theorem \ref{thm:3.1}}
\label{subsec:proof-thm-3-1}

\begin{proof}[Proof of Theorem \ref{thm:3.1}]
Under $H_0$, the model reduces to
\[
X_{it}(u)=\mu_i(u)+\varepsilon_{it}(u),
\]
where the mean function $\mu_i$ is time-invariant.
Hence the deterministic mean part is annihilated by the CUSUM transformation, and
\[
C_{i,T}(x,u)=C_{i,T}^{\varepsilon}(x,u).
\]

The Gaussian approximation is exactly \eqref{eq:3.GA}. Combining \eqref{eq:3.GA} with the bootstrap consistency condition \eqref{eq:3.bootstrap}, and using the anti-concentration condition to transfer the Gaussian quantile to the bootstrap critical value, yields the stated asymptotic size control for the bootstrap rejection rule.
\end{proof}

\subsection{Proof of Theorem \ref{thm:3.2}}
\label{subsec:proof-thm-3-2}

\begin{proof}[Proof of Theorem \ref{thm:3.2}]
Fix $x_0=r_0$, where $r_0$ is specified in Assumption \ref{ass:dense}.
By \eqref{eq:A.1},
\[
C_{i,T}(x_0,u)=C_{i,T}^{\varepsilon}(x_0,u)+D_{i,T}(x_0,u).
\]
Using the elementary inequality in a Hilbert space,
\[
\|a+b\|_{L^2(C)}^2
\ge \frac12\|a\|_{L^2(C)}^2-\|b\|_{L^2(C)}^2,
\]
we obtain
\begin{align}
E_T(x_0)
&=
\frac1N\sum_{i=1}^N\|C_{i,T}(x_0)\|_{L^2(C)}^2 \notag\\
&\ge
\frac1N\sum_{i\in I_N}\|C_{i,T}(x_0)\|_{L^2(C)}^2 \notag\\
&\ge
\frac{1}{2N}\sum_{i\in I_N}\|D_{i,T}(x_0)\|_{L^2(C)}^2
-
\frac1N\sum_{i\in I_N}\|C_{i,T}^{\varepsilon}(x_0)\|_{L^2(C)}^2.
\label{eq:A.2}
\end{align}

By Lemma \ref{lem:B.3}, under the alignment condition \eqref{eq:3.B4},
there exists a constant $c_0>0$ such that, uniformly for $i\in I_N$,
\[
\|D_{i,T}(x_0)\|_{L^2(C)}^2
\ge
c_0 T\|\delta_i\|_{L^2(C)}^2
\]
for all sufficiently large $T$.
Hence the first term on the right-hand side of \eqref{eq:A.2} is bounded below by
\[
\frac{c_0T}{2N}\sum_{i\in I_N}\|\delta_i\|_{L^2(C)}^2.
\]
By condition \eqref{eq:3.B5}, this term diverges to $+\infty$.

On the other hand, by Lemma \ref{lem:B.2},
\[
\frac1N\sum_{i\in I_N}\|C_{i,T}^{\varepsilon}(x_0)\|_{L^2(C)}^2=O_P(1).
\]
Therefore,
\[
E_T(x_0)\xrightarrow{P}\infty.
\]
Since
\[
Z_T^{\mathrm{Energy}}=\sup_{0\le x\le1}E_T(x)\ge E_T(x_0),
\]
it follows that
\[
Z_T^{\mathrm{Energy}}\xrightarrow{P}\infty.
\]
This proves \eqref{eq:3.5}.
\end{proof}

\subsection{Proof of Theorem \ref{thm:3.3}}
\label{subsec:proof-thm-3-3}

\begin{proof}[Proof of Theorem \ref{thm:3.3}]
By definition,
\[
\widehat Z_{NT}^{\mathrm{EPE}}
=
Z_T^{\mathrm{Energy}}+\mathcal Z_{NT}^{g}.
\]
Under $H_0$, Proposition \ref{prop:A.3} yields
\[
\mathcal Z_{NT}^{g}=o_P(1).
\]
Hence
\begin{equation}
\widehat Z_{NT}^{\mathrm{EPE}}
=
Z_T^{\mathrm{Energy}}+o_P(1).
\label{eq:A.3}
\end{equation}
Combining \eqref{eq:A.3} with the Gaussian approximation and bootstrap size statement in Theorem \ref{thm:3.1} proves \eqref{eq:3.7}.
The anti-concentration condition in Theorem \ref{thm:3.1} also makes the rejection probability stable under the \(o_P(1)\) perturbation contributed by \(\mathcal Z_{NT}^{g}\), which gives the asserted asymptotic size control for the full Energy--PE statistic.
\end{proof}

\subsection{Proof of Theorem \ref{thm:3.4}}
\label{subsec:proof-thm-3-4}

\begin{proof}[Proof of Theorem \ref{thm:3.4}]
Since $c_{NT,\alpha}^{*}=O_P(1)$ by assumption, it suffices to show that
\[
\widehat Z_{NT}^{\mathrm{EPE}}\xrightarrow{P}\infty
\]
under either condition (1) or condition (2).

\medskip
\noindent
\textbf{Case 1: dense, aligned, or sign-canceling alternatives.}
Suppose that condition (1) holds.
By Theorem \ref{thm:3.2},
\[
Z_T^{\mathrm{Energy}}\xrightarrow{P}\infty.
\]
Since $\mathcal Z_{NT}^{g}\ge0$,
\[
\widehat Z_{NT}^{\mathrm{EPE}}
=
Z_T^{\mathrm{Energy}}+\mathcal Z_{NT}^{g}
\ge
Z_T^{\mathrm{Energy}}.
\]
Hence
\[
\widehat Z_{NT}^{\mathrm{EPE}}\xrightarrow{P}\infty.
\]

\medskip
\noindent
\textbf{Case 2: sparse strong alternatives.}
Suppose that condition (2) holds.
By Proposition \ref{prop:A.4},
\[
\mathcal Z_{NT}^{g}\xrightarrow{P}\infty.
\]
Thus
\[
\widehat Z_{NT}^{\mathrm{EPE}}
\ge
\mathcal Z_{NT}^{g}.
\]
It follows that
\[
\widehat Z_{NT}^{\mathrm{EPE}}\xrightarrow{P}\infty.
\]

Therefore, in either case,
\[
P\!\left(\widehat Z_{NT}^{\mathrm{EPE}}>c_{NT,\alpha}^{*}\right)\to1.
\]
This proves \eqref{eq:3.8}.
\end{proof}

\subsection{Proof of Theorem \ref{thm:3.5}}
\label{subsec:proof-thm-3-5}

\begin{proof}[Proof of Theorem \ref{thm:3.5}]
The theorem consists of two parts: screening consistency and uniform accuracy of the preliminary break-point estimators.

\medskip
\noindent
\textbf{Step 1: screening consistency.}
For each $i\in\mathcal C_0$, under $H_0$ we have $\delta_i\equiv0$. By the false-positive control in Assumption \ref{ass:screening}(1),
\[
N\sup_{i\in\mathcal C_0}P(Y_{iT}>\xi_{NT})\to0.
\]
Therefore, by the union bound,
\[
P(\exists\, i\in\mathcal C_0: i\in\widehat{\mathcal C}_1)\to0.
\]
Thus no false positives occur with probability tending to one.

For $i\in\mathcal C_1$, Assumption \ref{ass:screening}(2)--(3) implies that the observed subject-wise energy dominates the threshold:
\[
\frac{T\omega_{Ti}^2\|\delta_i\|_{L^2(C)}^2}{\xi_{NT}}\to\infty
\]
uniformly over $i\in\mathcal C_1$.
By Lemma \ref{lem:B.4}, this yields
\[
P\!\left(\min_{i\in\mathcal C_1}Y_{iT}>\xi_{NT}\right)\to1
\].
Hence
\[
P(\exists\, i\in\mathcal C_1: i\notin\widehat{\mathcal C}_1)\to0.
\]
Combining the two parts gives
\[
P\!\left(
\widehat{\mathcal C}_0=\mathcal C_0,\
\widehat{\mathcal C}_1=\mathcal C_1
\right)\to1.
\]

\medskip
\noindent
\textbf{Step 2: preliminary break-point estimation.}
Conditional on the event $\widehat{\mathcal C}_1=\mathcal C_1$, for each $i\in\mathcal C_1$,
the estimator $\widehat\tau_i$ maximizes the individual quadratic CUSUM objective
\[
Q_{i,T}(t)
:=
\int_C C_{i,T}^2(t/T,u)\,du.
\]
By the standard localization argument for functional change-point estimators,
the deterministic part of $Q_{i,T}(t)$ is uniquely maximized at $t=\tau_i$,
while the stochastic remainder is uniformly of smaller order.
Under Assumption \ref{ass:screening}(4) and Assumption \ref{ass:localmax}, Lemmas \ref{lem:B.5}--\ref{lem:B.6}
show that the argmax error obeys
\[
\max_{i\in\mathcal C_1}
|\widehat\tau_i-\tau_i|
=
o_P\!\left([\log(N\vee T)]^{1+\zeta}\right).
\]
This proves \eqref{eq:3.13}.
\end{proof}

\subsection{Proof of Theorem \ref{thm:3.6}}
\label{subsec:proof-thm-3-6}

\begin{proof}[Proof of Theorem \ref{thm:3.6}]
By Theorem \ref{thm:3.5}, with probability tending to one,
\[
\widehat{\mathcal C}_1=\mathcal C_1
\]
and
\[
\max_{i\in\mathcal C_1}|\widehat\tau_i-\tau_i|
=
o_P\!\left([\log(N\vee T)]^{1+\zeta}\right).
\]
Hence the preliminary break-point estimates cluster tightly around the true distinct break points
$\{b_1,\ldots,b_{K_0}\}$.

Under Assumption \ref{ass:cluster}(1), the true break points are separated at order $T$:
\[
|b_{k_1}-b_{k_2}|\asymp T,\qquad k_1\neq k_2.
\]
Since the estimation error is only logarithmic in $N\vee T$, it follows that the within-group dispersion
of the preliminary estimates is asymptotically negligible relative to the between-group gaps.
Therefore, the $(K_0-1)$ largest empirical gaps in the ordered sequence
$\widehat\tau_{(1)},\ldots,\widehat\tau_{(n)}$ coincide with the true group boundaries
with probability tending to one.

This establishes that, if $K=K_0$, then
\[
P\!\left(
\widehat{\mathcal C}(k\mid K_0)=\mathcal C(b_k),\ k=1,\ldots,K_0
\right)\to1.
\]

It remains to show that the information criterion consistently selects $K_0$.
Let
\[
IC(K)=\log V(K)+K\rho_{NT}.
\]
For $K<K_0$, at least one true cluster is incorrectly merged with another, which yields a non-negligible increase
in the within-cluster fitting loss $V(K)$.
By Lemma \ref{lem:B.8}, this loss dominates the penalty reduction from using fewer groups, and thus
\[
P\bigl(IC(K)>IC(K_0)\bigr)\to1,\qquad K<K_0.
\]
For $K>K_0$, some true cluster must be spuriously split.
In this case the reduction in $\log V(K)$ is asymptotically negligible, while the additional penalty
$(K-K_0)\rho_{NT}$ is positive and dominates by Assumption \ref{ass:cluster}(4)--(5).
Hence
\[
P\bigl(IC(K)>IC(K_0)\bigr)\to1,\qquad K>K_0.
\]
Therefore,
\[
P(\widehat K=K_0)\to1.
\]
Conditional on $\widehat K=K_0$, the membership consistency already established above implies
\[
P\!\left(
\widehat{\mathcal C}(b_k)=\mathcal C(b_k),\ k=1,\ldots,K_0
\ \Big|\
\widehat K=K_0
\right)\to1.
\]
This proves \eqref{eq:3.19} and \eqref{eq:3.20}.
\end{proof}

\subsection{Proof of Theorem \ref{thm:3.7}}
\label{subsec:proof-thm-3-7}

\begin{proof}[Proof of Theorem \ref{thm:3.7}]
By Theorem \ref{thm:3.6}, we may work on an event whose probability tends to one such that
\[
\widehat{\mathcal C}(b_k)=\mathcal C(b_k),\qquad k=1,\ldots,K_0.
\]
Thus, for each $k$,
\[
\widetilde b_k
=
\arg\max_{1\le t\le T}
\sum_{i\in\mathcal C(b_k)}
\int_C C_{i,T}^2(t/T,u)\,du.
\]
This is a pooled version of the individual break-point criterion.

Because all subjects in $\mathcal C(b_k)$ share the same true break point $b_k$,
pooling amplifies the deterministic signal component while averaging out stochastic fluctuations.
Under condition \eqref{eq:3.22}, the pooled signal strength diverges fast enough to dominate the noise uniformly outside any shrinking neighborhood of $b_k$.
More precisely, Lemma \ref{lem:B.10} implies that for every fixed $\epsilon>0$,
\[
P\!\left(
\sup_{|t-b_k|>\epsilon T}
\sum_{i\in\mathcal C(b_k)}
\int_C C_{i,T}^2(t/T,u)\,du
<
\sum_{i\in\mathcal C(b_k)}
\int_C C_{i,T}^2(b_k/T,u)\,du
\right)\to1.
\]
Hence $|\widetilde b_k-b_k|\le \epsilon T$ with probability tending to one.
Since $\epsilon>0$ is arbitrary,
\[
\max_{1\le k\le K_0}|\widetilde b_k-b_k|=o_P(T).
\]
This proves \eqref{eq:3.23}.
\end{proof}


\section{Technical Lemmas}
\label{app:C}

This appendix collects the technical lemmas used in Appendix~\ref{app:B}.

\subsection{Gaussian Approximation for Energy--CUSUM Statistics}
\label{subsec:gaussian-approximation}

\begin{lemma}
\label{lem:B.1}
Suppose that the Gaussian approximation condition \eqref{eq:3.GA} holds. Then
\[
\sup_{z\in\mathbb R}
\left|
P_{H_0}\!\left(Z_T^{\mathrm{Energy}}\le z\right)
-
P\!\left(Z_{NT}^{G}\le z\right)
\right|\to0.
\]
\end{lemma}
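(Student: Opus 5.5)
The statement of Lemma~\ref{lem:B.1} coincides with its hypothesis. The Kolmogorov-distance bound it asserts is exactly condition \eqref{eq:3.GA}, imposed in Theorem~\ref{thm:3.1}. So the plan is a direct verification that the conclusion is that assumption, and the only care needed is to check that the objects on both sides are the same. First, under $H_0$ the time-invariant mean $\mu_i$ is annihilated by the CUSUM transform, so $C_{i,T}=C_{i,T}^{\varepsilon}$ by \eqref{eq:A.1}. Hence $Z_T^{\mathrm{Energy}}$ in the lemma is the same random variable that appears in \eqref{eq:3.GA}. Second, $Z_{NT}^{G}$ is defined by \eqref{eq:3.4}, the supremum of the $\mathcal H_N$-energy of the Gaussian bridge with the long-run covariance of the panel CUSUM. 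This is the same object as in \eqref{eq:3.GA}. With these identifications the displayed supremum is literally the quantity assumed to vanish, and the lemma follows.

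To make the lemma carry some content, one could instead derive \eqref{eq:3.GA} from primitives. That route has three steps. First, strongly approximate the partial-sum process of the linear process in Assumption~\ref{ass:regularity}(1) in $\mathcal H_N$ by a Brownian motion with long-run covariance $\sum_j A_{ij}$. The summability $\sum_j j\max_i\|A_{ij}\|_{\mathcal O}<\infty$ permits a Beveridge--Nelson decomposition, so the remainder is $O_P(T^{-1/2})$ uniformly in $x$. Second, pass to the bridge and use the continuous map $h\mapsto\sup_x\|h(x)\|_{\mathcal H_N}^2$. This map is locally Lipschitz on bounded sets, and boundedness holds by the moment condition in Assumption~\ref{ass:regularity}(2). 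Third, upgrade the closeness in probability to Kolmogorov distance using anti-concentration of $Z_{NT}^G$.

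The main obstacle on that route is the first step: obtaining a Gaussian coupling that holds uniformly in $N=O(T^\kappa)$ with only fourth moments, which is generally unavailable without extra structure. This is presumably why the paper takes \eqref{eq:3.GA} as a high-level condition. I would therefore present the proof as the identification argument above and remark that \eqref{eq:3.GA} is a maintained assumption.
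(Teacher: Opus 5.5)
Your proposal is correct and matches the paper's proof, which simply notes that the lemma restates condition \eqref{eq:3.GA} imposed in Theorem~\ref{thm:3.1}. The identification $C_{i,T}=C_{i,T}^{\varepsilon}$ under $H_0$ is harmless but not needed, and your sketch of deriving \eqref{eq:3.GA} from primitives is optional commentary rather than part of the argument.
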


\begin{proof}
This is exactly condition \eqref{eq:3.GA}, which is stated at the theorem level because it is used only for null calibration.
\end{proof}

\begin{lemma}
\label{lem:B.2}
Suppose that Assumption \ref{ass:regularity}(1)--(2) holds.
Then, for each fixed $x_0\in(0,1)$,
\[
\frac1N\sum_{i\in I_N}\|C_{i,T}^{\varepsilon}(x_0)\|_{L^2(C)}^2=O_P(1).
\]
\end{lemma}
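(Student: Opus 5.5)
The plan is a first-moment bound followed by Markov's inequality. Each summand is nonnegative, and
\[
\frac1N\sum_{i\in I_N}\|C_{i,T}^{\varepsilon}(x_0)\|_{L^2(C)}^2
\le
\frac1N\sum_{i=1}^N\|C_{i,T}^{\varepsilon}(x_0)\|_{L^2(C)}^2 ,
\]
so it suffices to show that the full cross-sectional average is $O_P(1)$. The error-only process $C_{i,T}^{\varepsilon}$ coincides in law with the CUSUM process under $H_0$: the time-invariant mean $\mu_i$ is annihilated by the CUSUM transform, exactly as in the proof of Theorem~\ref{thm:3.1}. So we may borrow the null moment bound in Assumption~\ref{ass:regularity}(2), which gives
\[
\sup_{0\le x\le 1}\frac1N\sum_{i=1}^N E\|C_{i,T}^{\varepsilon}(x)\|_{L^2(C)}^2\le M<\infty .
\]
Evaluating at $x=x_0$ and applying Markov's inequality, $P\bigl(N^{-1}\sum_i\|C_{i,T}^{\varepsilon}(x_0)\|^2>M/\epsilon\bigr)\le\epsilon$ for every $\epsilon>0$. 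This is the claim.

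The one point needing care is that Assumption~\ref{ass:regularity}(2) is phrased ``under $H_0$'', while the lemma is used under $H_A$. I would handle it through the decomposition \eqref{eq:A.1}. The process $C_{i,T}^{\varepsilon}$ depends only on $\{\varepsilon_{it}\}$, whose law does not depend on the jump functions $\delta_i$. Hence any moment statement about the null CUSUM process is a statement about $C_{i,T}^{\varepsilon}$, and it holds verbatim under the alternative.

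As a self-contained backup that avoids this interpretive step, I would verify the moment bound directly from Assumption~\ref{ass:regularity}(1). Write $C_{i,T}^{\varepsilon}(x_0)=T^{-1/2}\bigl[S_{i,\lfloor Tx_0\rfloor}-(\lfloor Tx_0\rfloor/T)S_{i,T}\bigr]$ with partial sums $S_{i,k}=\sum_{t\le k}\varepsilon_{it}$. Then $E\|C_{i,T}^{\varepsilon}(x_0)\|^2\le 2T^{-1}\bigl(E\|S_{i,\lfloor Tx_0\rfloor}\|^2+E\|S_{i,T}\|^2\bigr)$. For the linear process, $E\|S_{i,k}\|^2\le k\bigl(\sum_j\|A_{ij}\|_{\mathcal O}\bigr)^2\sup_t E\|\eta_{it}\|^2$. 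Both factors are bounded uniformly in $i$: the first by the summability condition, since $\sum_j\|A_{ij}\|\le\|A_{i0}\|+\sum_j j\max_i\|A_{ij}\|$ and $\max_i\|A_{i0}\|$ is bounded as well, and the second by the uniform fourth moments.

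The main (minor) obstacle is this long-run variance bound. I would expand $\|S_{i,k}\|^2=\sum_{s,t}\langle\varepsilon_{is},\varepsilon_{it}\rangle$ and bound each cross-covariance by $\sum_j\|A_{ij}\|\|A_{i,j+|s-t|}\|\,E\|\eta\|^2$. Summing over lags then gives the $k(\sum_j\|A_{ij}\|)^2$ bound.
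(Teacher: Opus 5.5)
Your proposal is correct and is essentially the paper's argument: a bound on $E\|C_{i,T}^{\varepsilon}(x_0)\|_{L^2(C)}^2$ uniform in $N,T$, followed by Markov's inequality. The paper cites Assumption~\ref{ass:regularity}(1) for that bound, while your primary route takes the averaged bound from Assumption~\ref{ass:regularity}(2), which is legitimate because $C_{i,T}^{\varepsilon}$ depends only on the errors and equals the CUSUM process under $H_0$. One caveat on your backup is that $\sum_{j} j\max_i\|A_{ij}\|_{\mathcal O}<\infty$ gives zero weight to $j=0$, so $\sup_N\max_{i\le N}\|A_{i0}\|_{\mathcal O}<\infty$ does not follow and must be assumed separately; the paper's one-line appeal to part (1) tacitly needs the same thing, which makes your route through part (2) the cleaner one on the assumptions as stated.
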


\begin{proof}
By Assumption \ref{ass:regularity}(1), each $C_{i,T}^{\varepsilon}(x_0)$ is a centered Hilbert-space random element
with uniformly bounded second moment.
Therefore,
\[
E\|C_{i,T}^{\varepsilon}(x_0)\|_{L^2(C)}^2\le C
\]
uniformly in $i$ and $T$.
Averaging over $i\in I_N$ and applying Markov's inequality yields the claim.
\end{proof}

\subsection{Signal Lower Bound for Energy--CUSUM}
\label{subsec:signal-lower-bound}

\begin{lemma}
\label{lem:B.3}
Suppose that Assumption \ref{ass:dense} holds.
Let $x_0=r_0$, where $r_0$ is the common alignment point in \eqref{eq:3.B4}.
Then there exists a constant $c_0>0$ such that, for all sufficiently large $T$,
\[
\|D_{i,T}(x_0)\|_{L^2(C)}^2
\ge
c_0 T\|\delta_i\|_{L^2(C)}^2
\]
uniformly over $i\in I_N$.
\end{lemma}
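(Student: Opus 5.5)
We compute the deterministic CUSUM transform $D_{i,T}(x_0,\cdot)$ explicitly. The signal sequence $\delta_i\mathbf 1\{t>\tau_i\}$ is a step function in $t$, so its partial sums are piecewise linear. Write $k=\lfloor Tx_0\rfloor$. We have
\[
\sum_{t=1}^{T}\mathbf 1\{t>\tau_i\}=T-\tau_i,
\qquad
\sum_{t=1}^{k}\mathbf 1\{t>\tau_i\}=(k-\tau_i)_+ .
\]
Hence
\[
D_{i,T}(x_0,u)=\frac{\delta_i(u)}{\sqrt T}\Bigl[(k-\tau_i)_+-\frac{k(T-\tau_i)}{T}\Bigr].
\]
Since $\delta_i(u)$ factors out, this gives
\[
\|D_{i,T}(x_0)\|_{L^2(C)}^2=T\,\|\delta_i\|_{L^2(C)}^2\,h_T(k/T,\tau_i/T)^2,
\]
where $h_T(x,s)=(x-s)_+-x(1-s)$. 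The whole problem therefore reduces to a uniform lower bound on a scalar function $|h_T|$.

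Next we evaluate the limit. For $x\le s$ we get $h(x,s)=-x(1-s)$, and for $x\ge s$ we get $h(x,s)=-s(1-x)$. At $x=s=r_0$ both give $|h(r_0,r_0)|=r_0(1-r_0)\ge \eta(1-\eta)>0$. This value uses \eqref{eq:3.B4} for the bound $\eta\le r_0\le1-\eta$.

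To obtain uniformity over $i\in I_N$, we use that $h$ is Lipschitz in both arguments with constant at most $2$ on $[0,1]^2$. Combining $|k/T-r_0|\le 1/T$ with the alignment condition $\max_{i\in I_N}|\tau_i/T-r_0|\to0$, we get
\[
\min_{i\in I_N}|h(k/T,\tau_i/T)|\ \ge\ r_0(1-r_0)-2\Bigl(\tfrac1T+\max_{i\in I_N}|\tau_i/T-r_0|\Bigr).
\]
For large $T$ the right-hand side is at least $\tfrac12\eta(1-\eta)$. We can therefore take $c_0=\tfrac14\eta^2(1-\eta)^2$.

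The main point needing care is exactly this uniformity. The convergence in \eqref{eq:3.B4} is assumed uniform over $i\in I_N$, so the Lipschitz bound transfers to the minimum over $i$ with no further work, and the floor function contributes only an $O(1/T)$ error. No stochastic arguments are needed, since $D_{i,T}$ is deterministic.
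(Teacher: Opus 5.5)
Your proof is correct and follows the same route as the paper. Like the paper, you factor $\delta_i(u)$ out of $D_{i,T}(x_0,u)$, so that $\|D_{i,T}(x_0)\|_{L^2(C)}^2$ equals $T\|\delta_i\|_{L^2(C)}^2$ times a squared scalar coefficient, and then show that this coefficient converges uniformly over $i\in I_N$ to a nonzero limit. You also make explicit what the paper only asserts: the limit is $-r_0(1-r_0)$, it is bounded away from zero by $\eta(1-\eta)$, uniformity follows from the Lipschitz continuity of $h$ plus the $O(1/T)$ floor error, and this yields the concrete constant $c_0=\tfrac14\eta^2(1-\eta)^2$.
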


\begin{proof}
By direct calculation,
\[
D_{i,T}(x,u)
=
\sqrt{T}\,g_{i,T}(x)\delta_i(u),
\]
where
\[
g_{i,T}(x)
=
\frac{1}{T}
\left[
\sum_{t=1}^{\lfloor Tx\rfloor}\mathbf 1\{t>\tau_i\}
-
\frac{\lfloor Tx\rfloor}{T}\sum_{t=1}^{T}\mathbf 1\{t>\tau_i\}
\right].
\]
Hence
\[
\|D_{i,T}(x)\|_{L^2(C)}^2
=
T\,g_{i,T}^2(x)\,\|\delta_i\|_{L^2(C)}^2.
\]
Under \eqref{eq:3.B4}, $\tau_i/T\to r_0$ uniformly over $i\in I_N$.
Evaluating at $x=x_0=r_0$, the coefficient $g_{i,T}(x_0)$ converges uniformly to a nonzero constant
depending only on $r_0$.
Therefore, for all large $T$,
\[
g_{i,T}^2(x_0)\ge c_0
\]
for some $c_0>0$, uniformly in $i\in I_N$.
The claim follows.
\end{proof}

\subsection{Properties of the Generalized PE Component}
\label{subsec:generalized-pe-properties}

\begin{lemma}
\label{lem:B.4}
Suppose that Assumption \ref{ass:screening}(2)--(3) holds.
Then
\[
P\!\left(\min_{i\in\mathcal C_1}Y_{iT}>\xi_{NT}\right)\to1.
\]
\end{lemma}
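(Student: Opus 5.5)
The plan is to show that, uniformly over changed subjects, the observed energy $Y_{iT}$ exceeds a fixed positive fraction of the scale $s_{iT}:=T\omega_{Ti}^2\|\delta_i\|_{L^2(C)}^2$. Assumption \ref{ass:screening}(2) then makes this fraction of $s_{iT}$ larger than $\xi_{NT}$.

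\textbf{Step 1: lower bound on the deterministic energy.} As in the proof of Lemma \ref{lem:B.3}, the signal part factorizes as $D_{i,T}(x,u)=\sqrt T\,g_{i,T}(x)\delta_i(u)$. Hence
\[
Y_{iT}^{(0)}=T\|\delta_i\|_{L^2(C)}^2\sup_x g_{i,T}^2(x).
\]
Evaluating at $x=\tau_i/T$ gives
\[
g_{i,T}(\tau_i/T)=-\frac{\tau_i}{T}\Bigl(1-\frac{\tau_i}{T}\Bigr)
\]
exactly. Since $\max(a,1-a)\ge 1/2$, we have $a(1-a)\ge \tfrac12\{a\wedge(1-a)\}$. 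Therefore
\[
Y_{iT}^{(0)}\ge \tfrac14 T\omega_{Ti}^2\|\delta_i\|_{L^2(C)}^2=\tfrac14 s_{iT}
\]
for every $i\in\mathcal C_1$. This is a deterministic bound with no uniformity issue. The only care needed is the floor in $\lfloor Tx\rfloor$, which is harmless because $\tau_i$ is an integer.

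\textbf{Step 2: transfer to the observed energy.} Assumption \ref{ass:screening}(3) gives
\[
\max_{i\in\mathcal C_1}\frac{|Y_{iT}-Y_{iT}^{(0)}|}{s_{iT}}=o_P(1).
\]
So on an event $\Omega_T$ with $P(\Omega_T)\to1$, we have $|Y_{iT}-Y_{iT}^{(0)}|\le s_{iT}/8$ for all $i\in\mathcal C_1$ simultaneously. On $\Omega_T$, this yields
\[
Y_{iT}\ge \tfrac14 s_{iT}-\tfrac18 s_{iT}=\tfrac18 s_{iT}\quad\text{for all }i\in\mathcal C_1.
\]

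\textbf{Step 3: compare with the threshold.} By \eqref{eq:3.10}, $\min_{i\in\mathcal C_1}s_{iT}/\xi_{NT}\to\infty$. This is deterministic, so for large $T$ we have $s_{iT}>8\xi_{NT}$ for every $i\in\mathcal C_1$. Hence, on $\Omega_T$, $\min_{i\in\mathcal C_1}Y_{iT}>\xi_{NT}$, and the probability of this event tends to one.

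The main obstacle is that the uniformity over the possibly growing set $\mathcal C_1$ must come entirely from \eqref{eq:3.10b}, since no union bound over noise tails is available here. The plan therefore relies on Assumption \ref{ass:screening}(3) being a maximum over $i$, and on the constant in Step 1 being universal (equal to $1/4$) rather than depending on $i$. If $\mathcal C_1$ is empty, the statement holds trivially.
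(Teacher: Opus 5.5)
Your proof is correct and follows the same three-step route as the paper. You obtain a deterministic lower bound $Y_{iT}^{(0)}\gtrsim T\omega_{Ti}^2\|\delta_i\|_{L^2(C)}^2$ from the factorization $D_{i,T}(x,u)=\sqrt T\,g_{i,T}(x)\delta_i(u)$, transfer it to $Y_{iT}$ via \eqref{eq:3.10b}, and compare with $\xi_{NT}$ via \eqref{eq:3.10}. The only difference is that you make the paper's ``$\asymp$'' explicit by evaluating at $x=\tau_i/T$ (rather than at a common $r_0$ as in Lemma~\ref{lem:B.3}), which gives the universal constant $1/4$ and so supplies the uniformity over $\mathcal C_1$ that the paper's argument uses tacitly.
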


\begin{proof}
By the same deterministic CUSUM calculation as in Lemma \ref{lem:B.3},
\[
Y_{iT}^{(0)}
=
\sup_{0\le x\le1}\int_C D_{i,T}^2(x,u)\,du
\asymp
T\omega_{Ti}^2\|\delta_i\|_{L^2(C)}^2.
\]
Assumption \ref{ass:screening}(3) gives
\[
\min_{i\in\mathcal C_1}
\frac{Y_{iT}}{T\omega_{Ti}^2\|\delta_i\|_{L^2(C)}^2}
\ge c+o_P(1)
\]
for some constant \(c>0\). Together with \eqref{eq:3.10}, this implies that the observed energy dominates the screening threshold uniformly over changed subjects. Hence
\[
P\!\left(\min_{i\in\mathcal C_1}Y_{iT}>\xi_{NT}\right)\to1.
\]
\end{proof}

\subsection{Uniform Break-Point Localization}
\label{subsec:uniform-localization}

\begin{lemma}
\label{lem:B.5}
Suppose that Assumption \ref{ass:localmax} holds.
Then
\[
\max_{1\le i\le N}\sup_{1\le t\le T}
\left|
\int_C C_{i,T}^2(t/T,u)\,du
-
E\!\left[\int_C C_{i,T}^2(t/T,u)\,du\right]
\right|
=
O_P\!\left(\log(N\vee T)\right).
\]
\end{lemma}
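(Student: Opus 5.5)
Lemma~\ref{lem:B.5} restates Assumption~\ref{ass:localmax} verbatim: the left-hand side is the maximum over the subject--time grid of the centered squared CUSUM criterion, and the right-hand side is the $O_P(\log(N\vee T))$ rate. The plan is therefore to invoke that assumption directly, in the same way Lemma~\ref{lem:B.1} is discharged by condition~\eqref{eq:3.GA}. We first note that $\int_C C_{i,T}^2(t/T,u)\,du$ is the quantity $Q_{i,T}(t)$ used in the proof of Theorem~\ref{thm:3.5}. The stated bound is then exactly the displayed condition in Assumption~\ref{ass:localmax}, and nothing further is needed.

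To make the lemma less circular, we would also sketch why Assumption~\ref{ass:localmax} is plausible under sub-Gaussian innovations. Decompose $C_{i,T}=C^{\varepsilon}_{i,T}+D_{i,T}$ as in \eqref{eq:A.1}. The centered criterion then equals
\[
\bigl(\|C^{\varepsilon}_{i,T}\|^2-E\|C^{\varepsilon}_{i,T}\|^2\bigr)+2\langle C^{\varepsilon}_{i,T},D_{i,T}\rangle .
\]
The first term is a centered quadratic form in the innovations. Under the linear representation of Assumption~\ref{ass:regularity}(1) with summable $j\|A_{ij}\|$, a Hanson--Wright inequality gives tails of the form $\exp(-c\min(s^2,s))$ at scale $O(1)$. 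A union bound over the $NT$ grid points then yields $O_P(\log(N\vee T))$, using $\log(NT)\lesssim\log(N\vee T)$ because $N=O(T^\kappa)$.

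The main obstacle is the cross term $2\langle C^{\varepsilon}_{i,T},D_{i,T}\rangle$. It is Gaussian-like with standard deviation of order $\|D_{i,T}\|$, which can be as large as $\sqrt T\,\|\delta_i\|$. A union bound on this term only gives $O_P(\sqrt{T\log(N\vee T)}\,\|\delta_i\|)$, not $O_P(\log(N\vee T))$. For this reason we would rely on the assumption as stated, rather than claim the bound follows from primitive conditions under bounded-away-from-zero jumps. In the localization argument, the cross term would instead be handled through increments $Q_{i,T}(t)-Q_{i,T}(\tau_i)$, whose cross-term variance scales with $|t-\tau_i|$.
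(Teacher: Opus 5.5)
Your proposal is correct and matches the paper: the paper's proof of Lemma~\ref{lem:B.5} likewise just notes that the display is verbatim the maximal inequality of Assumption~\ref{ass:localmax}. Your side caveat is also sound. Under the alternative, the centered criterion contains the cross term $2\langle C^{\varepsilon}_{i,T},D_{i,T}\rangle$, which is of order $\sqrt{T}\,\|\delta_i\|_{L^2(C)}$ near $\tau_i$, so the assumption is generically hard to reconcile with Assumption~\ref{ass:screening}(4), and controlling the increments $Q_{i,T}(t)-Q_{i,T}(\tau_i)$ is the natural repair for the localization argument.
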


\begin{proof}
This is exactly the maximal inequality stated in Assumption \ref{ass:localmax}.
The assumption is separated from the baseline regularity conditions because it is needed only for the refined localization rate.
\end{proof}

\begin{lemma}
\label{lem:B.6}
Suppose that \eqref{eq:3.11} holds.
Then, for each $i\in\mathcal C_1$, the deterministic criterion
\[
Q_{i,T}^{(0)}(t)
:=
E\!\left[\int_C C_{i,T}^2(t/T,u)\,du\right]
\]
is uniquely maximized at $t=\tau_i$, and satisfies the local curvature bound
\[
Q_{i,T}^{(0)}(\tau_i)-Q_{i,T}^{(0)}(t)
\ge
c\,|t-\tau_i|
\]
for some constant $c>0$ uniformly in $i\in\mathcal C_1$ and $t$ in a neighborhood of $\tau_i$.
\end{lemma}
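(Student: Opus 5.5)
The plan is to compute the deterministic criterion $Q_{i,T}^{(0)}$ almost explicitly, using the decomposition \eqref{eq:A.1}, and then compare its signal part and its noise part separately. Since $C_{i,T}^{\varepsilon}$ is centered and $D_{i,T}$ is deterministic, the cross term has zero expectation. This gives
\[
Q_{i,T}^{(0)}(t)=\int_C D_{i,T}^2(t/T,u)\,du+\nu_{i,T}(t),
\qquad
\nu_{i,T}(t):=E\int_C \{C_{i,T}^{\varepsilon}(t/T,u)\}^2\,du .
\]
As in the proof of Lemma~\ref{lem:B.3}, the signal part equals $T\,g_{i,T}^2(t/T)\,\|\delta_i\|_{L^2(C)}^2$. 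Write $r=\tau_i/T$ and $s=t/T$. A direct evaluation gives $g_{i,T}(s)=-s(1-r)$ for $t\le\tau_i$ and $g_{i,T}(s)=-r(1-s)$ for $t>\tau_i$, exactly for integer $t$. Hence the signal part is tent-shaped and peaks at $t=\tau_i$ with value $Tr^2(1-r)^2\|\delta_i\|^2$.

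\textbf{Step 1 (signal curvature).} For $t<\tau_i$ the signal loss is
\[
T(1-r)^2(r^2-s^2)\|\delta_i\|^2=(1-r)^2(r+s)\,|\tau_i-t|\,\|\delta_i\|^2\ \ge\ (1-r)^2 r\,|\tau_i-t|\,\|\delta_i\|^2 .
\]
The case $t>\tau_i$ is symmetric and yields the bound $r^2(1-r)|\tau_i-t|\,\|\delta_i\|^2$.

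To convert this into a uniform constant, I will use $\omega_{Ti}=r\wedge(1-r)\le 1/2$ together with \eqref{eq:3.11}. Consider $t<\tau_i$.
\begin{itemize}
\item If $r\le1/2$, then $(1-r)^2r\,\|\delta_i\|^2\ge \omega_{Ti}\|\delta_i\|^2/4=(\omega_{Ti}\|\delta_i\|)^2/(4\omega_{Ti})\ge c_\delta^2/2$.
\item If $r>1/2$, then $(1-r)^2r\,\|\delta_i\|^2\ge \omega_{Ti}^2\|\delta_i\|^2/2\ge c_\delta^2/2$.
\end{itemize}
The case $t>\tau_i$ is handled in the same way. So the signal loss is at least $c_1|t-\tau_i|$ with $c_1=c_\delta^2/2$, uniformly in $i\in\mathcal C_1$ and in every $t$, not only locally.

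\textbf{Step 2 (noise part is nearly flat).} I will show $|\nu_{i,T}(t)-\nu_{i,T}(\tau_i)|\le L|t-\tau_i|/T$, with $L$ uniform in $i$. I expand
\[
\nu_{i,T}(t)=T^{-1}E\Bigl\|\sum_{j\le t}\varepsilon_{ij}-(t/T)\sum_{j\le T}\varepsilon_{ij}\Bigr\|^2
\]
through the autocovariance operators $\Gamma_i(h)$ of $\varepsilon_{it}$. Stationarity turns $\nu_{i,T}(t)$ into a combination of sums of $\operatorname{tr}\Gamma_i(h)$ over ranges whose lengths move by one as $t$ moves by one. By Assumption~\ref{ass:regularity}(1), $\sum_h|\operatorname{tr}\Gamma_i(h)|\le (\sum_j\|A_{ij}\|_{\mathcal O})^2\sup_{i,t}E\|\eta_{it}\|^2$, and this bound is uniform in $i$. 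Hence each unit increment of $t$ changes $\nu_{i,T}$ by $O(1/T)$ uniformly, since the prefactor is $T^{-1}$ and the increment of the bracketed second moment is bounded. This step is the main obstacle: it needs careful bookkeeping of the edge terms in the double sum, and I plan to bound the first difference $\nu_{i,T}(t+1)-\nu_{i,T}(t)$ directly rather than seek a closed form.

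\textbf{Step 3 (combine).} Combining the two steps gives
\[
Q_{i,T}^{(0)}(\tau_i)-Q_{i,T}^{(0)}(t)\ge (c_1-L/T)\,|t-\tau_i|\ge (c_1/2)\,|t-\tau_i|
\]
for $T\ge 2L/c_1$, uniformly in $i$ and $t$. This is strictly positive for $t\neq\tau_i$, which yields both unique maximization at $\tau_i$ and the curvature bound with $c=c_\delta^2/4$.
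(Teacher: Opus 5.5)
Your proof is correct. It is more complete than the paper's, which is a three-sentence sketch. The paper identifies $Q_{i,T}^{(0)}$ with the squared norm of the deterministic CUSUM signal, notes that this is piecewise quadratic in $t/T$ with its peak at $\tau_i$, and asserts that \eqref{eq:3.11} bounds the slope away from the peak.

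The paper never addresses the noise contribution $\nu_{i,T}(t)$, even though the lemma defines $Q_{i,T}^{(0)}$ as the full expectation $E\int_C C_{i,T}^2(t/T,u)\,du$. Your Step~2 supplies exactly what that omission needs: a Lipschitz bound of order $1/T$ per unit step of $t$, obtained from summable autocovariances. The price is that you use Assumption~\ref{ass:regularity}(1) and restrict to $T\ge 2L/c_1$. Neither is among the lemma's stated hypotheses, but both are in force wherever the lemma is applied (Theorem~\ref{thm:3.5}).

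Your Step~1 also goes beyond the paper's sketch. The exact identity $T(1-r)^2(r^2-s^2)\|\delta_i\|^2=(1-r)^2(r+s)\,|\tau_i-t|\,\|\delta_i\|^2$ and its mirror image for $t>\tau_i$ give the linear loss $(c_\delta^2/2)\,|t-\tau_i|$ for every $t$, not only near $\tau_i$. The global version is what Proposition~\ref{prop:A.5} actually invokes when it controls the loss over all $|t-\tau_i|\ge\epsilon b_{NT}$.

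One detail to tidy concerns uniformity of $L$ in $i$. The displayed summability condition $\sum_j j\max_i\|A_{ij}\|_{\mathcal O}<\infty$ gives weight zero to $j=0$. It therefore does not by itself yield $\sup_i\sum_j\|A_{ij}\|_{\mathcal O}<\infty$. Add $\sup_i\|A_{i0}\|_{\mathcal O}<\infty$ explicitly so that your constant $L$ is uniform in $i$.
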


\begin{proof}
The deterministic objective is the squared norm of the CUSUM signal induced by a single mean shift.
Its explicit form is piecewise quadratic in $t/T$, with a unique maximum at the true break point.
Under the signal separation condition \eqref{eq:3.11}, the slope away from the maximizer is uniformly bounded below.
\end{proof}

\subsection{Information Criterion and Cluster Recovery}
\label{subsec:ic-cluster-recovery}

\begin{lemma}
\label{lem:B.7}
Suppose that the conditions for the localization conclusion in Theorem \ref{thm:3.5} hold.
Then
\[
\max_{i\in\mathcal C_1}|\widehat\tau_i-\tau_i|
=
o_P\!\left([\log(N\vee T)]^{1+\zeta}\right),
\]
whereas for any $k_1\neq k_2$,
\[
|b_{k_1}-b_{k_2}|\asymp T.
\]
Consequently, the within-group dispersion of $\{\widehat\tau_i:i\in\mathcal C(b_k)\}$ is asymptotically negligible relative to the between-group gaps.
\end{lemma}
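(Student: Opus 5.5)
The statement to prove is Lemma \ref{lem:B.7}, which has three claims:
\begin{itemize}
\item[(a)] a uniform localization rate for the preliminary estimators;
\item[(b)] order-$T$ separation of the common break points;
\item[(c)] the resulting negligibility of within-group dispersion relative to between-group gaps.
\end{itemize}
For (a) I will simply invoke the second part of Theorem \ref{thm:3.5} (equivalently Proposition \ref{prop:A.5}). Its hypotheses, Assumption \ref{ass:screening}(4) and Assumption \ref{ass:localmax}, are exactly the stated ones. This gives $\max_{i\in\mathcal C_1}|\widehat\tau_i-\tau_i|=o_P([\log(N\vee T)]^{1+\zeta})$ for every small $\zeta>0$. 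I will also record the first part of Theorem \ref{thm:3.5}, namely $P(\widehat{\mathcal C}_1=\mathcal C_1)\to1$. All subsequent statements are then made on this event, so the sorted list $\widehat\tau_{(1)}\le\cdots\le\widehat\tau_{(n)}$ is indexed by exactly $\mathcal C_1$ and $n=|\mathcal C_1|$.

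For (b) I will use Assumption \ref{ass:cluster}(1). Since $b_k=\lfloor c_kT\rfloor$ with fixed $0<c_1<\cdots<c_{K_0}<1$, for $k_1\neq k_2$ we have
\[
|b_{k_1}-b_{k_2}|\ge |c_{k_1}-c_{k_2}|T-1\ge \underline c\,T-1,
\qquad
\underline c:=\min_k(c_{k+1}-c_k)>0,
\]
and $|b_{k_1}-b_{k_2}|\le T$. Hence $|b_{k_1}-b_{k_2}|\asymp T$, with constants depending only on the fixed $c_k$'s. Together with the latent structure \eqref{eq:3.14}, every $i\in\mathcal C(b_k)$ has $\tau_i=b_k$.

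For (c) I will set $r_{NT}:=\max_{i\in\mathcal C_1}|\widehat\tau_i-\tau_i|$. By the triangle inequality, the within-group spread satisfies
\[
\max_k\max_{i,j\in\mathcal C(b_k)}|\widehat\tau_i-\widehat\tau_j|\le 2r_{NT},
\]
while for $i\in\mathcal C(b_{k})$ and $j\in\mathcal C(b_{k+1})$,
\[
\widehat\tau_j-\widehat\tau_i\ge \underline cT-1-2r_{NT}.
\]
The key quantitative step is to show that $r_{NT}/T=o_P(1)$. This requires $[\log(N\vee T)]^{1+\zeta}=o(T)$, which follows from Assumption \ref{ass:regularity}(3): since $N=O(T^\kappa)$, we have $\log(N\vee T)=O(\log T)$, and so $(\log T)^{1+\zeta}/T\to0$. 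Hence, with probability tending to one, $2r_{NT}<\underline cT/4$. The sorted estimates then form $K_0$ consecutive blocks, each coinciding with some $\mathcal C(b_k)$, ordered by $k$. Every adjacent gap inside a block is at most $2r_{NT}$, and each of the $K_0-1$ boundary gaps is at least $\underline cT/2$. The ratio of within-group dispersion to between-group gap is therefore $o_P(1)$, which is claim (c).

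Note that Assumption \ref{ass:cluster}(2), with $d_k>0$, is needed so that each block is nonempty and the boundary gaps actually exist.

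The main obstacle is the uniformity in (a). The argmax argument behind Theorem \ref{thm:3.5} must control the fluctuations in Assumption \ref{ass:localmax} simultaneously over the growing set $\mathcal C_1$. Lemma \ref{lem:B.6} gives only a local linear curvature near $\tau_i$. I would therefore first verify a global version of that curvature: the deterministic CUSUM profile is explicitly piecewise quadratic, so a gap of order $\min(|t-\tau_i|,T)$ holds away from the peak. Once this is in place, I would take the rate from Theorem \ref{thm:3.5} as given rather than reprove it.
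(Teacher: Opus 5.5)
Your proposal is correct and follows the paper's proof, which simply cites the localization part of Theorem~\ref{thm:3.5} for the rate and Assumption~\ref{ass:cluster}(1) for the order-$T$ separation, and then asserts that combining the two gives the scale separation. You also check explicitly that $N=O(T^\kappa)$ in Assumption~\ref{ass:regularity}(3) forces $[\log(N\vee T)]^{1+\zeta}=o(T)$; this fills in the combination step the paper leaves implicit, and without it the rate could be vacuous. The global-curvature remarks in your last paragraph are not needed for this lemma, since the rate is taken directly from Theorem~\ref{thm:3.5}.
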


\begin{proof}
The first claim is Theorem \ref{thm:3.5}; the second is Assumption \ref{ass:cluster}(1).
Their combination implies the asserted scale separation.
\end{proof}

\begin{lemma}
\label{lem:B.8}
Suppose that Assumption \ref{ass:cluster}(5) holds and \(K<K_0\).
Then there exists a constant $c>0$ such that
\[
\log V(K)-\log V(K_0)\ge c+o_P(1).
\]
\end{lemma}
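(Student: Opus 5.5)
The statement of Lemma~\ref{lem:B.8} is, on its face, the first half of Assumption~\ref{ass:cluster}(5). That assumption postulates
\[
\inf_{1\le K<K_0}\{\log V(K)-\log V(K_0)\}\ge c+o_P(1).
\]
For any fixed $K<K_0$ we have $\log V(K)-\log V(K_0)\ge\inf_{1\le K'<K_0}\{\log V(K')-\log V(K_0)\}$. The claim therefore follows at once with the same constant $c>0$. Because $\overline K$ is fixed, the $o_P(1)$ term may also be taken uniformly over the finitely many $K<K_0$. This uniformity is what is needed when Lemma~\ref{lem:B.8} is invoked inside Proposition~\ref{prop:A.6} and Theorem~\ref{thm:3.6}. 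So the formal proof is a one-line reduction, and I would present it that way.

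To make the lemma more than a restatement, I would add a remark sketching why the high-level condition is natural under the primitive assumptions. The plan would be to work on the event $\widehat{\mathcal C}_1=\mathcal C_1$, on which Lemma~\ref{lem:B.7} gives block separation.

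Under $K=K_0$ the fitted break dates satisfy $\widehat b_{k\mid K_0}=b_k+o_P(T)$. Every residual $X_{it}-\widehat\nu_{it,k\mid K_0}$ then equals $\varepsilon_{it}$ plus a term whose contribution to $V(K_0)$ is $o_P(1)$. Using Assumption~\ref{ass:regularity}(2), $V(K_0)$ converges to the average noise variance $\sigma^2>0$.

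For $K<K_0$, the gap construction merges at least two true groups, say $\mathcal C(b_k)$ and $\mathcal C(b_{k+1})$. Their common fitted break $\widehat b$ lies at distance at least $c'T$ from one of $b_k,b_{k+1}$, by Assumption~\ref{ass:cluster}(1). Each subject in the misfit group incurs a misspecification loss of order $\|\delta_i\|^2_{L^2(C)}$ times a positive fraction of time points. Summing over a set of relative size $d_k>0$ (Assumption~\ref{ass:cluster}(2)) and using $\inf_{i\in\mathcal C_1}\|\delta_i\|_{L^2(C)}>0$ (Assumption~\ref{ass:cluster}(3)) gives $V(K)\ge\sigma^2+c''+o_P(1)$. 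Taking logarithms yields the claim with $c=\log(1+c''/\sigma^2)$.

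The main obstacle in this optional verification is controlling the cross terms between noise and the misspecified mean. These require a uniform law of large numbers over the estimated break location. I would handle them via the Cauchy--Schwarz inequality together with Assumption~\ref{ass:regularity}(2). Since the lemma as stated only needs Assumption~\ref{ass:cluster}(5), the proof itself rests on the direct reduction above.
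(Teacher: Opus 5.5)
Your proof is correct and matches the paper's: the paper also proves Lemma~\ref{lem:B.8} by observing that it is the underfitting half of Assumption~\ref{ass:cluster}(5), and your remark that the infimum form makes the bound uniform over the finitely many $K<K_0$ is a harmless addition. Your optional heuristic goes beyond what the paper does and is not needed for the lemma, but note that Assumption~\ref{ass:regularity}(2) controls averaged CUSUM energies, not the residual second moments entering $V(K_0)$, so the claimed convergence $V(K_0)\to\sigma^2$ would require a separate law-of-large-numbers assumption.
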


\begin{proof}
This is the underfitting separation condition in Assumption \ref{ass:cluster}(5). Its role is to formalize the fact that if \(K<K_0\), at least one fitted cluster merges two distinct true break locations, so the resulting approximation error remains bounded away from zero.
\end{proof}

\begin{lemma}
\label{lem:B.9}
Suppose that Assumption \ref{ass:cluster}(5) holds and \(K>K_0\).
Then
\[
\log V(K)-\log V(K_0)=o_P(\rho_{NT}).
\]
\end{lemma}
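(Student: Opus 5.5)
This lemma follows directly from the second half of Assumption~\ref{ass:cluster}(5). That condition states
\[
\sup_{K_0<K\le\overline K}|\log V(K)-\log V(K_0)|=o_P(\rho_{NT}).
\]
For any fixed $K$ with $K_0<K\le\overline K$, the quantity $|\log V(K)-\log V(K_0)|$ is bounded by this supremum, so it is also $o_P(\rho_{NT})$. Since $\overline K<\infty$ is fixed, only finitely many such $K$ exist, and the bound therefore holds uniformly over them. This uniformity is exactly what the proof of Theorem~\ref{thm:3.6} needs when it compares $IC(K)$ with $IC(K_0)$ for all overfitted $K$ simultaneously.

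In line with the remark attached to Lemma~\ref{lem:B.8}, I would also add an explanation of why the condition is natural; this is not needed for the logic. On the event of Theorem~\ref{thm:3.5} and Lemma~\ref{lem:B.7}, the gap-based partition at $K=K_0$ recovers the true groups. For $K>K_0$, the additional $K-K_0$ largest gaps then fall inside true groups, so they split a block whose preliminary estimates lie within $o_P([\log(N\vee T)]^{1+\zeta})$ of a common $b_k$. The within-cluster averages $\widehat b_{k\mid K}$ of the split pieces therefore differ from those at $K_0$ by at most $o_P([\log(N\vee T)]^{1+\zeta})$.

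Next I would bound the effect of this shift on the per-subject fitted residual sum $\frac1T\sum_t\|X_{it}-\widehat\nu_{it}\|^2$. Changing the fitted break by $d$ periods changes this sum by $O_P(d/T)$, using bounded $\|\mu_i\|$ and $\|\delta_i\|$ from Assumption~\ref{ass:cluster}(3) and bounded fourth moments. Since $V(K_0)$ is bounded away from zero, taking logs gives
\[
|\log V(K)-\log V(K_0)|=O_P([\log(N\vee T)]^{1+\zeta}/T).
\]
Assumption~\ref{ass:cluster}(4) then makes this $o_P(\rho_{NT})$.

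The main obstacle in this heuristic is the $O_P(d/T)$ bound. It must hold uniformly over subjects inside the average defining $V(K)$, which calls for a maximal inequality on the partial sums over windows of length $d$. I would not pursue this further, since the lemma as stated is already implied directly by the assumption.
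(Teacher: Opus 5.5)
Your proposal is correct and matches the paper's proof, which likewise just invokes the overfitting-negligibility half of Assumption~\ref{ass:cluster}(5). Implicitly this covers $K_0<K\le\overline K$, the only range the information criterion searches. Your supplementary heuristic about split blocks and $O_P(d/T)$ changes in the residual sums goes beyond what the paper attempts, and it is not needed for the lemma.
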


\begin{proof}
This is the overfitting negligibility condition in Assumption \ref{ass:cluster}(5). It states that any additional fit obtained by splitting an already homogeneous break group is asymptotically smaller than the information-criterion penalty.
\end{proof}

\subsection{Post-Clustering Pooled Break Estimation}
\label{subsec:pooled-break-estimation}

\begin{lemma}
\label{lem:B.10}
Suppose that the conditions of Theorem \ref{thm:3.7} hold.
Fix $k\in\{1,\ldots,K_0\}$.
Then, for every $\epsilon>0$,
\[
P\!\left(
\sup_{|t-b_k|>\epsilon T}
\sum_{i\in\mathcal C(b_k)}
\int_C C_{i,T}^2(t/T,u)\,du
<
\sum_{i\in\mathcal C(b_k)}
\int_C C_{i,T}^2(b_k/T,u)\,du
\right)\to1.
\]
\end{lemma}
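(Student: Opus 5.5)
We argue on the pooled criterion $P_k(t):=\sum_{i\in\mathcal C(b_k)}\int_C C_{i,T}^2(t/T,u)\,du$ via the decomposition \eqref{eq:A.1}, $C_{i,T}=C_{i,T}^{\varepsilon}+D_{i,T}$. Write $n_k=|\mathcal C(b_k)|$. Expanding the square splits $P_k(t)$ into three pieces:
\[
P_k(t)=\mathcal D_k(t)+2\mathcal X_k(t)+\mathcal N_k(t),
\]
with
\[
\mathcal D_k(t)=\sum_{i\in\mathcal C(b_k)}\|D_{i,T}(t/T)\|_{L^2(C)}^2,\quad
\mathcal X_k(t)=\sum_{i\in\mathcal C(b_k)}\langle C^{\varepsilon}_{i,T}(t/T),D_{i,T}(t/T)\rangle,\quad
\mathcal N_k(t)=\sum_{i\in\mathcal C(b_k)}\|C^{\varepsilon}_{i,T}(t/T)\|_{L^2(C)}^2 .
\]
Since all $i\in\mathcal C(b_k)$ share the break $b_k=\lfloor c_kT\rfloor$, the computation of Lemma \ref{lem:B.3} gives $D_{i,T}(x,u)=\sqrt T\,h_{k,T}(x)\delta_i(u)$ with a common scalar $h_{k,T}$. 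Consequently
\[
\mathcal D_k(t)=T\,h_{k,T}^2(t/T)\,S_k,\qquad S_k:=\sum_{i\in\mathcal C(b_k)}\|\delta_i\|^2_{L^2(C)} .
\]
Up to $O(1/T)$, $h_{k,T}(x)$ is the tent function $x(1-c_k)$ for $x\le c_k$ and $c_k(1-x)$ for $x>c_k$. It is uniquely maximized at $c_k$, and $h^2$ has a uniform gap $\gamma(\epsilon)>0$ on $|x-c_k|>\epsilon$ because $c_k\in(0,1)$ is fixed. The deterministic margin is therefore
\[
\mathcal D_k(b_k)-\sup_{|t-b_k|>\epsilon T}\mathcal D_k(t)\ge \gamma(\epsilon)\,T S_k .
\]

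It then suffices to show that the stochastic terms are $o_P(TS_k)$ uniformly in $t$, after centering $\mathcal N_k$. For the pure-noise term I will not bound $\sup_t\mathcal N_k(t)$ itself, which is of order $n_k$. Instead I control the fluctuation $\mathcal N_k(t)-\mathcal N_k(b_k)$. Write $\mathcal N_k=E\mathcal N_k+(\mathcal N_k-E\mathcal N_k)$. By Assumption \ref{ass:regularity}(2), applied to the subpanel (it is stated for averages, and I will assume it transfers to the fixed-proportion subpanel since $d_k>0$), the centered part is $o_P(n_k)$ uniformly in $x$. The mean $E\mathcal N_k(t)$ is $O(n_k)$ uniformly by the same assumption. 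So the noise fluctuation is $O_P(n_k)$. Comparing with $TS_k$: condition \eqref{eq:3.22} gives $S_k\gg n_k^{1/2}$, and Assumption \ref{ass:cluster}(3) gives $S_k\asymp n_k$, so $TS_k\asymp Tn_k\gg n_k$. Hence this term is negligible.

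For the cross term I use Cauchy--Schwarz in the $\ell^2$ sense over $i$ and in $L^2(C)$:
\[
|\mathcal X_k(t)|\le \sqrt T|h_{k,T}(t/T)|\Bigl\|\sum_i \delta_i\otimes C^\varepsilon_{i,T}\Bigr\|\le \sqrt{T S_k}\,\mathcal N_k(t)^{1/2}.
\]
Taking the supremum, $\sup_t|\mathcal X_k(t)|=O_P(\sqrt{TS_k\,n_k})$, which is $o(TS_k)$ because $n_k=o(TS_k)$. Here the growth conditions $|\mathcal C_1|=O(T^2)$ and $T=O(|\mathcal C_1|^{3/2})$ play the role of ensuring that $n_k\to\infty$ together with $T$, so that the rates are genuinely relative. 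If a sharper argument is needed, one can compute $E\mathcal X_k(t)^2$ through the long-run covariance. This gives order $T S_k$ times operator-norm bounds, hence $\sqrt{TS_k}\ll TS_k$ by \eqref{eq:3.22}, and a uniform-in-$t$ version then follows from a chaining/maximal bound over the grid.

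Combining the three pieces, on an event of probability tending to one we get
\[
P_k(b_k)-\sup_{|t-b_k|>\epsilon T}P_k(t)\ge \gamma(\epsilon)TS_k-o_P(TS_k)>0,
\]
which is the claim. The main obstacle is the uniform control of the noise piece $\mathcal N_k(t)-\mathcal N_k(b_k)$ over $t$. This requires the cross-sectional stability of Assumption \ref{ass:regularity}(2) restricted to $\mathcal C(b_k)$, together with a uniform bound on $E\mathcal N_k$. If that transfer is problematic, I would fall back on Assumption \ref{ass:localmax}. It yields a per-subject fluctuation of $O_P(\log(N\vee T))$, hence a total of $O_P(n_k\log(N\vee T))$, which is still $o(TS_k)$ because $T/\log(N\vee T)\to\infty$ under $N=O(T^\kappa)$.
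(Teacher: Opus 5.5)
Your proof is correct once your fallback becomes the main argument. The decomposition $P_k=\mathcal D_k+2\mathcal X_k+\mathcal N_k$, the common tent profile giving the margin $\gamma(\epsilon)TS_k$, and the bound $|\mathcal X_k(t)|\le\sqrt{TS_k}\,\mathcal N_k(t)^{1/2}$ spell out what the paper's three-sentence proof asserts. (The intermediate tensor-norm expression is the wrong object: $\sum_i\langle\delta_i,C^{\varepsilon}_{i,T}\rangle$ is a trace, not controlled by a Hilbert--Schmidt norm. The final bound follows directly from Cauchy--Schwarz in $i$.)

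\textbf{The step that fails: your primary control of $\mathcal N_k$.}
\begin{itemize}
\item Assumption~\ref{ass:regularity}(2) constrains only the full-panel average $N^{-1}\sum_{i=1}^N$. Stability of that average says nothing about averages over a subpanel.
\item Moreover, $d_k$ is a proportion of $|\mathcal C_1|$, not of $N$. So $\mathcal C(b_k)$ may be a vanishing fraction of the panel, which is exactly the sparse case the paper emphasizes. The full-panel statement then gives at best $O_P(N)$ for $\sup_t\mathcal N_k(t)$, not $O_P(n_k)$.
\item Likewise, $E\mathcal N_k(t)=O(n_k)$ does not follow from that assumption. It follows from Assumption~\ref{ass:regularity}(1), which gives $\sup_{i,x}E\|C^{\varepsilon}_{i,T}(x)\|_{L^2(C)}^2<\infty$ as in Lemma~\ref{lem:B.2}.
\end{itemize}

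\textbf{The fallback that works.} Assumption~\ref{ass:localmax} is among the hypotheses, through Theorem~\ref{thm:3.6}. Because it is stated for the full $C_{i,T}$, it bounds $\sup_t|P_k(t)-EP_k(t)|$ by $n_k\,O_P(\log(N\vee T))$. This absorbs the noise fluctuation and the cross term at once. Since $EP_k(t)=\mathcal D_k(t)+E\mathcal N_k(t)$, you get
$P_k(b_k)-\sup_{|t-b_k|>\epsilon T}P_k(t)\ge\gamma(\epsilon)TS_k-Cn_k-n_k\,O_P(\log(N\vee T))$.
This is positive w.p.a.1, because $S_k\ge cn_k$ by Assumption~\ref{ass:cluster}(3) and $\log(N\vee T)=O(\log T)$ by Assumption~\ref{ass:regularity}(3). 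Alternatively, a Beveridge--Nelson decomposition plus Doob's inequality under Assumption~\ref{ass:regularity}(1) gives $E\sup_t\mathcal N_k(t)\le Cn_k$ directly. Centering $\mathcal N_k$ buys nothing: your centered bound $O_P(n_k)$ is no better than the uncentered one, and both are negligible against a margin of order $Tn_k$.

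\textbf{Where you differ from the paper.} The paper credits \eqref{eq:3.22} for the dominance of signal over noise. Your scaling shows the real comparison is a signal of order $TS_k\gtrsim Tn_k$ against noise of order $n_k\log(N\vee T)$. So the lemma follows from $T/\log(N\vee T)\to\infty$ and $\inf_{i\in\mathcal C_1}\|\delta_i\|_{L^2(C)}>0$ alone, even for bounded $n_k$. Neither \eqref{eq:3.22} nor the growth conditions on $|\mathcal C_1|$ is used, so your remark that they ``ensure $n_k\to\infty$'' can be dropped.

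Also drop the optional second-moment computation for $\mathcal X_k$. A variance of order $TS_k$ presupposes weak cross-sectional dependence, which Assumption~\ref{ass:regularity} does not impose. With strongly correlated noise within $\mathcal C(b_k)$ it can be of order $Tn_kS_k$, whereas the Cauchy--Schwarz bound needs no such structure.
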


\begin{proof}
Decompose the pooled objective into deterministic and stochastic parts.
Because all members of $\mathcal C(b_k)$ share the same break point $b_k$,
the deterministic pooled criterion is uniquely maximized at $b_k$ and grows proportionally to the clusterwise signal strength
\[
\sum_{i\in\mathcal C(b_k)}\|\delta_i\|_{L^2(C)}^2.
\]
Condition \eqref{eq:3.22} guarantees that this signal dominates the stochastic remainder uniformly outside any fixed fractional neighborhood of $b_k$.
Hence the stated strict separation holds with probability tending to one.
\end{proof}

\clearpage

\bibliographystyle{asa}
\bibliography{ref}

\end{document}